\documentclass[%
%preprint, 
%linenumbers,
superscriptaddress,
%groupedaddress,
%unsortedaddress,
%runinaddress,
%frontmatterverbose, 
%preprint,
%preprintnumbers,
nofootinbib,
%nobibnotes,
%bibnotes,
notitlepage,
amsmath,amssymb,
aps, physrev,
%pra,
%prb,
%rmp,
%prstab,
%prstper,
%floatfix,
]{revtex4-2}

\usepackage{makecell}
\usepackage[utf8]{inputenc}
\usepackage[english]{babel}
\usepackage[T1]{fontenc}
\usepackage{amsmath}
\usepackage{hyperref}
\usepackage[capitalize]{cleveref}
\usepackage{tikz}
\usepackage{lipsum}
\usepackage{amssymb}
\usepackage{amsthm}
\usepackage{enumitem}
\usepackage[labelfont=bf]{caption}

\newtheorem{theorem}{Theorem}

\newtheorem{lemma}{Lemma}

\newtheorem{remark}{Remark}
\newtheorem{problem}{Problem}
\newtheorem{assumption}{Assumption}

\crefname{theorem}{Theorem}{Theorems}
\Crefname{theorem}{Theorem}{Theorems}

\crefname{definition}{Definition}{Definitions}
\Crefname{definition}{Definition}{Definitions}

\crefname{lemma}{Lemma}{Lemmas}
\Crefname{lemma}{Lemma}{Lemmas}

\crefname{corollary}{Corollary}{Corollaries}
\Crefname{corollary}{Corollary}{Corollaries}

\crefname{proposition}{Proposition}{Propositions}
\Crefname{proposition}{Proposition}{Propositions}

\crefname{remark}{Remark}{Remarks}
\Crefname{remark}{Remark}{Remarks}

\crefname{problem}{Problem}{Problems}
\Crefname{problem}{Problem}{Problems}

\crefname{assumption}{Assumption}{Assumptions}
\Crefname{assumption}{Assumption}{Assumptions}

\crefname{algocf}{Algorithm}{Algorithms}
\Crefname{algocf}{Algorithm}{Algorithms}
\crefname{algocfline}{Algorithm}{Algorithms}
\Crefname{algocfline}{Algorithm}{Algorithms}

\usepackage{subcaption} 
\usepackage[ruled,vlined,linesnumbered,lined]{algorithm2e}
\usepackage[braket, qm]{qcircuit}
\usepackage{graphicx}
\usepackage{tikz,lipsum,lmodern}

\usepackage{xcolor}
\definecolor{mycyan}{RGB}{0,153,153}
\definecolor{myred}{RGB}{205,46,46}
\definecolor{purplegray}{RGB}{164,67,255}
\definecolor{darkgreen}{RGB}{0,100,0}
\definecolor{mydarkgreen}{RGB}{0,100,0}
\usepackage[normalem]{ulem}
\usepackage{booktabs,tabularx}

\newcommand{\ZAB}{\textcolor{red}{\mathbf{1}}}
\newcommand{\ZBC}{\textcolor{mydarkgreen}{\mathbf{1}}}
\newcommand{\ZCA}{\textcolor{blue}{\mathbf{1}}}

\begin{document}

%Title of paper
\title{Interpolation-based coordinate descent method for parameterized quantum circuits}

\author{Zhijian Lai}
\email{lai\_zhijian@pku.edu.cn}
\homepage{https://galvinlai.github.io}
\affiliation{Beijing International Center for Mathematical Research, Peking University, Beijing, China}

\author{Jiang Hu}
\email{hujiangopt@gmail.com}
\homepage{https://hujiangpku.github.io}
\affiliation{Yau Mathematical Sciences Center, Tsinghua University, Beijing, China}

\author{Taehee Ko}
\email{kthmomo@kias.re.kr}
\homepage{https://sites.google.com/view/taeheeko}
\affiliation{School of Computational Sciences, Korea Institute for Advanced Study, Seoul, South Korea}

\author{Jiayuan Wu}
\email{jyuanw@wharton.upenn.edu}
\affiliation{Wharton Department of Statistics and Data Science, University of Pennsylvania, Philadelphia, PA, USA}

\author{Dong An}
\email{dongan@pku.edu.cn}
\homepage{https://dong-an.github.io}
\affiliation{Beijing International Center for Mathematical Research, Peking University, Beijing, China}

\date{\today}
\begin{abstract}
Parameterized quantum circuits (PQCs) are ubiquitous in the design of hybrid quantum-classical algorithms. In this work, we propose an interpolation-based coordinate descent (ICD) method to address the parameter optimization problem in PQCs. The ICD method provides a unified framework for existing structure optimization techniques such as Rotosolve, sequential minimal optimization, ExcitationSolve, and others. ICD employs interpolation to approximate the PQC cost function, effectively recovering its underlying trigonometric structure, and then performs an argmin update on a single parameter in each iteration. In contrast to previous studies on structure optimization, we determine the optimal interpolation nodes to mitigate statistical errors arising from quantum measurements. Moreover, in the common case of $r$ equidistant frequencies, we show that the optimal interpolation nodes are equidistant nodes with spacing $2\pi/(2r+1)$ (under constant variance assumption), and that our ICD method simultaneously minimizes the mean squared error, the condition number of the interpolation matrix, and the average variance of the approximated cost function. We perform numerical simulations and test on the MaxCut problem, the transverse field Ising model, and the XXZ model. Numerical results imply that our ICD method is more efficient than the commonly used gradient descent and random coordinate descent method. 
\end{abstract}

%\maketitle must follow title, authors, abstract, and keywords
\maketitle

\section{Introduction}

Parameterized quantum circuits (PQCs) are central to a wide range of hybrid quantum-classical algorithms, including variational quantum algorithms (VQAs) and quantum machine learning (QML) models.
VQAs have found applications across diverse fields: the variational quantum eigensolver (VQE) has been used to determine ground state energies of molecular systems and to simulate quantum dynamics \cite{peruzzo2014variational, kandala2017hardware,grimsley2019adaptive,Shang2023Schrodinger,Yuan2019theory,McArdle2019variational,Endo2020variational}, while the quantum approximate optimization algorithm (QAOA) shows considerable promise for tackling combinatorial optimization problems \cite{farhi2014quantum,zhou2020quantum,blekos2024review}.
Meanwhile, QML has been developed for a wide range of tasks, including classification, regression, and generative modeling \cite{schuld2019quantum,schuld2020circuit, perdomo2018opportunities,killoran2019continuous,benedetti2019parameterized}.
A prominent subclass of QML models is quantum neural networks (QNNs), which are hybrid architectures that encode classical input data into quantum states or gate spaces and utilize PQCs to learn target functions.
Numerous studies \cite{perez2020data,yu2022power,havlivcek2019supervised,abbas2021power} have shown that QNNs possess strong expressive power, capable of approximating arbitrary functions.

In all these approaches, the quantum circuit is parametrized by a set of classical variables. After executing the circuit and measuring its output on quantum hardware, one evaluates a cost function that reflects the current performance. A classical optimizer then updates the parameters iteratively to minimize this cost and improve the result. 
Specifically, in this work,  we consider a $q$-qubit system with $N:=2^q$. 
Without loss of generality, finding the optimal parameters of a PQC ultimately reduces to solving the following unconstrained optimization problem:
\begin{equation}\label{eq:cost-function}
    \min _ {\boldsymbol{\theta} \in \mathbb{R}^m} f(\boldsymbol{\theta})=\langle \psi_0|U(\boldsymbol{\theta})^{\dagger} M U(\boldsymbol{\theta})|\psi_0\rangle.
\end{equation}
Here,  $U(\boldsymbol{\theta}) \in \mathbb{C}^{N \times N}$ is a PQC that depends on a set of classical parameters $\boldsymbol{\theta}=\left[ \theta_1, \ldots, \theta_m\right]^{\dagger} \in \mathbb{R}^m$. Typically, the circuit $U(\boldsymbol{\theta})$ is applied to a fixed and easily prepared initial state $|\psi_0\rangle \in \mathbb{C}^{N}$, yielding the output state $U(\boldsymbol{\theta})|\psi_0\rangle $ in a quantum device. In the context of quantum mechanics, $f(\boldsymbol{\theta})$ is precisely the expectation value of the Hermitian observable $M \in \mathbb{C}^{N \times N}$, measured with respect to that output state. As in many studies \cite{sweke2020stochastic,wierichs2022general,mari2021estimating,ding2024random}, we consider the typical PQC structure as 
\begin{equation}\label{eq-Utheta}
    U(\boldsymbol{\theta})=V_m U_m\left(\theta_m\right) \cdots V_1 U_1\left(\theta_1\right),
\end{equation}
where $V_j$ are fixed arbitrary gates, and $U_j\left(\theta_j\right)$ are rotation-like gates, defined as
\begin{equation}\label{eq-eiHtheta}
    U_j\left(\theta_j\right)=e^{i H_j \theta_j }, \quad j = 1,\ldots,m,
\end{equation}
for some Hermitian generators $H_j \in \mathbb{C}^{N \times N}$. Notice that each $U_j$ is a single-parameter gate and fully captures the dependence on univariate $\theta_j \in \mathbb{R}$. 

% In the paradigm of PQCs, the classical optimization techniques play a crucial role.

\subsection{Optimization methods}

For optimizing the parameters in PQCs, the main cost lies in the evaluation of the cost function, namely, the number of calling of $\boldsymbol{\theta}\mapsto f(\boldsymbol{\theta})$ under different $\boldsymbol{\theta}$. This process is the only part of PQCs that relies on a quantum device. Effective optimization techniques can achieve faster reductions of the cost values with fewer function evaluations, thereby improving the efficiency of the whole PQCs. 
Thus, this paper primarily focuses on the classical algorithmic approach to solve \cref{eq:cost-function}. 

Here, we consider three classes of optimization techniques: derivative-free methods, gradient-based methods, and structure optimization methods.

\paragraph{Derivative-free methods.}
Derivative-free methods, such as COBYLA \cite{powell1994direct}, Nelder-Mead \cite{nelder1965simplex}, Powell \cite{powell1964efficient} and SPSA \cite{spall2000adaptive}, update parameters by directly searching or applying random perturbations in parameter space, thereby obviating the need for explicit gradient information. Although derivative-free methods are simple to implement in practice, empirical studies \cite{pellow2021comparison,lockwood2022empirical} have shown that gradient-based methods outperform them when only sampling noise is present.

\paragraph{Gradient-based methods}
Gradient-based methods obtain parameter gradients via the parameter shift rule (PSR) or finite-difference (FD) approximations, then employ advanced optimizers, such as BFGS \cite{nocedal2006numerical}, L-BFGS \cite{byrd1995limited}, Adam \cite{kingma2014adam}, AMSGrad \cite{reddi2019convergence} and quantum natural gradient \cite{stokes2020quantum}, to accelerate convergence. These methods offer convergence guarantees and excel in moderately noisy environments \cite{pellow2021comparison,lockwood2022empirical}. 
The well-known PSR technique \cite{crooks2019gradients,mari2021estimating,wierichs2022general,kyriienko2021generalized,hai2023lagrange,markovich2024parameter,hoch2025variational} gives the exact estimation of derivatives by evaluating the cost function in \cref{eq:cost-function} at a finite number of shifted parameter positions and combining those results linearly. This unbiased derivative estimation approach provides a solid foundation for various gradient-based techniques. See \cref{app-psr} for a review of PSR. Typically, derivatives are computed using PSR rather than FD. A comparison of PSR's advantages over FD also can be found in \cref{app-psr}. 

For the sake of comparison with our proposed algorithm, we focus here on the two canonical gradient-based methods: stochastic gradient descent (SGD)\footnote{Since our gradients can only be unbiased estimators rather than exact values, the simplest form of gradient descent we can use is stochastic gradient descent.} \cite{sweke2020stochastic} and random coordinate descent (RCD) \cite{ding2024random}. Specifically, the SGD requires the full gradient $\nabla f(\boldsymbol{\theta})$ at each iteration, followed by an update to all parameters in the direction of $-\nabla f(\boldsymbol{\theta})$, scaled by a constant learning rate. RCD, on the other hand, randomly selects a single coordinate $j$ at each iteration, computes the partial derivative $\partial_{j} f\left(\boldsymbol{\theta}\right)$, and updates only that coordinate by $-\partial_{j} f\left(\boldsymbol{\theta}\right)$, scaled by a constant. 
    % \item Surrogate model methods first use a limited number of quantum-circuit evaluations to train a classical surrogate (e.g., a Gaussian process). Parameter optimization is then performed on this low-cost model, reducing the overhead of quantum measurements. 

\paragraph{Structure optimization methods.}

In recent years, structure optimization strategies have attracted increasing interest in the training of PQCs, with one of the most prominent methods being Rotosolve \cite{ostaszewski2021structure}, which has been implemented in several open-source frameworks, including PennyLane \cite{pennylaneRotosolve} and TensorFlow Quantum \cite{tfqRotosolve}.
Rotosolve models the cost function associated with each parameter as a simple sinusoidal function, $\theta_j \mapsto f (\boldsymbol{\theta}) = A \sin (\theta_j + B) + C$, where $A$, $B$, and $C$ are unknown coefficients. It identifies the three coefficients using three function evaluations, then achieving a global minimization update for that parameter. This idea closely parallels earlier works such as Algorithm 1 in \cite{vidal2018calculus} and Jacobi+Anderson \cite{parrish2019jacobi}, albeit expressed in different terminologies. Around the same time, \cite{nakanishi2020sequential} referred to this approach as sequential minimal optimization (SMO) and demonstrated its equivalence to Rotosolve.
More recently, \cite{jager2024fast} introduced ExcitationSolve, an optimizer that can be viewed as a generalization of Rotosolve to excitation operator $(H^3 = H)$ based ansatz, particularly well suited to physically motivated UCC-type circuits. 
In a word, the core idea of Rotosolve has been independently articulated and named by multiple research groups to address various types of PQC optimization tasks.
Numerical experiments in these studies \cite{parrish2019jacobi,nakanishi2020sequential,ostaszewski2021structure,jager2024fast} have shown that, compared to both derivative-free and gradient-based methods, structure optimization approaches can more efficiently find optimal parameters under limited quantum resources (see the Baseline algorithms row in \cref{tab:comparison}).

In this work, we observe that all existing structure optimization methods share a common underlying principle: selecting a single parameter to update, reconstructing the cost function using interpolation, and then performing a global optimization on a classical computer. We refer to this general framework as the Interpolation-based Coordinate Descent (ICD) method. ICD can be regarded as the general extension of structure optimization techniques. In \cref{tab:comparison}, we summarize known structure optimization methods and show how their characteristics can be unified under the ICD framework.

\begin{table}[htbp]
\centering
\renewcommand{\arraystretch}{1.3}
\setlength{\tabcolsep}{6pt}
\begin{tabularx}{\textwidth}{
  @{} >{\raggedright\arraybackslash}X
  @{\extracolsep{1pt}}
  >{\centering\arraybackslash}X
  >{\centering\arraybackslash}X
  >{\centering\arraybackslash}X
  >{\centering\arraybackslash}X
  >{\centering\arraybackslash}X
  >{\centering\arraybackslash}X
  @{}
}
\toprule
Methods
& Algorithm 1 in \cite{vidal2018calculus} 
& Jacobi+Anderson \cite{parrish2019jacobi} 
& SMO \cite{nakanishi2020sequential} 
& Rotosolve \cite{ostaszewski2021structure} 
& ExcitationSolve \cite{jager2024fast} 
& ICD (this work) \\
\midrule
Gate generator type 
  & Eigenvalues of $H$ are integers $\{k_j\}_{j}$ 
  & $H^2=I$ & $H^2=I$ & $H^2=I$ & $H^3=H$ 
  & General Hermitian $H$ \\
- Frequency set
  & {\footnotesize $D=\{| k_i-k_j|>0\}$} 
  & $\{2\}$ & $\{2\}$ & $\{2\}$ & $\{1,2\}$ 
  & \cref{eq-freq-set}\\
- Equispaced? 
  & Yes or No & Yes & Yes & Yes & Yes 
  & Yes or No  \\
- Number of Fourier coeffs.
  & $2 |D|+1$ & 3 & 3 & 3 & 5 
  & $2r+1$ \\
\midrule
Interpolation nodes spacing 
  & Arbitrary or $\tfrac{2\pi}{2|D|+1}$ 
  & $\tfrac{\pi}{3}$ & $\tfrac{\pi}{2}$ & $\tfrac{\pi}{2}$ & $\tfrac{2\pi}{5}$ 
  & $\tfrac{2\pi}{2r+1}$ or solve \cref{pro-min-mse} \\
- Optimal?
  & Maybe not & No & No & No & Yes 
  & Yes \\
\midrule
Subproblem solution 
  & Arbitrary & Closed form & Closed form & Closed form & Eigenvalue method 
  & Arbitrary or Eigenvalue method \\
Reuse previous iteration? 
  & No & No & Yes & No & Yes 
  & No (\cref{alg-standard-ICD}) or Yes (\cref{alg-reduced-ICD})\\
Multivariable version?
  & No & Yes & Yes & No & Yes 
  & No \\
- Number of Fourier coeffs.
  & - & $3^K$ & $3^K$ & - & $5^K$ 
  & - \\
\midrule
Baseline algorithms
  & No experiments & Powell; L-BFGS & Nelder-Mead, Powell, SPSA; BFGS, CG 
  & SPSA; Adam & COBYLA, SPSA; Adam, BFGS, SGD 
  & RCD, SGD \\
\bottomrule
\end{tabularx}
\caption{
Comparison of various structure optimization methods under the ICD framework.
\textit{Frequency set} $\{2\}$ corresponds to the $\tfrac{1}{2}$-scaled version of $\{1\}$, due to the use of $e^{i \theta H / 2}$ in those studies. \textit{Optimal?} indicates whether the interpolation nodes correspond to the optimal spacings described in \cref{thm-first-veiw,thm-second-veiw,thm-third-veiw} later. \textit{Subproblem solution} describes the method used to solve the single parameter argmin update. \textit{Reuse previous iteration?} denotes whether the first interpolation node reuses results from the previous iteration --- if yes, it corresponds to our reduced ICD of \cref{alg-reduced-ICD} in \cref{app-pratical-ICD}; if not, it is the standard ICD of \cref{alg-standard-ICD}. \textit{Multivariable version?} indicates whether the interpolation approach is extended to recover the multivariate function over $K \geq 2$ parameters, though this comes at a cost of exponential scaling in $K$. \textit{Baseline algorithms} lists the methods used for comparison in their experimental evaluation.
}
\label{tab:comparison}
\end{table}

\subsection{Overview of ICD framework} 

Now, we briefly outline the main procedure and key features of the ICD method.
Similar to RCD, our ICD method randomly selects and updates one parameter at each iteration. 
However, ICD update strategy is based on the following observation: according to \cite{wierichs2022general}, the dependence of cost function in \cref{eq:cost-function} on single parameter, say $\theta_j$, can be expressed as a finite Fourier series, represented as a linear combination of sine and cosine functions as 
\begin{equation}\label{eq-653}
\theta_j \mapsto  f\left(\boldsymbol{\theta}\right)=\frac{1}{\sqrt{2}} a_0+\sum_{k=1}^{r_j}\left[a_k \cos \left(\Omega_k^j \theta_j\right)+b_k \sin \left(\Omega_k^j \theta_j\right)\right], \quad \text{fix other $m-1$ parameters},
\end{equation}
where $a_0, a_k$ and $b_k$ are some unknown coefficients, and constants $r_j$ and $\{\Omega_k^j\}_{k=1}^{r_j}$ are fully determined by $H_j$ corresponding to $\theta_j$. We will give detail discussion for above expression in \cref{sec_obs2}. 
In fact, as mentioned in several studies \cite{nemkov2023fourier,fontana2022efficient,okumura2023fourier,stkechly2023connecting}, the PQC cost function is essentially a truncated multivariate Fourier series. In its complex form, it can be expressed as
$f(\boldsymbol{\theta})=\sum_{\boldsymbol{k} \in \mathbb{Z}^m,\left|k_j\right| \leq r_j} c_{\boldsymbol{k}} e^{i\boldsymbol{k} \cdot \boldsymbol{\theta}}$.
Thus, when considering only a single variable and applying Euler formula, it can be rewritten in the form of \cref{eq-653}. However, existing literature has not explored how to leverage this property from an optimization perspective.

We now employ the interpolation method to recover the true Fourier coefficients $a_0$, $a_k$, and $b_k$ with the greatest possible accuracy.
Once these estimated coefficients are obtained, existing solvers on classical computers can be employed to minimize the function value with respect to the selected $\theta_j$. When $\Omega_k^j = k$, the so-called eigenvalue method stated in \cref{app-eig-method} can solve this one-dimensional minimization problem globally and exactly. The overall ICD process is shown in \cref{fig:process_oicd}.

\begin{figure}[htbp]
  \begin{subfigure}{\textwidth}
    \centering
    \includegraphics[width=0.9\linewidth]{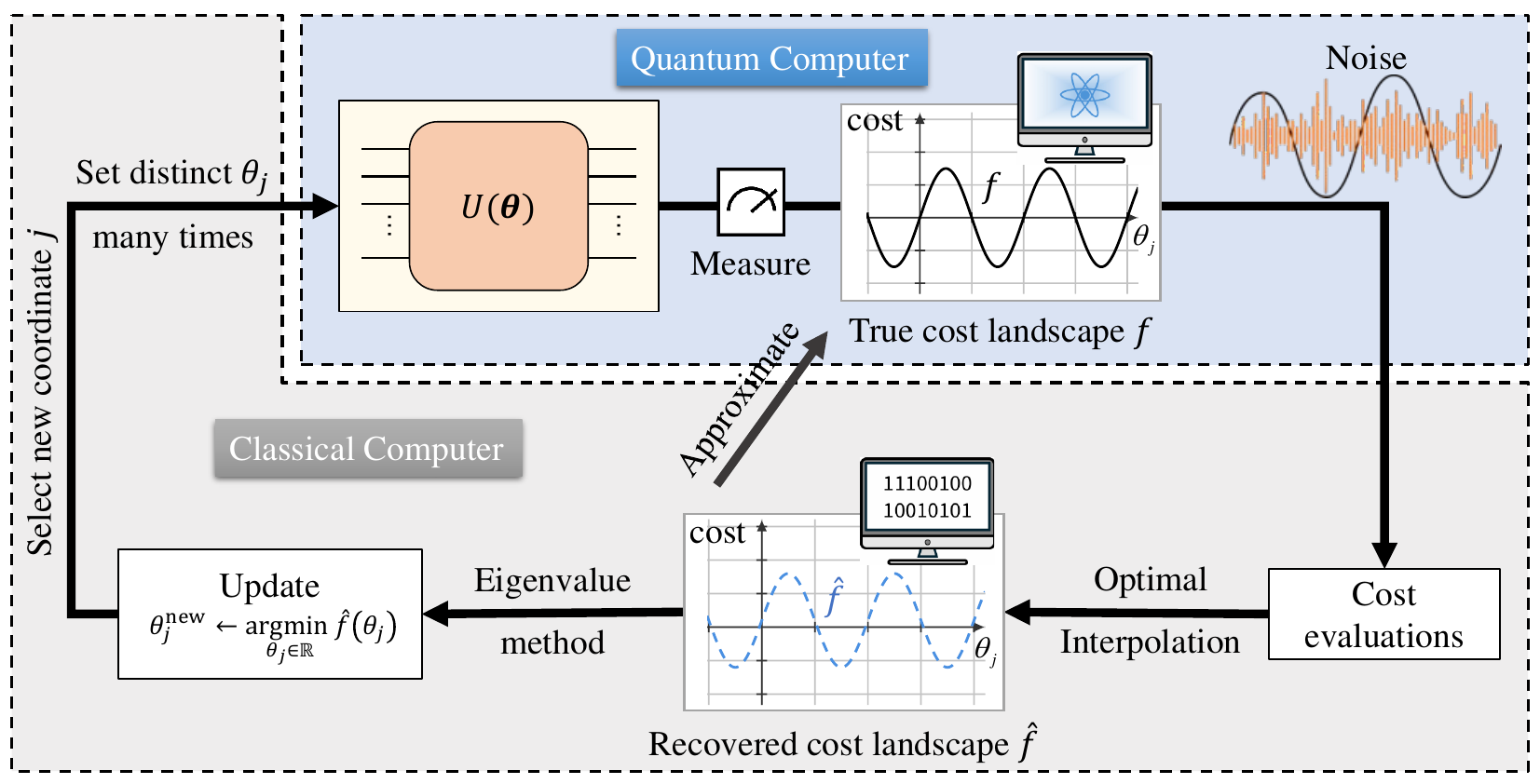} 
    \caption{}
    % \label{fig:sub1}
  \end{subfigure}
  
  \vspace{0.5cm}
  
  \begin{subfigure}{\textwidth}
    \includegraphics[width=0.85\linewidth]{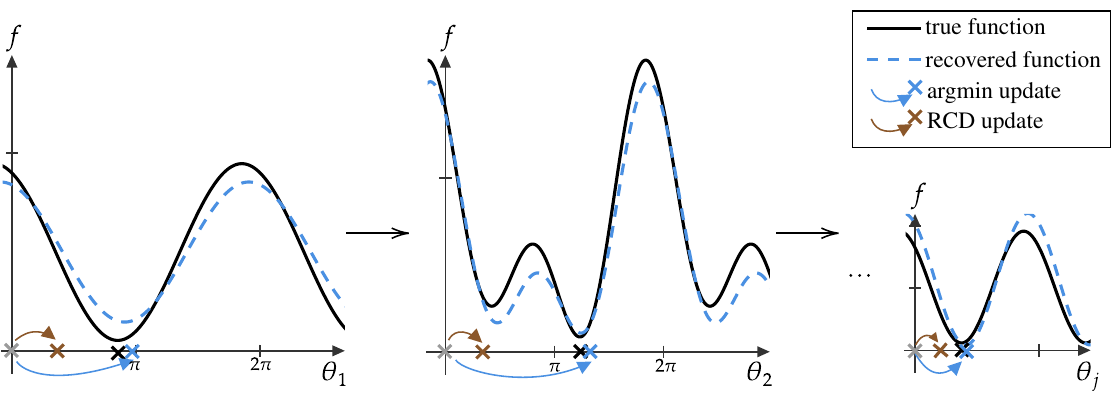}
    \caption{}
    % \label{fig:sub2}
  \end{subfigure}
  
  \caption{
    (a) A diagram illustrating the ICD algorithm workflow. (b) Suppose we first update $\theta_1$, then update $\theta_2$, and so on. 
    We move the current value of $\theta_j$ (gray cross) to the origin. The black solid line represents the true curve of $f$ with respect to $\theta_j$, and we aim to find the true minimum (black cross). 
    % However, in PQCs, this black line is inherently non-usable because $f$ is an expectation.
    By using an interpolation method under noisy conditions, we obtain relatively accurate estimates of $a_0$, $a_k$, and $b_k$ in \cref{eq-653}. 
    Using these estimated coefficients, we recover an approximate function (blue dashed line). This approximate function can be used in place of the original cost function, and its value at any point can be evaluated using a classical computer.
    In each update step, ICD finds the global minimum of the approximate function (blue cross), i.e., takes the argmin, which results in a significantly larger descent compared to the RCD method using one-step update (brown cross).
    }
    \label{fig:process_oicd}
\end{figure}

The interpolation method in our ICD has advantages similar to those of parameter shift rule (PSR), as it only requires finite function evaluations at some positions (called \textit{interpolation nodes}) to reconstruct the original true function, without the need for an additional ansatz. However, since the cost function is an expectation value, its exact values are generally unavailable, i.e., the function evaluations are inherently noisy due to at least the statistical errors. 
To this end, in our ICD method, we precompute an optimal set of interpolation nodes to minimize the impact of noise for each $\theta_j$. 
These optimal interpolation nodes are calculated only once and can be reused across subsequent iterations. The effects of different interpolation nodes are shown in \cref{fig:diff_nodes}. 
A detailed explanation will be provided in \cref{sec-ICD}.

\begin{figure}[htbp]
    % \centering
    \includegraphics[width=0.9\linewidth]{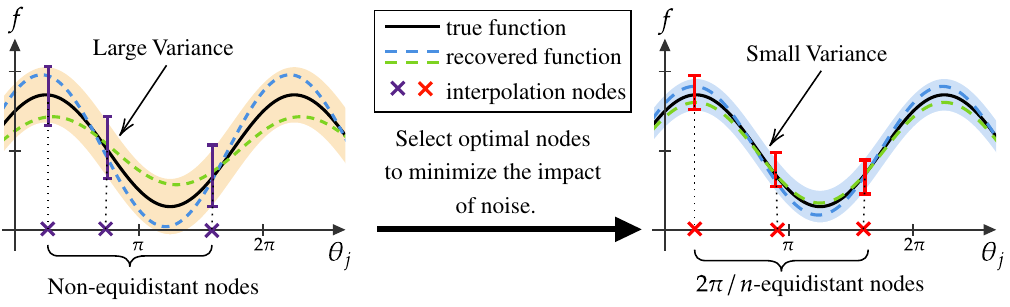}
    \caption{Variance between the approximate functions (recovered from different interpolation nodes) and the true function varies. Suppose we consider $\theta_j$ and theoretical true curve (black solid line) is $\theta_j \mapsto f(\boldsymbol{\theta}) = a_0 + a_1 \cos (\theta_j) + b_1 \sin (\theta_j).$
    To recover $n = 3$ values ($a_0$, $a_1$, and $b_1$), we simply select $n$ different nodes, evaluate their corresponding $f$ values, and solve a linear equation. The details of the interpolations will be given in \cref{sec-ICD}.
    Since the $f$ values always contain noise, the recovered function only approximate the true function within a certain range. It can be proven that for any positive integer $r_j$ and $\Omega_k^j = k$, the equidistant nodes with spacing $2 \pi /(2 r_j+1)$ are the optimal interpolation nodes, as they yield the closest approximation to the true function. }
    \label{fig:diff_nodes}
\end{figure}

\cref{tab:my_table} summarizes the comparison of the number of function evaluations $N_{\text {eval}}$ required per iteration for our ICD, as well as for SGD and RCD. 
ICD can be divided into two variants: standard and reduced. The only difference is that reduced ICD\footnote{For better readability, we include the reduced ICD in \cref{app-pratical-ICD}.} reuses the result from the previous iteration, thereby saving \textit{one} function evaluation and matching the computational cost of RCD. As shown in \cref{tab:comparison}, SMO and ExcitationSolve correspond to reduced ICD, while all other methods are the standard ICDs. \cref{tab:my_table} indicates that ICD and RCD require nearly the same quantum resources; however, ICD is an interpolation-based method, whereas RCD is a gradient-based method.

\begin{table}[htbp]
\centering
\renewcommand{\arraystretch}{1.3}
\setlength{\tabcolsep}{6pt}
\begin{tabularx}{\textwidth}{>{\raggedright\arraybackslash}X *{4}{>{\centering\arraybackslash}X}}
\toprule
Methods & SGD & RCD & standard ICD (\cref{sec-standard-ICD}) & reduced ICD (\cref{app-pratical-ICD}) \\
\midrule
Number of circuit evaluations $N_{\text{eval}}$ 
& $2 \|\boldsymbol{r}\|_1$ 
& $2 r_j$ 
& $2 r_j + 1$ 
& $2 r_j$ \\
\bottomrule
\end{tabularx}
\caption{Number of distinct circuit evaluations $N_{\text{eval}}$ per single update. Here, $\|\boldsymbol{r}\|_1 = \sum_{j=1}^m r_j$, and the integers $r_j$ are the numbers of terms in the trigonometric expansion of the cost function in \cref{eq-653}, to be formally introduced in \cref{sec_obs2}. Note that we use the PSR instead of finite differences to compute derivatives for SGD and RCD; see \cref{app-psr-fd} for more details.}
\label{tab:my_table}
\end{table}

\subsection{Contribution} 
The main contributions of this work can be summarized as follows. 
\begin{enumerate}
    \item We propose an interpolation-based coordinate descent (ICD) method to address the parameter optimization problem in PQCs. ICD integrates all the existing structural optimization methods into a unified framework. By incorporating interpolation techniques, the ICD method significantly reduces reliance on quantum devices, thereby enhancing computational efficiency. This is because, in ICD, the evaluated function values are not directly used for computing gradients/partial derivatives but rather to reconstruct the global landscape of the cost function on a classical computer as accurately as possible. In contrast, gradient-based methods like SGD and RCD use the evaluated function values directly to compute the derivatives at the current parameter, and the derivatives can only be used for a single update (requiring re-evaluation for subsequent updates). In ICD, however, the reconstructed function can be used for multiple updates using any optimization solvers, without additional quantum device operations. In the numerical simulation, we test the MaxCut problem, the transverse field Ising model (TFIM), and the XXZ problem, demonstrating that ICD is more effective than RCD and SGD. 
 
    \item For the case of $r$ equidistant frequencies (which is most common in PQCs, particularly when $H_j$ in \cref{eq-eiHtheta} are Pauli words, or sum of commuting Pauli words), we have shown that using $\frac{2\pi}{2r+1}$-equidistant interpolation nodes is an optimal scheme (under constant variance assumption). This specific scheme simultaneously satisfies the following three criteria: (1) minimization of the mean squared error between estimated Fourier coefficients and true coefficients, (2) minimization of the condition number of the interpolation matrix, and  (3) minimization of the average variance of the estimated derivatives. Moreover, we find that the subproblem of optimizing a single parameter in each iteration can be exactly solved by eigenvalue method proposed in \cite{boyd2006computing}. As shown in \cref{tab:comparison}, all existing structure optimization methods except \cite{jager2024fast} do not employ the optimal interpolation spacing. In our numerical experiments, we identify for the first time a relationship between node spacing and noise robustness: placing interpolation nodes at or near the optimal spacing enables ICD to retain convergence even under low shot counts.
\end{enumerate}

ICD also has its shortcomings. For the optimization task of PQCs, the most challenging obstacle is the barren plateau \cite{mcclean2018bp}. Unfortunately, ICD is unable to overcome this issue, despite not relying on gradient information. We will discuss these limitations in \cref{sec-r2-xxz-bp} later.

\subsection{Organization} 
This paper is organized as follows. 
In \cref{sec-resate}, we reformulate the cost function within the framework of optimization theory, offering a clear mathematical interpretation. 
In \cref{sec-ICD}, we propose our ICD method for the general case where Hermitian $H_j$ in the PQC can be arbitrary. 
In \cref{sec-equ-opt}, we discuss the equidistant frequency case, which is the most common in practical applications, and demonstrate further theoretical advantages of our proposed ICD method. 
In \cref{sec-experiments}, we discuss our numerical experiments.
We conclude the paper in \cref{sec-discussion} with a summary of our work and potential future directions. 

\subsection{Notations}
The superscript ${}^{\dagger}$ denotes the transpose for real matrices/vectors and the complex conjugate transpose for complex matrices/vectors.
We use $\operatorname{VAR}[\cdot]$ to denote the covariance matrix of a random vector, $\operatorname{Var}[\cdot]$ to represent the variance of a random variable, and $\operatorname{Cov}[\cdot , \cdot]$ to indicate the covariance between two random variables.

\section{Restating the problem} \label{sec-resate}

In this section, we restate \cref{eq:cost-function} from the perspective of optimization theory, providing its physical background in a precise mathematical context.

\subsection{Observation I: statistical nature from quantum measurement postulate}\label{sec_obs1}

According to the quantum measurement postulate \cite{nielsen2010quantum}, the cost function in \cref{eq:cost-function} is the expected value of a discrete random variable $\Lambda$. For an observable corresponding to a Hermitian operator $M \in \mathbb{C}^{N \times N}$ with spectral decomposition $M = \sum_{m=1}^{N} \lambda_m P_m$, the probability of obtaining eigenvalue $\lambda_m$ when measuring the state $|\psi(\boldsymbol{\theta})\rangle = U(\boldsymbol{\theta})|\psi_0\rangle$ is given by $p_{\boldsymbol{\theta}}(m) = \langle \psi(\boldsymbol{\theta}) | P_m | \psi(\boldsymbol{\theta}) \rangle \geq 0,$ and the expectation of $\Lambda$ is
\begin{equation}
    \operatorname{E}_{\Lambda \sim p_{\boldsymbol{\theta}}}[\Lambda] = \langle \psi(\boldsymbol{\theta}) | M | \psi(\boldsymbol{\theta}) \rangle = f(\boldsymbol{\theta}).
\end{equation}
The variance of $\Lambda$ is
\begin{equation}\label{eq-4280}
    \operatorname{Var}_{\Lambda \sim p_{\boldsymbol{\theta}}}[\Lambda] = \langle \psi(\boldsymbol{\theta}) | M^2 | \psi(\boldsymbol{\theta}) \rangle - [f(\boldsymbol{\theta})]^2 =: \sigma^2(\boldsymbol{\theta}).
\end{equation}
To estimate $f(\boldsymbol{\theta})$, we perform $\mathfrak{n}$ identical measurements (shots) and compute the sample mean $\bar{\Lambda} = \frac{1}{\mathfrak{n}} \sum_{i=1}^{\mathfrak{n}} \Lambda_i,$ where the $\Lambda_i$'s are i.i.d. samples from $p_{\boldsymbol{\theta}}$. We then have
\begin{equation}\label{eq-var-single}
    \operatorname{E}_{\Lambda \sim p_{\boldsymbol{\theta}}}[\bar{\Lambda}] = f(\boldsymbol{\theta}), \quad
    \operatorname{Var}_{\Lambda \sim p_{\boldsymbol{\theta}}}[\bar{\Lambda}] = \frac{\sigma^2(\boldsymbol{\theta})}{\mathfrak{n}}.
\end{equation}
By the central limit theorem, for large $\mathfrak{n}$, the sample mean $\bar{\Lambda}$ approximates a Gaussian distribution $\bar{\Lambda} \sim \mathcal{N} \left(f(\boldsymbol{\theta}), \frac{\sigma^2(\boldsymbol{\theta})}{\mathfrak{n}}\right).$
Thus, each evaluation of $f(\boldsymbol{\theta})$ is subject to zero-mean Gaussian noise:
\begin{equation}
    \tilde{f}(\boldsymbol{\theta}) = f(\boldsymbol{\theta}) + \mathcal{N}\left(0, \frac{\sigma^2(\boldsymbol{\theta})}{\mathfrak{n}}\right).
\end{equation}
This noise arises from the statistical nature of quantum measurements. In real quantum systems, there are also various hardware-induced noise sources (e.g., decoherence, gate errors). For unbiased noises, according to the central limit theorem again, their effect in the sample mean can also be approximated as Gaussian, and thus can be incorporated into the same mathematical framework. Note that we assume unbiased noise; handling biased noise would necessitate extending the framework to account for systematic errors.

\subsection{Observation II: trigonometric representation from quantum circuit structures}\label{sec_obs2}

Another feature of the cost function $f(\boldsymbol{\theta})$ is that it can be expressed as a trigonometric polynomial, which is the key for designing our ICD method. 
Consider a parameter vector $\boldsymbol{\theta} \in \mathbb{R}^m$, where all entries are fixed except for $\theta_j \in \mathbb{R}$ ($j = 1, \dots, m$). 
When we optimize a single variable $\theta_j$, the operations unrelated to $\theta_j$ can be absorbed into the input state and the observable. This leads to the following univariate cost function, 
\begin{equation}\label{eq-f-thetaj}
    \theta_j \mapsto f\left(\theta_j\right)
    =\langle\psi|U_j(\theta_j)^{\dagger} M^{\prime} U_j(\theta_j)| \psi\rangle,
\end{equation}
where $|\psi\rangle:= V_{j-1} U_{j-1}\left(\theta_{j-1}\right) \cdots V_1 U_1\left(\theta_1\right) |\psi_0\rangle$ is the state prepared by the subcircuit preceding $U_j\left(\theta_j\right)$, and $M^{\prime}:=V_j^{\dagger} \cdots U_m\left(\theta_m\right)^{\dagger} V_m^{\dagger} M V_m U_m\left(\theta_m\right) \cdots V_j$ includes the subcircuit following $U_j\left(\theta_j\right)$. Throughout the paper, we refer to $\theta_j \mapsto f\left(\theta_j\right)$ as the \textit{restricted univariate function} of $f(\boldsymbol{\theta})$.
For notation convenience, we simply write it as $f\left(\theta_j\right)$, and we can distinguish it from the original multivariate function by the argument, whether $\theta_j$ or $\boldsymbol{\theta}$. 

For a fixed index $j = 1, \dots, m$, let the eigenvalues of $H_j$ in \cref{eq-eiHtheta} be denoted by $\{\lambda_l^{j}\}_{l=1}^{N}$, and define the set of all unique positive differences between these eigenvalues, referred to as the \textit{frequencies}, to be\footnote{The definition of $\Omega_k$ given here is the loosest upper bound; in practice, $\Omega_k$ is often highly sparse, as discussed in \cref{app-trig,app-sparse-frequency}.}
\begin{equation}\label{eq-freq-set}
    \{\Omega_{k}^{j}\}_{k  =1}^{r_j} :=\{ | \lambda_l^{j} -\lambda_{l^{\prime}}^{j}| > 0 \mid \forall l, l^{\prime} =1,\ldots,N\},
\end{equation}
where $r_j := |\{\Omega^{j}_{k}\}|$. Here, the frequencies $\{\Omega_{k}^{j}\}_{k =1}^{r_j}$ are re-indexed in ascending order. According to \cite{wierichs2022general}, the restricted univariate function in \cref{eq-f-thetaj} can be expressed as a trigonometric polynomial (a finite-term Fourier series) as 
\begin{equation}\label{eq-trig-thetaj}
    f\left(\theta_j\right)
    =\frac{1}{\sqrt{2}} a_0+\sum_{k =1}^{r_j} \left[ a_{k} \cos \left(\Omega_{k}^{j} \theta_j \right)+b_{k} \sin \left(\Omega_{k}^{j} \theta_j \right) \right],
\end{equation}
where $a_0, a_k$ and $b_k$ are some real coefficients. 
This representation of $f(\theta_j)$ as a trigonometric polynomial succinctly captures the dependence of the cost function on the single parameter $\theta_j$, with each term oscillating at distinct frequencies determined by the eigenvalue differences of the generator $H_j$.

For completeness, we provide a detailed proof of \cref{eq-trig-thetaj} in \cref{app-trig}.
It should be noted that whenever we consider $f(\theta_j)$, we implicitly fix the values of the other $m-1$ parameters $\{\theta_i\}_{i\neq j}$. When those fixed parameters vary, the univariate function $f (\theta_j)$ itself changes, and this change is entirely absorbed into the Fourier coefficients $a_0$, $a_k$, and $b_k$ in \cref{eq-trig-thetaj}, while underlying frequency set $\{\Omega_k^j\}_{k=1}^{r_j}$ remains unchanged. The trigonometric nature of the cost function arises from the specific circuit structure in \cref{eq-Utheta}, especially since the parameterized gates are defined using $e^{i H_j \theta_j}$.

\subsection{Reformulated problem}

We are now ready to reformulate \cref{eq:cost-function} from the perspective of optimization theory. 

\begin{problem}[Reformulated PQC optimization problem]\label{Problem}
The goal is to find an efficient algorithm to solve the optimization problem,
\begin{equation}
    \boldsymbol{\theta}^* = \operatorname{argmin}_{\boldsymbol{\theta} \in \mathbb{R}^m} f(\boldsymbol{\theta}),
\end{equation}
under the following two observations.
\begin{description}
    \item[Observation I] Each function evaluation of $f(\boldsymbol{\theta})$ is subject to zero-mean Gaussian noise. Specifically, this gives rise to a random variable $\tilde{f}(\boldsymbol{\theta}) $ defined as
\begin{equation}\label{eq-problem-1}
    \tilde{f}(\boldsymbol{\theta}) = f(\boldsymbol{\theta}) + \mathcal{N} \left(0, \frac{\sigma^2(\boldsymbol{\theta})}{\mathfrak{n}}\right).
\end{equation}
    \item[Observation II] For each coordinate $j=1, \ldots, m$, the restricted univariate function \cref{eq-f-thetaj} of $f(\boldsymbol{\theta})$, has the trigonometric polynomial form 
\begin{equation}\label{eq-coordinate-j}
    f\left(\theta_j\right) = \frac{1}{\sqrt{2}} a_0 + \sum_{k=1}^{r_j}\left[a_k \cos(\Omega_k^{j} \theta_j) + b_k \sin(\Omega_k^{j} \theta_j)\right],
\end{equation}
where $a_0, a_k$ and $b_k$ are real coefficients determined by the remaining $\theta_i$'s with $i \neq j$.
\end{description}
\end{problem}

We will make further assumptions on the variance. 
While the variance of an individual measurement, $\sigma^2(\boldsymbol{\theta})$, technically depends on $\boldsymbol{\theta}$ as in \cref{eq-var-single}, evaluating the variance is often computationally prohibitive. Following the convention in existing studies \cite{mari2021estimating,wierichs2022general,markovich2024parameter}, we consider the following assumption, which is usually a good approximation in practice. 

\begin{assumption}[Constant variance]\label{assp-var}
We assume a constant noise level as follows: given any $\boldsymbol{\theta}$, $\sigma^2(\boldsymbol{\theta}) \approx \sigma^2(\boldsymbol{\theta} + s \boldsymbol{e}_j)$ for all $s \in \mathbb{R}.$ Here, $\boldsymbol{e}_j$ represents the standard basis vector in the $j$-th direction. 
\end{assumption}

\section{Interpolation-based coordinate descent method}\label{sec-ICD}

In this section, we will propose our interpolation-based coordinate descent (ICD) method for solving \cref{Problem}. 

\subsection{Overview}\label{sec-overview-ICD}

We first provide an overview of the original coordinate descent (CD) method \cite{shi2016primer}. The original CD method to \cref{Problem} works as follows: given current parameters $\boldsymbol{\theta}$, we first select a coordinate $j$ and consider the restricted univariate function $f\left(\theta_j\right)$ as in \cref{eq-coordinate-j}. Then, we update the $j$-th component of $\boldsymbol{\theta}$ by,
\begin{equation}\label{eq-exact-prob}
\theta_j^{\text{new}} \leftarrow \underset{\theta_j \in \mathbb{R}}{\operatorname{argmin}\;}   f\left(\theta_j\right),
\end{equation}
which is simply a single-variable optimization subproblem and is easy to solve. 
Usually, one uses the gradient descent method to solve \cref{eq-exact-prob}.
When the coordinate $j$ is selected randomly, 
and a single step is taken in the direction of the negative gradient (which, in our case, becomes a negative derivative) with some learning rate $\alpha>0$ as 
\begin{equation}\label{eq-rcd}
\theta_j^{\text{new}}  \leftarrow \theta_j^{\text{old}} -\alpha \left.\frac{\mathrm{d} f\left(\theta_j\right)}{\mathrm{d}\theta_j} \right|_{\theta_j=\theta_j^{\text{old}}},
\end{equation}
the CD method becomes the famous random coordinate descent (RCD) \cite{ding2024random}. After updating coordinate $j$, we next select a new coordinate and repeat the above process. 

No matter what method we use to solve the subproblem in \cref{eq-exact-prob}, the CD method updates only one parameter at each iteration. However, it can be computationally expensive to directly apply existing solvers to \cref{eq-exact-prob}, because most solvers rely on iterative methods requiring numerous function evaluations $\theta_j \mapsto f\left(\theta_j\right)$. In PQC context, each function evaluation necessitates re-tuning the quantum device parameters and repeating measurements, making this approach prohibitively expensive in terms of both time and quantum resources. 

To address this issue, we leverage the trigonometric structure of the cost function in Observation II. 
Our approach is to recover the coefficients $a_0$, $a_k$, and $b_k$ in \cref{eq-coordinate-j} using the \textit{interpolation} method (which will be discussed in the next subsection). This method involves only a limited number (specifically $2r_j + 1$) of evaluations of $f(\theta_j)$ at different points. On the other hand, since function evaluations are always noisy as per Observation I, the recovered function can never be exact: it will only serve as an approximation to \cref{eq-coordinate-j}. Then, with the approximated function 
\begin{equation}\label{eq-inter-hat-f}
\theta_j \mapsto \hat{f}\left(\theta_j\right) = \frac{1}{\sqrt{2}} \hat{a}_0 + \sum_{k=1}^{r_j}[\hat{a}_k \cos(\Omega_k^j \theta_j) + \hat{b}_k \sin(\Omega_k^j \theta_j)]
\end{equation}
in hand (symbol $\,{}^{\hat{ }}\,$ indicates an estimated value), we will solve an approximated subproblem
\begin{equation}\label{eq-approximated-prob}
\theta_j^{\text{new}} \leftarrow \underset{\theta_j \in \mathbb{R}}{\operatorname{argmin}\;}   \hat{f}\left(\theta_j\right).
\end{equation}
Importantly, all the information of $\hat{f}$ (i.e., the estimated $\hat{a}_0, \hat{a}_k, \hat{b}_k$) is stored on the classical computer, and the callings of $\theta_j \mapsto \hat{f}\left(\theta_j\right)$ are completely independent of the quantum device. As a result, there is no additional quantum-related burden in solving \cref{eq-approximated-prob}. The other process remains the same as in the original CD. We call this the interpolation-based coordinate descent (ICD) method to \cref{Problem}. 

In the simplest case where $r_j = 1$ and $\Omega_k^j = 1$, we will solve
\begin{equation}
\theta_j^{\text{new}} \leftarrow \underset{\theta_j \in \mathbb{R}}{\operatorname{argmin}\;}  \hat{f}\left(\theta_j\right) = \frac{1}{\sqrt{2}} \hat{a}_0 + \hat{a}_1 \cos(\theta_j) + \hat{b}_1 \sin(\theta_j),
\end{equation}
which has a closed-form solution $\theta_j^{\text{new}} \leftarrow \theta_j^*=\operatorname{arctan2}(\hat{b}_1, \hat{a}_1)+\pi.$ This special case corresponds exactly to Rotosolve \cite{ostaszewski2021structure}. In the general case $r_j \geq 2$ and $\Omega_k^j = k$, there is no closed-form solution anymore, however, we can adopt an eigenvalue method (discussed in \cref{app-eig-method}) to exactly solve the approximated subproblem in \cref{eq-approximated-prob}. 

Clearly, the effectiveness of the ICD method lies in how to best recover the approximation function $\hat{f}\left(\theta_j\right)$ for true function $f\left(\theta_j\right)$ in the presence of unavoidable noise. The accuracy of the solution to the approximated subproblem \cref{eq-approximated-prob}, relative to the exact subproblem \cref{eq-exact-prob}, is directly determined by the error between $\hat{f}$ and $f$. In subsequent subsections, we will demonstrate how to enhance the interpolation method to minimize noise impact. 

\subsection{Interpolation method to recover restricted univariate functions}\label{sec-interp-real}

In this subsection, we discuss how to perform interpolation to recover the restricted univariate function in \cref{eq-coordinate-j}. 
For notation simplicity, we omit index $j$ and replace the variable $\theta_j$ with $x$, and consider the trigonometric polynomial $f \colon \mathbb{R} \to \mathbb{R}$ of order $r$
\begin{equation}\label{eq-trig-poly-real}
    f(x) = \frac{1}{\sqrt{2}} a_0 + \sum_{k=1}^r \left[ a_k \cos(\Omega_k x) + b_k \sin(\Omega_k x) \right],
\end{equation}
where $a_k$'s and $b_k$'s are $n \equiv 2r + 1$ unknown real parameters. 
Due to the specific construction, all information about $f$ is equivalent to these coefficients. The goal of the interpolation is to recover all coefficients above by evaluating the function $f(x)$ at various points $x$. We next discuss the noise-free case and the noisy case, respectively. 

\subsubsection{Interpolation with true data}\label{sec:no-noise}

Suppose we have access to the calling $x \mapsto f(x)$ without noise for any argument $x$. This case is the foundation for our subsequent consideration of interpolation under noise. Indeed, knowing any set of $n$ distinct true data points $\{ (x_i, f(x_i)) \}_{i=0}^{2r}$ allows us to \textit{exactly} recover the coefficient vector
\begin{equation}
    \mathbf{z} := \left[ a_0, a_1, b_1, \cdots, a_r, b_r \right]^{\dagger} \in \mathbb{R}^n,
\end{equation}
thereby giving us a complete understanding of $f$ in \cref{eq-trig-poly-real}.
Let $\mathbf{x} := [x_0, x_1, \cdots, x_{2r}]^{\dagger} \in \mathbb{R}^n$ with distinct entries.
Define the \textit{true data vector}
\begin{equation}\label{eq-yx}
\mathbf{y}_{\mathbf{x}} := [f(x_0), f(x_1), \cdots, f(x_{2r})]^{\dagger} \in \mathbb{R}^n.
\end{equation}
Now, using the chosen $\mathbf{x}$, we construct the interpolation matrix
\begin{equation}\label{eq-A}
A_{\mathbf{x}}:=\left[\begin{array}{cccccc}
\frac{1}{\sqrt{2}} & \cos \left(\Omega_1 x_0\right) & \sin \left(\Omega_1 x_0\right) & \cdots & \cos \left(\Omega_r x_0\right) & \sin \left(\Omega_r x_0\right) \\
\frac{1}{\sqrt{2}} & \cos \left(\Omega_1 x_1\right) & \sin \left(\Omega_1 x_1\right) & \cdots & \cos \left(\Omega_r x_1\right) & \sin \left(\Omega_r x_1\right) \\
\vdots & \vdots & \vdots & \ddots & \vdots & \vdots \\
\frac{1}{\sqrt{2}} & \cos \left(\Omega_1 x_{2 r}\right) & \sin \left(\Omega_1 x_{2 r}\right) & \cdots & \cos \left(\Omega_r x_{2 r}\right) & \sin \left(\Omega_r x_{2 r}\right)
\end{array}\right] \in \mathbb{R}^{n \times n} .
\end{equation}
Given these definitions, we can solve the linear equation 
\begin{equation}\label{eq-yAz}
     A_{\mathbf{x}} \mathbf{z}=\mathbf{y}_{\mathbf{x}} 
\end{equation}
to recover the coefficient $\mathbf{z}$. Here, $\mathbf{x}$ serves as a set of \textit{interpolation nodes}. Moreover, due to the special structure of $\mathbf{y}_{\mathbf{x}}$ and $A_{\mathbf{x}}$  in \cref{eq-yx} and \cref{eq-A}, the solution $\mathbf{z} \equiv A^{-1}_{\mathbf{x}} \mathbf{y}_{\mathbf{x}}$ is independent of $\mathbf{x}$. This implies that any interpolation nodes $\mathbf{x}$ yield exactly the same results. It should be noted that this property holds only in the ideal setting where we can evaluate $x \mapsto f(x)$ without noise.

Actually, here, we implicitly assume that the chosen $\mathbf{x}$ results in a non-singular $A_{\mathbf{x}}$. Since $A_{\mathbf{x}}$ contains parameters $\Omega_k$, it is difficult to derive the condition for $\mathbf{x}$ that guarantees non-singularity. In the algorithms later, we will use an optimally chosen $\mathbf{x}^*$ by solving \cref{pro-min-mse}, which naturally ensures that $A_{\mathbf{x}^*}$ is non-singular. In \cref{sec-equ-opt}, for the case of $\Omega_k = k$, we can prove that the condition for $A_{\mathbf{x}}$ to be non-singular is precisely that all entries of $\mathbf{x}$ are distinct modulo $2\pi$ (\cref{app-recover-complex}). 

\subsubsection{Interpolation with noisy data}\label{sec:noise}

In practice, we can only access the function $x \mapsto f(x)$ with noise. According to Observation I and \cref{eq-problem-1}, the observed value of $f(x)$ has an additive zero-mean Gaussian noise $\epsilon_x \sim \mathcal{N}(0, \sigma^2(x)/\mathfrak{n})$. By \cref{assp-var}, the variance $\sigma^2(x)$ is independent of $x$, allowing us to simply denote it as $\sigma^2$. Since we always consider a constant number $\mathfrak{n}$ throughout this paper, $\mathfrak{n}$ can be absorbed into $\sigma^2$. 
Then we simply have $\epsilon_x \sim \mathcal{N}(0, \sigma^2)$, and each evaluation of the cost function gives noisy data as a random variable 
\begin{equation}\label{assm-tilde-f}
\tilde{f}(x) = f(x) + \epsilon_x, \quad \epsilon_x \sim \mathcal{N}(0, \sigma^2).
\end{equation}

Now, given $n$ many noisy data points $\{ (x_i, \tilde{f}(x_i)) \}_{i=0}^{2r}$, rather than solving \cref{eq-yAz}, we will address its perturbed version:
\begin{equation}\label{eq-perturbed-equation}
    A_{\mathbf{x}} \hat{\mathbf{z}}_{\mathbf{x}} = \mathbf{y}_{\text{obs}} := \mathbf{y}_{\mathbf{x}} + \mathbf{e},
\end{equation}
where $\mathbf{e} = [\epsilon_0, \epsilon_1, \cdots, \epsilon_{2r}]^{\dagger} \sim \mathcal{N}(0, \sigma^2 I)$ is the normal random vector, $\mathbf{y}_{\mathbf{x}}$ is still the true (but unknown) data vector as \cref{eq-yx}, and 
\begin{equation}
\hat{\mathbf{z}}_{\mathbf{x}} = [\hat{a}_0, \hat{a}_1, \hat{b}_1, \cdots, \hat{a}_r, \hat{b}_r]^{\dagger} \in \mathbb{R}^n
\end{equation}
is solution of the perturbed equation. In fact, $\hat{\mathbf{z}}_{\mathbf{x}}$ is an estimator of the exact coefficients $\mathbf{z}$. We have 
\begin{equation}
    \hat{\mathbf{z}}_{\mathbf{x}} 
= A_{\mathbf{x}}^{-1} (\mathbf{y}_{\mathbf{x}} + \mathbf{e}) = \mathbf{z} + A^{-1}_{\mathbf{x}} \mathbf{e}. 
\end{equation}
Notice that $\mathbf{e}$ is normal, and $ \hat{\mathbf{z}}_{\mathbf{x}} 
= \mathbf{z} + A^{-1}_{\mathbf{x}} \mathbf{e}$ represents an affine transformation of $\mathbf{e}$, so $\hat{\mathbf{z}}_{\mathbf{x}} $ is also a normal random vector. Moreover, we have 
\begin{equation}
    \operatorname{E}[\hat{\mathbf{z}}_{\mathbf{x}}] = \mathbf{z} + A^{-1}_{\mathbf{x}}  \operatorname{E}[\mathbf{e}] = \mathbf{z},
\end{equation}
and the covariance matrix of $\hat{\mathbf{z}}_{\mathbf{x}} $ is 
\begin{align}\label{eq-var}
    \operatorname{VAR}[\hat{\mathbf{z}}_{\mathbf{x}}]
    = \operatorname{E}[(\hat{\mathbf{z}}_{\mathbf{x}} - \mathbf{z})(\hat{\mathbf{z}}_{\mathbf{x}} - \mathbf{z})^{\dagger}]
    = A_{\mathbf{x}}^{-1} \operatorname{E}[\mathbf{e} \mathbf{e}^{\dagger}] (A_{\mathbf{x}}^{-1})^{\dagger}
   = \sigma^2 [A_{\mathbf{x}}^{\dagger} A_{\mathbf{x}}]^{-1}.
\end{align}
It is evident that $\hat{\mathbf{z}}_{\mathbf{x}}$ serves as an unbiased estimator of the true coefficients $\mathbf{z}$. 
However, its variance depends on the interpolation nodes $\mathbf{x}$. 
This naturally raises the question: can we determine optimal interpolation nodes $\mathbf{x}$ that provide the best possible approximation to the true coefficients $\mathbf{z}$? 

To quantify the estimation error of $\hat{\mathbf{z}}_{\mathbf{x}}$ with respect to the true $\mathbf{z}$, a common metric is the mean squared error (MSE). Let the estimation error be defined as
\begin{equation}
    \Delta \mathbf{z}_{\mathbf{x}} := \hat{\mathbf{z}}_{\mathbf{x}} - \mathbf{z} = A_{\mathbf{x}}^{-1} \mathbf{e}.
\end{equation}
The MSE of $\hat{\mathbf{z}}$ is then given by $\operatorname{MSE}(\hat{\mathbf{z}}_{\mathbf{x}}) := \operatorname{E}[\|\hat{\mathbf{z}}_{\mathbf{x}} - \operatorname{E}[\hat{\mathbf{z}}_{\mathbf{x}}]\|^2] = \operatorname{E}[\|\Delta \mathbf{z}_{\mathbf{x}}\|^2].$ Indeed, we have
\begin{align}
    \operatorname{MSE}(\hat{\mathbf{z}}_{\mathbf{x}})
    = & \operatorname{E}[\operatorname{tr}((A_{\mathbf{x}}^{-1} \mathbf{e})^{\dagger}(A_{\mathbf{x}}^{-1} \mathbf{e}))] 
    = \operatorname{E}[\operatorname{tr}((A_{\mathbf{x}}^{-1} \mathbf{e})(A_{\mathbf{x}}^{-1} \mathbf{e})^{\dagger})] \\
     = & \operatorname{tr}(\operatorname{E}[(A_{\mathbf{x}}^{-1} \mathbf{e})(A_{\mathbf{x}}^{-1} \mathbf{e})^{\dagger}]) 
    = \sigma^2 \operatorname{tr}([A_{\mathbf{x}}^{\dagger} A_{\mathbf{x}}]^{-1})
    = \sigma^2 \|A_{\mathbf{x}}^{-1}\|_F^2, \label{eq-mse-norm}
\end{align}
where $\|\cdot\|_F$ is the Frobenius norm, i.e., $\|X\|_F=\sqrt{\operatorname{tr}\left(X^{\dagger} X\right)}
$.
The second-to-last equality follows from \cref{eq-var}. In fact, for any unbiased estimator $\hat{\Theta} \in \mathbb{R}^n$ of the true vector $\Theta$, one has $\operatorname{MSE}(\hat{\Theta})=\operatorname{tr}(\operatorname{VAR}(\hat{\Theta}))$.

Clearly, the mean squared error $\operatorname{MSE}(\hat{\mathbf{z}}_{\mathbf{x}})$ depends on the specific choice of interpolation nodes $\mathbf{x}$. Naturally, we seek to select the optimal nodes
\begin{equation}\label{pro-min-mse}
    \mathbf{x}^* = \underset{\mathbf{x} \in \mathbb{R}^n}{\operatorname{argmin}\;}   \operatorname{MSE}(\hat{\mathbf{z}}_{\mathbf{x}}) = \sigma^2 \|A_{\mathbf{x}}^{-1}\|_F^2,
\end{equation}
to achieve the best approximation, which is equivalent to minimizing $\|A_{\mathbf{x}}^{-1}\|_F^2$ since $\sigma^2$ is a constant. We observe that minimizing $\|A_{\mathbf{x}}^{-1}\|_F^2$ inherently forces $A_{\mathbf{x}}$ to be invertible; otherwise, the objective value would tend to infinity. Since $A_{\mathbf{x}}$ in \cref{eq-A} involves complicated parameters $\Omega_k$, it is difficult to obtain an analytical solution $\mathbf{x}^*$ to \cref{pro-min-mse}. However, a numerical solution is sufficient for our algorithmic design.
In practice, we can use common algorithms like Adam to solve it.
In \cref{sec-equ-opt}, we will show that for the equidistant frequency case ($\forall  \Omega_k=k$), a global optimal analytical solution to \cref{pro-min-mse} exists.

\begin{remark}[Constant variance is not realistic]
All theoretical analyses in this paper critically rely on \cref{assp-var} (constant variance), which significantly simplifies the theoretical complexity and has also nice empirical performance in our numerical experiments. However, \cref{assp-var} is not realistic. As a complement, in \cref{app-mse-nonconst}, we analyze the true MSE without relying on the constant variance assumption. From both an error bound and experimental perspective, we justify the practical reasonableness of \cref{assp-var}.
\end{remark}

\subsection{MSE and variance of approximated function}\label{sec-mse-var}

We are curious whether there exist alternative criteria for choosing the interpolation nodes, beyond \cref{pro-min-mse}. 
To explore this, we conduct the following analysis. We return to the notation introduced in \cref{sec-interp-real}. Note that the approximated function
\begin{equation}\label{eq-app-fx}
   \hat{f}(x) = \frac{1}{\sqrt{2}} \hat{a}_0 + \sum_{k=1}^r [ \hat{a}_k \cos(\Omega_k x) + \hat{b}_k \sin(\Omega_k x) ]
\end{equation}
can be fully processed on a classical computer once we have obtained the estimated coefficients $\hat{\mathbf{z}}_{\mathbf{x}}$. Let us define the vector
\begin{equation}\label{eq-t0x}
\mathbf{t}(x):=[1/ \sqrt{2}, \cos (\Omega_1 x), \sin (\Omega_1 x), \ldots, \cos (\Omega_r x), \sin (\Omega_r x)]^{\dagger} \in \mathbb{R}^n.
\end{equation}
Note that $\mathbf{t}(x)^{\dagger} \mathbf{t}(x) = \frac{2r+1}{2}=\frac{n}{2}$ for any $x$. Then, $\hat{f}(x)$ can be expressed as $\hat{f}(x)= \mathbf{t}(x)^{\dagger} \hat{\mathbf{z}}_{\mathbf{x}}.$ Due to the normal distribution properties of the random vector $\hat{\mathbf{z}}_{\mathbf{x}}$, $\hat{f}(x)$ itself also follows a normal distribution, with expectation
\begin{equation}\label{eq-unbaised-hatf}
    \operatorname{E}[\hat{f}(x)]= \mathbf{t}(x)^{\dagger} \operatorname{E}[\hat{\mathbf{z}}_{\mathbf{x}}]= \mathbf{t}(x)^{\dagger} \mathbf{z}=f(x).
\end{equation}
Hence, $\hat{f}(x)$ provides unbiased estimates of the true evaluation $f(x)$ at any $x \in \mathbb{R}$, regardless of the interpolation nodes $\mathbf{x}$. 
Furthermore, as shown in \cref{lem-cov-var} of \cref{sec-proof-thm-3}, the variance can be computed as
\begin{equation}\label{eq-var-fx}
\operatorname{Var}[\hat{f}(x)]
=\mathbf{t}(x)^{\dagger} \operatorname{VAR}[\hat{\mathbf{z}}_{\mathbf{x}}] \mathbf{t}(x) = \langle\operatorname{VAR}[\hat{\mathbf{z}}_{\mathbf{x}} ], \mathbf{t}(x) \mathbf{t}(x)^{\dagger}\rangle .
\end{equation}
This allows us to directly assess how well $\hat{f}$ approximates $f$. However, this variance depends on both the interpolation nodes $\mathbf{x}$ and the univariate variable $x$. 

To address this, we seek a global variance measure that removes the dependence on the specific univariate input $x$ and focuses solely on the effect of $\mathbf{x}$. This, however, requires additional conditions to become tractable. For example, assuming $f$ and $\hat{f}$ have a period $T>0$ (usually $2\pi$), we can use the following quantity
\begin{equation}\label{eq-hx-0}
h(\mathbf{x}) := \frac{1}{T} \int_0^{T} \operatorname{Var}[\hat{f}(x)]   \;\mathrm{d} x =\langle\operatorname{VAR}[\hat{\mathbf{z}}_{\mathbf{x}} ], \underbrace{\frac{1}{T} \int_0^{T} 
 \mathbf{t}(x) \mathbf{t}(x)^{\dagger}   \mathrm{d} x}_{\text{const.}} \rangle
\end{equation}
to evaluate the quality of the interpolation nodes. Minimizing $h(\mathbf{x})$ could give us a better overall estimate of $f$. However, in general, if the frequencies $\Omega_k$'s are not assumed to be rational or integer, then $f$ or $\hat{f}$ might not have a period (e.g., $\sin\left(x\right)+\sin\left(\pi x\right)$). If the integration interval in \cref{eq-hx-0} is the entire $\mathbb{R}$, then $h(\mathbf{x})$ would be difficult to handle. Fortunately, in the next section, we will consider equidistant frequencies ($\forall \Omega_k=k$), which will make $h(\mathbf{x})$ easier to handle.

Here, for general frequencies, we can still provide a upper bound for the variance for all $x$, namely, the MSE itself. We observe first that the covariance matrix $\operatorname{VAR}[\hat{\mathbf{z}}_{\mathbf{x}} ]$ is always symmetric and positive semi-definite, then its Rayleigh quotient \cite{horn2012matrix} satisfies
\begin{equation}
    \mathbf{t}(x)^{\dagger} \operatorname{VAR}[\hat{\mathbf{z}}_{\mathbf{x}} ] \;\mathbf{t}(x) \leq \lambda_{\max} (\operatorname{VAR}[\hat{\mathbf{z}}_{\mathbf{x}} ])\, \|\mathbf{t}(x)\|^2
  =\frac{n}{2}\lambda_{\max} (\operatorname{VAR}[\hat{\mathbf{z}}_{\mathbf{x}} ]).
\end{equation}
Finally, because $\lambda_{\max}(\operatorname{VAR}[\hat{\mathbf{z}}_{\mathbf{x}} ])\le\sum_i\lambda_i(\operatorname{VAR}[\hat{\mathbf{z}}_{\mathbf{x}} ])=\operatorname{tr}(\operatorname{VAR}[\hat{\mathbf{z}}_{\mathbf{x}} ])$, we obtain the clean bound:
\begin{equation}\label{eq-9555}
   \forall x\in \mathbb{R}, \quad \operatorname{Var}[\hat{f}(x)] \le\frac{n}{2}\operatorname{tr}(\operatorname{VAR}[\hat{\mathbf{z}}_{\mathbf{x}} ])=\frac{n}{2}\operatorname{MSE}(\hat{\mathbf{z}}_{\mathbf{x}}). 
\end{equation}
Equality on the upper bound is attained precisely when $\operatorname{MSE}(\hat{\mathbf{z}}_{\mathbf{x}})$ is a multiple of the identity. Thus, the optimal $\mathbf{x}^*$ under the criterion of \cref{pro-min-mse} indeed ensures that the overall variance of the approximation $\hat{f}$ is controlled. Taking all the above into account, for the general frequency setting, we adopt the MSE rather than the average variance as the evaluation criterion.

\subsection{Standard ICD algorithm}\label{sec-standard-ICD}

Building on the previous subsections, we now formally introduce the interpolation-based coordinate descent (ICD) method for solving \cref{Problem}.
First, we introduce the standard ICD in \cref{alg-standard-ICD}.
As discussed in \cref{sec-overview-ICD}, a key initial step of ICD is to obtain the optimal interpolation nodes $\mathbf{x}^{j, *}$ and corresponding interpolation matrices $A_{\mathbf{x}^{j,*}}$ for each  $j = 1, \ldots, m$. To clarify this step, we present it separately in \cref{alg-opt-interp}. 
Note that \cref{alg-opt-interp} is a preparatory step, and the interpolation schemes generated here can be reused in each iteration of ICD \cref{alg-standard-ICD}. We explicitly compute the inverse of $A_{\mathbf{x}^{j,*}}$ for two reasons: 
in practice, the scale of $A_{\mathbf{x}^{j,*}}$ is small; during the iterations, we need to repeatedly solve the equation $A_{\mathbf{x}^{j,*}} \hat{\mathbf{z}} = \mathbf{y}_{\text{obs}}$. Note that $A_{\mathbf{x}^{j,*}}$ is fixed, while $\mathbf{y}_{\text{obs}}$ is constantly updated. Therefore, computing the inverse of $A_{\mathbf{x}^{j,*}}$ once and solving the equation using $\hat{\mathbf{z}} = A_{\mathbf{x}^{j,*}}^{-1} \mathbf{y}_{\text{obs}}$ is more efficient.

\begin{algorithm}[H]
\caption{Standard ICD Method for \cref{Problem}}
\label{alg-standard-ICD}
\SetAlgoLined
\SetKwInOut{Input}{Input}
\SetKwInOut{Output}{Output}
\Input{Initial parameters $\boldsymbol{\theta}^0 = [\theta_1^0,\ldots, \theta_m^0]^{\dagger}$, and the number of iterations $\mathsf{T}$.}
\Output{Optimized parameters $\boldsymbol{\theta}^{\mathsf{T}}$ after $\mathsf{T}$ iterations.}
\BlankLine
Obtain the optimal interpolation schemes $\{(\mathbf{x}^{j,*},A_{\mathbf{x}^{j,*}}^{-1})\}_{j=1}^m$ using \cref{alg-opt-interpolation}\; 
\For{$t = 0$ \KwTo $\mathsf{T}$}{
    Select a coordinate $j \in \{1, \ldots, m\}$, either sequentially or uniformly at random\;
    Fix all parameters of $\boldsymbol{\theta}^t$ except for $\theta_j^t$, and consider the restricted univariate function $\theta_j \mapsto f(\theta_j)$\; 
 (Quantum burden) Construct the observed data vector at $\mathbf{x}^{j,*}$, i.e., 
\begin{equation}\label{eq-obs-y}
\mathbf{y}_{\text{obs}} := [\tilde{f}(x^{j,*}_0), \tilde{f}(x^{j,*}_1), \ldots, \tilde{f}(x^{j,*}_{2r_j})]^{\dagger};
\end{equation}
    \hspace{-3mm} 
    Compute the estimated coefficients $\hat{\mathbf{z}} := A_{\mathbf{x}^{j,*}}^{-1}  \mathbf{y}_{\text{obs}}$ and recover the estimated function $\hat{f}(\theta_j)$ as in \cref{eq-inter-hat-f}\;
    Let $\theta_j^{t+1} := \underset{\theta_j \in \mathbb{R}}{\operatorname{argmin}\;}   \hat{f}(\theta_j)$\;
    Let $\theta_i^{t+1} := \theta_i^{t}$ for all $i \neq j$\;
}
\end{algorithm}

\begin{algorithm}[H]\label{alg-opt-interpolation}
\caption{Obtain the Optimal Interpolation Schemes for General Frequencies}
\label{alg-opt-interp}
\SetAlgoLined
\SetKwInOut{Input}{Input}
\SetKwInOut{Output}{Output}
\Input{All Hermitian generators $H_j \in \mathbb{C}^{N \times N}$, $j=1,\ldots,m$, in \cref{eq-eiHtheta}.}
\Output{Optimal interpolation nodes $\mathbf{x}^{j,*}$ and inverse of optimal interpolation matrices $A_{\mathbf{x}^{j,*}}^{-1}$, $j=1,\ldots,m$. 
}
\For{$j=1,\ldots,m$}{
    Determinate the frequency set $\{\Omega_k^j\}_{k=1}^{r_j}$ of $H_j$ as defined in \cref{eq-freq-set}\;
    Let $n_j := 2r_j + 1$. Consider the associated interpolation matrix $A_{\mathbf{x}} \in \mathbb{R}^{n_j \times n_j}$ defined in \cref{eq-A} based on the computed frequencies\;  
    Solve for the optimal interpolation nodes  $\mathbf{x}^{j,*}
    :=\underset{\substack{\mathbf{x} \in \mathbb{R}^{n_j}}}{\operatorname{argmin}\;} 
      \|A_{\mathbf{x}}^{-1}\|_F^2$\; 
    Compute the inverse of optimal interpolation matrix $A_{\mathbf{x}^{j,*}}^{-1}$\; 
}
\end{algorithm}

\section{Optimal interpolation nodes for equidistant frequencies}\label{sec-equ-opt}

So far, all the discussions have been for the general frequency $\Omega_k$, which appears in \cref{eq-coordinate-j} of Observation II. This section discusses the equidistant frequency case, which is the most common in practical applications. Moreover, in this case, our proposed ICD algorithm has many elegant theoretical results. 
More specifically, throughout this section, we adopt the following assumption \cite{wierichs2022general}. 

\begin{assumption}[Equidistant frequencies]\label{assm-equi}
For every $H_j$ in \cref{eq-Utheta}, we assume the frequencies $\{\Omega_{k}^{j}\}_{k  =1}^{r_j}$ are equidistant, i.e., $\Omega_{k}^{j}=k  \Omega^{j}$ $(k=1,\ldots,r_j)$ for some constant $\Omega^{j}$. Without loss of generality\footnote{For $\Omega^{j} \neq 1$, we can rescale the function argument to achieve $\Omega_{k}^{j}=k $. Once the rescaled function is constructed, the original function is readily available. }, we further restrict the frequencies to integer values, i.e., $\Omega_{k}^{j}=k $ $(k=1,\ldots,r_j)$.
\end{assumption}

The equidistant frequency patterns often arise in practical scenarios. 
A particularly important case occurs when the Hermitian operator $H_j$ can be expressed as a sum of $R$ commuting Pauli words $\mathcal{P}_k$ with coefficients of $\pm 1$, i.e., $H_j = \sum_{k=1}^{R} \pm \mathcal{P}_k$. This structure results in equidistant frequencies and $r_j = R$.
Notice that while equidistant eigenvalues do imply equidistant frequencies, the converse is not always true. It is possible for a generator to have non-equidistant eigenvalues but still produce equidistant frequencies, e.g., $H_j$ has three distinct $\lambda_1=0, \lambda_2=1, \lambda_3=3$.

For the equidistant frequencies case, the previously proposed \cref{alg-standard-ICD} can be directly applied without any modification. Moreover, the corresponding optimal interpolation nodes $\mathbf{x}^*$ and $A_{\mathrm{x}^*}^{-1}$ have analytical forms, allowing us to skip the entire process of \cref{alg-opt-interp}. 

\subsection{Optimal interpolation nodes are \texorpdfstring{$\frac{2 \pi}{n}$}{}-equidistant nodes}

We again use the notations in \cref{sec-interp-real} without the index $j$. 
As shown in \cref{app-recover-complex}, if the interpolation nodes $\mathbf{x} = \left[x_0, x_1, \dots, x_{2r}\right]^{\dagger} \in \mathbb{R}^n$ have distinct entries modulo $2\pi$, the matrix $A_{\mathbf{x}}$ must be non-singular, ensuring that $\hat{\mathbf{z}}_{\mathbf{x}}$ in \cref{eq-perturbed-equation} is well-defined.
The main findings of this subsection can be summarized as follows: the \textit{$\frac{2\pi}{n}$-equidistant nodes} $\mathbf{x}^* \in \mathbb{R}^n$ with $n = 2r + 1$, defined by
\begin{equation}\label{eq-2pi/n-equidistant}
    x_k^* = s + \frac{2\pi}{n}k, \quad k = 0,1,\ldots,2r,
\end{equation}
where $s \in \mathbb{R}$ is a shift value, achieve global optimality under the following three criteria simultaneously (and independently of $s$):
\begin{enumerate}
    \item Minimization of the mean squared error, $\operatorname{MSE}(\hat{\mathbf{z}}_{\mathbf{x}})$.
    \item Minimization of the condition number of the interpolation matrix $A_{\mathbf{x}}$.
    \item Minimization of the average variance of the estimated derivatives $\hat{f}^{(d)}(x)$ of all orders $d\geq 0$.
\end{enumerate}
\noindent Furthermore, in this case, one can readily verify that $A_{\mathbf{x}^*}^{-1}$ has an explicit form given by 
\begin{equation}
    A_{\mathbf{x}^*}^{-1} = \frac{2}{n}\left(\begin{array}{cccc}
    \tfrac{1}{\sqrt{2}} & \tfrac{1}{\sqrt{2}} & \cdots & \tfrac{1}{\sqrt{2}} \\
    \cos \left(x^*_0\right) & \cos \left(x^*_1\right) & \cdots & \cos \left(x^*_{2 r}\right) \\
    \sin \left(x^*_0\right) & \sin \left(x^*_1\right) & \cdots & \sin \left(x^*_{2 r}\right) \\
    \vdots & \vdots & & \vdots \\
    \cos \left(r x^*_0\right) & \cos \left(r x^*_1\right) & \cdots & \cos \left(r x^*_{2 r}\right) \\
    \sin \left(r x^*_0\right) & \sin \left(r x^*_1\right) & \cdots & \sin \left(r x^*_{2 r}\right)
    \end{array}\right) .
\end{equation}
However, note that this inverse expression does not hold for general nodes $\mathbf{x}$. In the following, we examine each of these optimality criteria in detail.

% inverse_Ax_equi_case.ipynb

\subsubsection{Criteria I: minimal mean squared error}\label{subsec:first-view}

% The following theorem helps us directly obtain the results of \cref{alg-opt-interp} for the case of equidistant frequencies. 

\begin{theorem}[Minimal mean squared error]\label{thm-first-veiw}
When \cref{assp-var,assm-equi} holds, the $\frac{2 \pi}{n}$-equidistant nodes $\mathbf{x}^*$ with an arbitrary shift value, as defined in \cref{eq-2pi/n-equidistant}, globally solves
\begin{equation}\label{eq-min-mse}
\mathbf{x}^* 
= \underset{\substack{\mathbf{x} \in \mathbb{R}^n \\ \text{ $x_i$ distinct modulo $2 \pi$}}}{\operatorname{argmin}} 
\operatorname{MSE}(\hat{\mathbf{z}}_{\mathbf{x}}) = \sigma^2 \|A_{\mathbf{x}}^{-1}\|_F^2,
\end{equation}
where the global minimum is $2\sigma^2$.
\end{theorem}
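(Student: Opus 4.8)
The plan is to establish a \emph{universal} lower bound $\|A_{\mathbf{x}}^{-1}\|_F^2 \ge 2$ that holds for every admissible node vector $\mathbf{x}$ (entries distinct modulo $2\pi$, which by \cref{app-recover-complex} makes $A_{\mathbf{x}}$ invertible; singular $\mathbf{x}$ give objective $+\infty$ and are automatically excluded from \cref{eq-min-mse}), and then to check that the $\tfrac{2\pi}{n}$-equidistant nodes $\mathbf{x}^*$ attain equality. The bound rests on two observations. First, $\|A_{\mathbf{x}}\|_F^2$ is pinned to a constant: row $i$ of $A_{\mathbf{x}}$ is exactly $\mathbf{t}(x_i)^{\dagger}$, and since $\mathbf{t}(x)^{\dagger}\mathbf{t}(x) = n/2$ for every $x$ (noted below \cref{eq-t0x}), we get $\|A_{\mathbf{x}}\|_F^2 = \sum_{i=0}^{2r}\|\mathbf{t}(x_i)\|^2 = n\cdot\tfrac{n}{2} = \tfrac{n^2}{2}$, independently of $\mathbf{x}$. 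Second, letting $\mu_1,\dots,\mu_n>0$ be the eigenvalues of $A_{\mathbf{x}}^{\dagger}A_{\mathbf{x}}$, we have $\|A_{\mathbf{x}}^{-1}\|_F^2 = \operatorname{tr}([A_{\mathbf{x}}^{\dagger}A_{\mathbf{x}}]^{-1}) = \sum_i \mu_i^{-1}$ (as in \cref{eq-mse-norm}) while $\sum_i \mu_i = \operatorname{tr}(A_{\mathbf{x}}^{\dagger}A_{\mathbf{x}}) = \|A_{\mathbf{x}}\|_F^2 = n^2/2$. The Cauchy--Schwarz (equivalently AM--HM) inequality $\big(\sum_i \mu_i\big)\big(\sum_i \mu_i^{-1}\big) \ge n^2$ then yields $\|A_{\mathbf{x}}^{-1}\|_F^2 \ge n^2/(n^2/2) = 2$, hence $\operatorname{MSE}(\hat{\mathbf{z}}_{\mathbf{x}}) \ge 2\sigma^2$ for all admissible $\mathbf{x}$.

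It remains to show $\mathbf{x}^*$ meets this bound. Equality in Cauchy--Schwarz forces all $\mu_i$ equal, necessarily to $n/2$, i.e. $A_{\mathbf{x}}^{\dagger}A_{\mathbf{x}} = \tfrac{n}{2}I_n$; equivalently, the $n$ columns of $A_{\mathbf{x}}$ are mutually orthogonal with squared norm $n/2$. For $x_k^* = s + \tfrac{2\pi}{n}k$ this is the classical discrete orthogonality of trigonometric systems: for any integer $j$ with $0<|j|<n$, $\sum_{k=0}^{n-1} e^{\mathrm{i} j x_k^*} = e^{\mathrm{i} j s}\sum_{k=0}^{n-1} e^{2\pi\mathrm{i} jk/n} = 0$ by the geometric sum, while for $j=0$ it equals $n$. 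Expanding the pairwise products of the columns $\tfrac{1}{\sqrt2}$, $\cos(px)$, $\sin(px)$ ($1\le p\le r$) via product-to-sum identities produces only frequencies $p\pm q$ with $|p\pm q|\le 2r = n-1 < n$; the only ways to reach frequency $0$ are the constant column paired with itself, or $p=q$ within a $\cos$--$\cos$ or $\sin$--$\sin$ pair (the $\cos$--$\sin$ cross terms reduce to sums of $\sin(\,\cdot\,x_k^*)$, which vanish even at $j=0$). Carrying this bookkeeping out gives $A_{\mathbf{x}^*}^{\dagger}A_{\mathbf{x}^*} = \tfrac{n}{2}I_n$, independently of $s$, so $\|A_{\mathbf{x}^*}^{-1}\|_F^2 = \operatorname{tr}(\tfrac{2}{n}I_n) = 2$ and the lower bound is attained, proving global optimality with minimum $2\sigma^2$. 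As a byproduct, $A_{\mathbf{x}^*}^{-1} = \tfrac{2}{n}A_{\mathbf{x}^*}^{\dagger}$, which is exactly the explicit inverse displayed before the theorem.

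I do not anticipate a serious obstacle: the argument is short once organized around the fixed-Frobenius-norm fact. The one place demanding care is the frequency bookkeeping in the orthogonality step --- verifying that every relevant frequency $p\pm q$ stays \emph{strictly} below $n$, which is precisely where $n = 2r+1$ (rather than $2r$) is essential, and handling the constant column on equal footing with the $\cos/\sin$ columns. A minor but worth-stating point is that the lower bound combined with attainment is what upgrades "a minimizer" to "global minimizer over all $\mathbf{x}$ with invertible $A_{\mathbf{x}}$," and that the value $2\sigma^2$ is manifestly independent of the shift $s$.
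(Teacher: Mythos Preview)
Your proposal is correct and follows essentially the same approach as the paper: both establish the lower bound via the trace inequality $\operatorname{tr}(X^{-1})\ge n^2/\operatorname{tr}(X)$ applied to $X=A_{\mathbf{x}}^{\dagger}A_{\mathbf{x}}$ (the paper packages this as \cref{lem-pd-1}, you phrase it as Cauchy--Schwarz/AM--HM on the eigenvalues) together with the constant $\operatorname{tr}(A_{\mathbf{x}}^{\dagger}A_{\mathbf{x}})=n^2/2$, and then verify $A_{\mathbf{x}^*}^{\dagger}A_{\mathbf{x}^*}=\tfrac{n}{2}I$ at the equidistant nodes. The only cosmetic difference is that the paper routes the trace computation and the attainment step through the complex Vandermonde factorization $A_{\mathbf{x}}=D_{\mathbf{x}}V_{\mathbf{x}}C$ and \cref{lem-Vandermonde}, whereas you compute the trace directly from the row norms $\|\mathbf{t}(x_i)\|^2=n/2$ and verify orthogonality of the columns via real product-to-sum identities and the geometric sum --- a slightly more self-contained presentation of the same argument.
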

For proofs, see \cref{sec-proof-thm-1}. 

\subsubsection{Criteria II: minimal condition number}

We can also analyze the stability of linear equation from the perspective of classical numerical analysis. Using estimation error $\Delta \mathbf{z}_{\mathbf{x}}=\hat{\mathbf{z}}_{\mathbf{x}}-\mathbf{z}$, we can rewrite the perturbed linear \cref{eq-perturbed-equation} as
\begin{equation}
    A_{\mathbf{x}}(\Delta \mathbf{z}_{\mathbf{x}} + \mathbf{z}) = \mathbf{y}_{\mathbf{x}} + \mathbf{e}.
\end{equation}
Let $\|\cdot\|_2$ stand for the spectral norm and the condition number $\kappa_2(A_{\mathbf{x}})=\|A_{\mathbf{x}}\|_2\left\|A_{\mathbf{x}}^{-1}\right\|_2$.
Based on standard results in numerical stability analysis \cite{trefethen2022numerical}, $\kappa_2(A_{\mathbf{x}})$ provides an upper bound on the relative error by inequality
\begin{equation}
    \frac{\|\Delta \mathbf{z}_{\mathbf{x}}\|}{\|\mathbf{z}\|} \leq \kappa_2(A_{\mathbf{x}}) \frac{\|\mathbf{e}\|}{\|\mathbf{y}_{\mathbf{x}}\|}.
\end{equation}
This reveals that a smaller condition number of interpolation matrix $A_{\mathbf{x}}$ ensures better numerical stability for solutions. This motivates us to seek interpolation nodes that minimize $\kappa_2(A_{\mathbf{x}})$, leading to the following theorem. For proofs, see \cref{sec-proof-thm-2}. 

\begin{theorem}[Minimal condition number]\label{thm-second-veiw}
When \cref{assp-var,assm-equi} holds, the $\frac{2\pi}{n}$-equidistant nodes $\mathbf{x}^*$ with an arbitrary shift value, as defined in \cref{eq-2pi/n-equidistant}, globally solves
\begin{equation}\label{eq-min-cn}
\mathbf{x}^* 
=\underset{\substack{\mathbf{x} \in \mathbb{R}^n \\ \text{ $x_i$ distinct modulo $2\pi$}}}{\operatorname{argmin}} 
\kappa_2(A_{\mathbf{x}}) ,
\end{equation}
where the global minimum is $1$.
\end{theorem}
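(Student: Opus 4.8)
The plan is to show that $\kappa_2(A_{\mathbf{x}}) \geq 1$ always, with equality forcing $A_{\mathbf{x}}$ to be (a scalar multiple of) an orthogonal matrix, and then verify that the $\frac{2\pi}{n}$-equidistant nodes achieve exactly $\kappa_2 = 1$ by an explicit computation of $A_{\mathbf{x}^*}^{\dagger} A_{\mathbf{x}^*}$. First I would recall the elementary fact that for any invertible real matrix $B$ of size $n$, $\kappa_2(B) = \sigma_{\max}(B)/\sigma_{\min}(B) \geq 1$, with equality if and only if all singular values of $B$ coincide, i.e.\ $B^{\dagger} B = c I$ for some $c > 0$. So the theorem amounts to: (i) the $\frac{2\pi}{n}$-equidistant nodes make $A_{\mathbf{x}^*}^{\dagger} A_{\mathbf{x}^*}$ a multiple of the identity, and (ii) this is the \emph{only} way (modulo the caveat that we only need existence of a minimizer, not uniqueness) to reach $\kappa_2 = 1$, but in fact (ii) is not even needed—once we exhibit a feasible point with $\kappa_2 = 1$ and know $\kappa_2 \geq 1$ throughout the feasible set, that point is a global minimizer and the minimum value is $1$.

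Next I would carry out step (i). With $\Omega_k = k$ (by \cref{assm-equi}) and nodes $x_k^* = s + \frac{2\pi}{n} k$, the rows of $A_{\mathbf{x}^*}$ are the vectors $\mathbf{t}(x_k^*)^{\dagger}$ from \cref{eq-t0x}. The Gram matrix entry $(A_{\mathbf{x}^*}^{\dagger} A_{\mathbf{x}^*})$ is a sum over the $n$ nodes of products of the form $\cos(p x_k^*)\cos(q x_k^*)$, etc., with $0 \le p,q \le r$. Using product-to-sum identities, each such sum reduces to expressions $\sum_{k=0}^{n-1} e^{i \ell x_k^*} = e^{i\ell s}\sum_{k=0}^{n-1} e^{2\pi i \ell k / n}$ for integers $\ell$ with $|\ell| \le 2r = n-1$. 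The standard root-of-unity (discrete orthogonality) identity gives $\sum_{k=0}^{n-1} e^{2\pi i \ell k/n} = n$ if $\ell \equiv 0 \pmod n$ and $0$ otherwise; since $|\ell| \le n-1$, this vanishes unless $\ell = 0$. Collecting terms, the off-diagonal entries all cancel and the diagonal entries all equal $n/2$ (the normalization $1/\sqrt2$ in the first column is chosen precisely so the first diagonal entry is also $n/2$, matching $\mathbf{t}(x)^{\dagger}\mathbf{t}(x) = n/2$). Hence $A_{\mathbf{x}^*}^{\dagger} A_{\mathbf{x}^*} = \frac{n}{2} I$, so all singular values equal $\sqrt{n/2}$ and $\kappa_2(A_{\mathbf{x}^*}) = 1$; this also immediately yields the explicit inverse $A_{\mathbf{x}^*}^{-1} = \frac{2}{n} A_{\mathbf{x}^*}^{\dagger}$ displayed before the theorem, and incidentally re-derives $\|A_{\mathbf{x}^*}^{-1}\|_F^2 = \frac{4}{n^2}\cdot n\cdot\frac{n}{2} = 2$, consistent with \cref{thm-first-veiw}. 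The independence of $s$ is visible because $s$ only contributes the harmless unimodular factor $e^{i\ell s}$, which disappears when $\ell = 0$.

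For step (ii)—confirming $1$ is the global minimum over the feasible set—I would simply invoke $\kappa_2(B) \ge 1$ for every invertible $B$, which holds in particular for every $A_{\mathbf{x}}$ with $\mathbf{x}$ having entries distinct modulo $2\pi$ (non-singularity in this regime is established in \cref{app-recover-complex}). Combined with step (i), this proves $\mathbf{x}^*$ is a global minimizer with minimum value $1$. The only mild subtlety—and the part I'd expect to require the most care in the writeup rather than any deep difficulty—is the bookkeeping in step (i): correctly enumerating which frequency combinations $p \pm q$ occur, checking that all of them have absolute value at most $n-1$ so that the exponential sums genuinely vanish, and handling the mixed $\cos$–$\sin$ cross terms (which produce $\sum_k \sin(\ell x_k^*)$, the imaginary part of the geometric sum, again zero for $\ell \not\equiv 0$ and trivially zero for $\ell = 0$). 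This is the same DFT-orthogonality computation that underlies \cref{thm-first-veiw}, so if that proof is already in hand (\cref{sec-proof-thm-1}) it can largely be cited; the condition-number statement is then a short corollary of the Gram-matrix identity $A_{\mathbf{x}^*}^{\dagger} A_{\mathbf{x}^*} = \frac{n}{2} I$ together with the universal lower bound $\kappa_2 \ge 1$.
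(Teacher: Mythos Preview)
Your proposal is correct and reaches the same conclusion as the paper, but the route differs slightly. The paper does not compute the Gram matrix $A_{\mathbf{x}^*}^{\dagger} A_{\mathbf{x}^*}$ entry by entry via product-to-sum identities; instead it works through the factorization $A_{\mathbf{x}} = D_{\mathbf{x}} V_{\mathbf{x}} C$ (established in \cref{app-recover-complex}), first proving a separate shift-invariance lemma $\kappa_2(A_{\mathbf{x}+s\mathbf{1}}) = \kappa_2(A_{\mathbf{x}})$ via $A_{\mathbf{x}+s\mathbf{1}} = A_{\mathbf{x}} E_s$ with $E_s$ orthogonal, then reducing to the unshifted case and invoking the Vandermonde identity $V_{\mathbf{x}^*}^{\dagger} V_{\mathbf{x}^*} = nI$ together with a lemma that $\kappa_2$ is invariant under multiplication by scaled unitaries, so that $\kappa_2(A_{\mathbf{x}^*}) = \kappa_2(V_{\mathbf{x}^*}) = 1$. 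Your direct Gram-matrix computation with the shift $s$ absorbed into the harmless phase $e^{i\ell s}$ is more self-contained and avoids those auxiliary lemmas; the paper's modular approach, on the other hand, reuses exactly the same factorization and Vandermonde lemma across all three optimality theorems, so once that machinery is in place the condition-number proof is only a few lines. Either way the core content is the same discrete-orthogonality identity and the universal bound $\kappa_2 \ge 1$.
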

Since the condition number of any matrix is always greater than or equal to 1, we have achieved the minimal condition number, even in our cases where $A_{\mathbf{x}}$ possesses specific structural characteristics as in \cref{eq-A}.

\subsubsection{Criteria III: minimal average variance of derivatives}

In \cref{sec-mse-var}, we have discussed the variance $\operatorname{Var}[\hat{f} (x)]$, which quantifies the approximation accuracy of $\hat{f}$ to the true function $f$.
If we treat $\hat{f}^{(0)}\equiv \hat{f}$ and $f^{(0)}\equiv f$ as zero-order derivatives, a similar discussion can be extended to derivatives of any $d$ order. Let $\mathbf{t}^{(0)}(x) \equiv \mathbf{t}(x)$ defined as \cref{eq-t0x} and for any integer $d \geq 1$, let
\begin{equation}\label{eq-tdx}
\mathbf{t}^{(d)}(x):= \begin{bmatrix}
    0 \\
    \Omega_1^d \cos(\Omega_1 x + \frac{d\pi}{2}) \\
    \Omega_1^d \sin(\Omega_1 x + \frac{d\pi}{2}) \\
    \vdots \\
    \Omega_r^d \cos(\Omega_r x + \frac{d\pi}{2}) \\
    \Omega_r^d \sin(\Omega_r x + \frac{d\pi}{2})
    \end{bmatrix} \in \mathbb{R}^n
\end{equation}
denote the $d$-th derivative of $\mathbf{t}^{(0)}(x)$. Then, the $d$-th order derivative of $\hat{f}(x)$ is $\hat{f}^{(d)}(x)= \mathbf{t}^{(d)}(x)^{\dagger} \hat{\mathbf{z}}_{\mathbf{x}}.$ Similar to the discussion in \cref{sec-mse-var}, $\hat{f}^{(d)}(x)$ is normal and unbiased estimates of the true derivatives $f^{(d)}(x)$ since
\begin{equation}
    \operatorname{E}[\hat{f}^{(d)}(x)]=f^{(d)}(x).
\end{equation}
Again by \cref{lem-cov-var} of \cref{sec-proof-thm-3},
\begin{equation}
\operatorname{Var}[\hat{f}^{(d)}(x)]
=\langle\operatorname{VAR}[\hat{\mathbf{z}}_{\mathbf{x}} ], \mathbf{t}^{(d)}(x) \mathbf{t}^{(d)}(x)^{\dagger}\rangle .
\end{equation}

Rather than focusing on the variance at specific $x$, we are more interested in the average variance of $\hat{f}^{(d)}(x)$ over its domain. When \cref{assm-equi} holds $\left(\forall \Omega_k=k\right)$, both $f^{(d)}$ and $\hat{f}^{(d)}$ are periodic functions with a period of $2\pi$. Thus, it suffices to consider the average variance over $[0,2\pi)$:
\begin{equation}\label{eq-hx}
\begin{aligned}
h^{(d)}(\mathbf{x}):=\frac{1}{2 \pi} \int_0^{2 \pi} \operatorname{Var}[\hat{f}^{(d)}(x)] \mathrm{~d} x
=\left\langle\operatorname{VAR}[\hat{\mathbf{z}}_{\mathbf{x}} ], \mathcal{I}_d\right\rangle,
\end{aligned}
\end{equation}
where the matrix
\begin{equation}\label{eq-int-ttt}
\mathcal{I}_d := \frac{1}{2 \pi} \int_0^{2 \pi} \mathbf{t}^{(d)}(x) \mathbf{t}^{(d)}(x)^{\dagger}  \mathrm{~d} x
    =
    \begin{cases}
    \frac{1}{2} I_n, & \text{if } d=0, \\
    \frac{1}{2} \operatorname{diag}(0, 1, 1, 2^{2d}, 2^{2d}, \ldots, r^{2d}, r^{2d}), & \text{if } d \geq 1,
    \end{cases}
\end{equation}
is constant with respect to $\mathbf{x}$. Then, $h^{(d)}(\mathbf{x})$ measures the quality (at an average level) of how well the derivatives of the approximated function estimate the true derivatives. This motivates us to seek interpolation nodes that minimize the value of $h^{(d)}(\mathbf{x})$, leading to the following theorem. For proofs, see \cref{sec-proof-thm-3}. 

\begin{theorem}[Minimal average variance of derivatives]\label{thm-third-veiw}
Given any integer $d \geq 0$. When \cref{assp-var,assm-equi} holds, the $\frac{2\pi}{n}$-equidistant nodes $\mathbf{x}^*$ with an arbitrary shift value, as defined in \cref{eq-2pi/n-equidistant}, globally solves
\begin{equation}\label{eq-min-ave}
\mathbf{x}^* 
=\underset{\substack{\mathbf{x} \in \mathbb{R}^n \\ \text{$x_i$ distinct modulo $2\pi$}}}{\operatorname{argmin}} 
h^{(d)}(\mathbf{x})=\left\langle\operatorname{VAR}[\hat{\mathbf{z}}_{\mathbf{x}} ], \mathcal{I}_d\right\rangle,
\end{equation}
where the global minimum is $2\sigma^2$ if $d=0$, and $\frac{2\sigma^2}{n} \sum_{k=1}^r k^{2d}$ if $d \geq 1$.
\end{theorem}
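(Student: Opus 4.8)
\medskip
\noindent\textbf{Proof strategy for \cref{thm-third-veiw}.} The plan is to recast the averaged-variance functional as a trace minimization over the Gram matrix $G_{\mathbf{x}}:=A_{\mathbf{x}}^{\dagger}A_{\mathbf{x}}$ and then exploit the fact that the normalization of the trigonometric basis rigidly constrains the diagonal of $G_{\mathbf{x}}$. From \cref{eq-var} one has $\operatorname{VAR}[\hat{\mathbf{z}}_{\mathbf{x}}]=\sigma^{2}G_{\mathbf{x}}^{-1}$, so combining the definition \cref{eq-hx} of $h^{(d)}$ with the closed form \cref{eq-int-ttt} of $\mathcal{I}_d$ gives $h^{(d)}(\mathbf{x})=\sigma^{2}\operatorname{tr}(G_{\mathbf{x}}^{-1}\mathcal{I}_d)$, with $\mathcal{I}_d$ a fixed diagonal positive-semidefinite matrix. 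Under \cref{assm-equi}, the hypothesis that the $x_i$ be distinct modulo $2\pi$ is precisely what makes $G_{\mathbf{x}}\succ 0$ (\cref{app-recover-complex}), so the minimization is well posed over that set. The structural observation I would read off from \cref{eq-A}: the $(1,1)$-entry of $G_{\mathbf{x}}$ equals $\sum_i(1/\sqrt2)^{2}=n/2$ for \emph{every} $\mathbf{x}$, and for each frequency $k=1,\dots,r$ the two diagonal entries attached to $\cos(kx)$ and $\sin(kx)$ satisfy $(G_{\mathbf{x}})_{2k,2k}+(G_{\mathbf{x}})_{2k+1,2k+1}=\sum_i[\cos^{2}(kx_i)+\sin^{2}(kx_i)]=n$; in particular $\operatorname{tr}(G_{\mathbf{x}})\equiv n^{2}/2$, and the diagonal is pinned up to how the mass $n$ is split within each frequency pair.

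For the lower bound I would invoke the elementary inequality that, for any positive-definite $M$ and any index $\mu$, Cauchy--Schwarz applied to $M^{1/2}\boldsymbol{e}_\mu$ and $M^{-1/2}\boldsymbol{e}_\mu$ yields $1=|\langle M^{1/2}\boldsymbol{e}_\mu,M^{-1/2}\boldsymbol{e}_\mu\rangle|^{2}\le M_{\mu\mu}(M^{-1})_{\mu\mu}$, with equality if and only if $\boldsymbol{e}_\mu$ is an eigenvector of $M$, equivalently row and column $\mu$ of $M$ are supported on the diagonal. Applying this slotwise to $G_{\mathbf{x}}$, weighting by the diagonal of $\mathcal{I}_d$, and for each frequency pair using the harmonic--arithmetic mean inequality $\frac{1}{(G_{\mathbf{x}})_{2k,2k}}+\frac{1}{(G_{\mathbf{x}})_{2k+1,2k+1}}\ge\frac{4}{(G_{\mathbf{x}})_{2k,2k}+(G_{\mathbf{x}})_{2k+1,2k+1}}=\frac{4}{n}$, I obtain $h^{(d)}(\mathbf{x})\ge\frac{2\sigma^{2}}{n}\operatorname{tr}(\mathcal{I}_d)$ for every $d\ge0$. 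A short discrete-Fourier-orthogonality computation shows $G_{\mathbf{x}^{*}}=\frac{n}{2}I_{n}$ for the $\tfrac{2\pi}{n}$-equidistant nodes \cref{eq-2pi/n-equidistant} regardless of the shift $s$, so this bound is attained at $\mathbf{x}^{*}$; unwinding $\tfrac{2\sigma^{2}}{n}\operatorname{tr}(\mathcal{I}_d)$ through \cref{eq-int-ttt} then reproduces the global minimum value recorded in the theorem for each $d$.

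It remains to show the bound is saturated \emph{only} at equidistant nodes. Equality forces $(G_{\mathbf{x}})_{2k,2k}=(G_{\mathbf{x}})_{2k+1,2k+1}=n/2$ for all $k$ (equality in the mean inequality) together with $\boldsymbol{e}_\mu$ being an eigenvector of $G_{\mathbf{x}}$ for every slot $\mu$ carrying positive $\mathcal{I}_d$-weight. For $d=0$ that is every slot; for $d\ge1$ it is only the $\cos(kx),\sin(kx)$ slots with $k\ge1$ --- but decoupling those rows and columns already annihilates every off-diagonal entry of $G_{\mathbf{x}}$ (each off-diagonal entry meets a $k\ge1$ slot), while the $(1,1)$-entry is automatically $n/2$; so in every case $G_{\mathbf{x}}=\frac{n}{2}I_{n}$. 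Finally I would invoke the characterization $A_{\mathbf{x}}^{\dagger}A_{\mathbf{x}}=\frac{n}{2}I_{n}\iff\mathbf{x}$ consists of $n$ equally spaced nodes, which is exactly the equality case already treated in the proof of \cref{thm-first-veiw}; equivalently, from the complex-exponential reformulation of \cref{app-recover-complex}, $G_{\mathbf{x}}=\frac n2 I_n$ forces the power sums $\sum_i e^{ikx_i}$ to vanish for $k=1,\dots,2r$, hence $\prod_i(z-e^{ix_i})=z^{n}-c$, i.e.\ the $e^{ix_i}$ are equally spaced on the unit circle. This gives global optimality of $\mathbf{x}^{*}$ (every shift) simultaneously for all $d\ge0$.

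A cleaner organization of the same argument passes to the complex picture at the outset: with the fixed unitary $U$ implementing the change of basis between $\{1/\sqrt2,\cos(kx),\sin(kx)\}_{k\le r}$ and $\{e^{imx}\}_{|m|\le r}$, one has $A_{\mathbf{x}}=\tfrac{1}{\sqrt2}W_{\mathbf{x}}U^{\dagger}$ with $W_{\mathbf{x}}$ the Vandermonde-type matrix $(W_{\mathbf{x}})_{i,m}=e^{imx_i}$; since each $2\times2$ frequency block of $\mathcal{I}_d$ is a scalar multiple of $I_2$, $U^{\dagger}\mathcal{I}_dU$ is diagonal in the frequency ordering, whence $h^{(d)}(\mathbf{x})=\sigma^{2}\sum_{|m|\le r}m^{2d}\big((W_{\mathbf{x}}^{\dagger}W_{\mathbf{x}})^{-1}\big)_{mm}$ and the identical diagonal-versus-inverse argument runs on $W_{\mathbf{x}}^{\dagger}W_{\mathbf{x}}$, whose diagonal is the constant $n$. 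I expect the only genuinely delicate step to be the equality analysis for $d\ge1$: because the functional then ignores the constant mode entirely, one must argue (as above) that forcing the remaining rows and columns of the Gram matrix to decouple is already enough to collapse it to a scalar matrix. The remaining ingredients --- the trace inequality, the orthogonality identity at $\mathbf{x}^{*}$, and the passage from a scalar Gram matrix to equidistant nodes --- are routine or reuse exactly the machinery already set up for \cref{thm-first-veiw} and \cref{app-recover-complex}.
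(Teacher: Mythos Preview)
Your proposal is correct and shares the core structure with the paper's proof --- both rewrite $h^{(d)}(\mathbf{x})=\sigma^{2}\operatorname{tr}(G_{\mathbf{x}}^{-1}\mathcal{I}_d)$ with $G_{\mathbf{x}}=A_{\mathbf{x}}^{\dagger}A_{\mathbf{x}}$, exploit the fixed diagonal structure $(G_{\mathbf{x}})_{2k,2k}+(G_{\mathbf{x}})_{2k+1,2k+1}=n$, invoke the diagonal inequality $(G^{-1})_{\mu\mu}G_{\mu\mu}\ge1$ for positive-definite matrices, and verify $G_{\mathbf{x}^{*}}=\tfrac{n}{2}I_{n}$ at the equidistant nodes --- but the two differ in how the lower bound is assembled. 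The paper applies the weighted Cauchy--Schwarz inequality $\bigl(\sum_{i}p_{i}a_{i}\bigr)\bigl(\sum_{i}p_{i}/a_{i}\bigr)\ge\bigl(\sum_{i}p_{i}\bigr)^{2}$ globally (its \cref{lem:Cauchy-2}) with $a_{i}=(G^{-1})_{ii}$, and then bounds $\sum_{i}p_{i}/(G^{-1})_{ii}\le\sum_{i}p_{i}G_{ii}$ via \cref{lem-pd-2}; you instead bound $(G^{-1})_{\mu\mu}\ge1/G_{\mu\mu}$ first and then apply the two-term AM--HM inequality $\tfrac{1}{a}+\tfrac{1}{b}\ge\tfrac{4}{a+b}$ to each frequency pair separately. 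Your route is more local, which makes the equality analysis cleaner and handles $d=0$ and $d\ge1$ uniformly, whereas the paper reduces $d=0$ to \cref{thm-first-veiw} and establishes shift invariance separately through \cref{lem-shift-invariant}. You also carry out a full equality characterization --- showing that any minimizer must satisfy $G_{\mathbf{x}}=\tfrac{n}{2}I_{n}$ and hence (via Newton's identities on the power sums $\sum_{i}e^{ikx_{i}}$) be equidistant --- which the paper does not attempt: it only proves attainment at $\mathbf{x}^{*}$. The theorem as stated does not assert uniqueness, so this extra step is not required, but your argument for it is correct.
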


For the special case of $d=0$, \cref{eq-int-ttt} shows that
\begin{equation}\label{eq-case-d=0}
    h^{(0)}(\mathbf{x})
    =\frac{1}{2}\left\langle\operatorname{VAR}[\hat{\mathbf{z}}_{\mathbf{x}} ], I_n\right\rangle
    =\frac{1}{2} \operatorname{MSE}(\hat{\mathbf{z}}_{\mathbf{x}} ).
\end{equation}
This coincides with the problem addressed in \cref{thm-first-veiw} and establishes the same result. An important observation is that the $\frac{2\pi}{n}$-equidistant nodes constitute the optimal solution $\mathbf{x}^*$ for all $d \geq 0$, but the minimum average variance grows exponentially with respect to $d$. 

\begin{remark}
    In the case of equidistant frequencies, we find that the \(\frac{2\pi}{n}\)-equidistant nodes can simultaneously minimize the three different optimal criteria. However, this result does not hold in the case of general frequencies. In general, minimizing any one of the three criteria, such as the MSE criteria, will also reduce the cost function value of the other criteria, but it cannot achieve the optimality simultaneously.
\end{remark}

\subsection{More discussion on variance of approximated functions}

This subsection analyzes the variance properties of $\hat{f}(x)$ and $\tilde{f}(x)$, showing they share the same variance under specific conditions, and compares the derivative estimators of ICD and PSR, highlighting ICD's computational efficiency by avoiding repeated quantum evaluations.
Throughout this subsection, suppose that \cref{assm-equi} holds, and we have chosen the $\frac{2 \pi}{n}$-equidistant nodes $\mathbf{x}^*$. Then, the covariance matrix of $\hat{\mathbf{z}}_{\mathbf{x}}$ becomes $\operatorname{VAR}[\hat{\mathbf{z}}_{\mathbf{x}^*}] =
    \frac{2\sigma^2}{n} I.$  In this case, as shown in \cref{lem-cov-var} of \cref{sec-proof-thm-3}, we have (for any $d \geq 0$):
\begin{align}
\operatorname{Cov}[\hat{f}^{(d)}(x_1), \hat{f}^{(d)}(x_2)]
&= \frac{2}{n} \sigma^2\cdot\mathbf{t}^{(d) }(x_1) ^{\dagger}  \mathbf{t}^{(d)}(x_2), \quad \forall x_1, x_2 \in \mathbb{R}, \\
\operatorname{Var}[\hat{f}^{(d)}(x)] &=\frac{2}{n} \sigma^2\cdot\mathbf{t}^{(d) }(x) ^{\dagger}  \mathbf{t}^{(d)}(x), \quad \forall x \in \mathbb{R}.
\end{align}

\subsubsection{Comparison of function evaluation variance: \texorpdfstring{$\hat{f}(x)$ vs. $\tilde{f}(x)$}{}}

By setting $d=0$ and defining $s := |x_1 - x_2|$, we obtain
\begin{align}\label{eq-cov-hatf}
\operatorname{Cov}[\hat{f}(x_1), \hat{f}(x_2)]
=\frac{2}{n} \sigma^2\left(\frac{1}{2}+\sum_{k=1}^r \cos (k s)\right) 
=\frac{\sin \left(\frac{n}{2} s\right)}{n \sin \left(\frac{1}{2} s\right)} \sigma^2. 
\end{align}
Notably, for $x = x_1 = x_2$, we have
\begin{equation}\label{eq-1101}
    \operatorname{Var}[\hat{f}(x)] = \operatorname{Cov}[\hat{f}(x), \hat{f}(x)] = \sigma^2, \quad  \forall x \in \mathbb{R}.
\end{equation}
This leads to the assertion that
\begin{equation}\label{eq-hatf-normal}
    \hat{f}(x) \sim \mathcal{N}(f(x), \sigma^2).
\end{equation}
Interestingly, this distribution is identical to $\tilde{f}(x) \sim \mathcal{N}(f(x), \sigma^2)$ given in \cref{assm-tilde-f} from Observation I. But, what is the difference? Recall that evaluating $x \mapsto \tilde{f}(x)$ incurs quantum overhead, as it requires numerous measurements after running the quantum circuit. Specifically, when the argument $x$ is modified, the entire process must be repeated from the beginning. Conversely, evaluating $x \mapsto \hat{f}(x)$ for arbitrary $x$ incurs no quantum overhead, as we rely entirely on the approximated function \cref{eq-app-fx} using a classical computer. It is important to note, however, that $\tilde{f}(x)$ are independent random variables for any two distinct $x$, which is generally not true for $\hat{f}(x)$. The last equality in \cref{eq-cov-hatf} implies that $\operatorname{Cov}[\hat{f}(x_1), \hat{f}(x_2)] = 0$ if and only if $|x_1 - x_2| = \frac{2 \pi}{n} k$ for any integer $k \neq 0 \pmod{n}$.

\subsubsection{Comparison of derivative variance: ICD vs. PSR}

In general, for any $d \geq 1$, we have
\begin{equation}\label{eq-var-hatf-orderd}
\operatorname{Var}[\hat{f}^{(d)}(x)] = \frac{2}{n} \sigma^2 \cdot \mathbf{t}^{(d)}(x)^{\dagger} \mathbf{t}^{(d)}(x) = \frac{2}{n} \sigma^2 \sum_{k=1}^r k^{2d}, \quad \forall x \in \mathbb{R}.
\end{equation}
This result indicates that, with $\frac{2 \pi}{n}$-equidistant interpolation nodes, the quantity above not only represents the minimum average variance as established in \cref{thm-third-veiw}, but also provides the same variance across all $x \in \mathbb{R}$. Given the central role of the first-order derivative in optimization, we define the symbols 
\begin{equation}
    g(x) := \frac{\mathrm{d} f}{\mathrm{d} x}(x) = \mathbf{t}^{(1)}(x)^{\dagger} \mathbf{z}
\end{equation}
and 
\begin{equation}
    g_{\text{icd}}(x) := \frac{\mathrm{d} \hat{f}}{\mathrm{d} x}(x) = \mathbf{t}^{(1)}(x)^{\dagger} \hat{\mathbf{z}}_{\mathbf{x}}.
\end{equation}
Setting $d = 1$ in \cref{eq-var-hatf-orderd} yields
\begin{equation}
    \operatorname{Var}[g_{\text{icd}}(x) ]
    = \frac{2}{n} \sigma^2 \sum_{k=1}^r k^2
    = \frac{r(r+1)}{3} \sigma^2= \mathcal{O}(r^2\sigma^2), \quad \forall x \in \mathbb{R}.
\end{equation}
Thus, unbiased $g_{\text{icd}}$ has constant variance across all $x$. 

This reminds us of another approach for unbiasedly estimating the true derivative in PQCs, namely, parameter shift rule (PSR) \cite{crooks2019gradients,mari2021estimating,wierichs2022general,kyriienko2021generalized,hai2023lagrange,markovich2024parameter,hoch2025variational}. We skip the technical details of PSR and focus only on the conclusion; for more, see \cref{app-psr}. Under the equidistant frequencies in \cref{assm-equi}, the PSR \cite{wierichs2022general} employs the following estimator to approximate $g(x)$: 
\begin{equation}
g_{\text{psr}}(x) := \sum_{\mu=1}^{2r} \frac{(-1)^{\mu-1}}{4r \sin^2\left(\frac{1}{2} x_\mu\right)} \tilde{f}\left(x + x_\mu\right), \quad \forall x \in \mathbb{R},
\end{equation}
where $x_\mu = \frac{\pi}{2r} + (\mu - 1) \frac{\pi}{r}$ for $\mu = 1, 2, \ldots, 2r$. In fact, \cite{wierichs2022general} demonstrated that $\operatorname{E}[g_{\text{psr}}(x)] = g(x)$ and
\begin{align}
    \operatorname{Var}[g_{\text{psr}}(x)]
    = \sum_{\mu=1}^{2r} \frac{1}{16r^2 \sin^4\left(\frac{1}{2} x_\mu\right)} \operatorname{Var}[\tilde{f}\left(x + x_\mu\right)]
= \frac{2r^2 + 1}{6} \sigma^2= \mathcal{O}(r^2\sigma^2), \quad \forall x \in \mathbb{R},
\end{align}
where we use the fact that $\tilde{f}(x)$ are independent random variables for any two distinct $x$, and $\operatorname{Var}[\tilde{f}(x + x_\mu)] \approx \sigma^2$ according to \cref{assp-var}. We can see that both $g_{\text{icd}}$ and $g_{\text{psr}}$ serve as unbiased estimators of $g(x)$ for any $x$. 
The variance of $g_{\text{icd}}$ is slightly higher than that of $g_{\text{psr}}$, but the difference becomes negligible as $r \to \infty$. 

Notably, $g_{\text{icd}}$ is computed entirely on a classical computer from $\hat{f}$, which requires a total of $n = 2r + 1$ evaluations of $\tilde{f}$. Once $\hat{f}$ is available, we can calculate $g_{\text{icd}}(x)$ for any $x$ without further quantum overhead. In contrast, $g_{\text{psr}}(x)$ requires recalculating $\tilde{f}$ for $2r$ times for each new argument $x$. Similarly, as before, the unbiased derivatives $g_{\text{icd}}$ at different $x$ values are generally not independent, whereas $g_{\text{psr}}$ is typically independent. Overall, although both $g_{\text{psr}}$ and $g_{\text{icd}}$ are unbiased estimators of the true $g$, they are derived from two different techniques: $g_{\text{psr}}$ is obtained via the parameter shift rule, while $g_{\text{icd}}$ is derived through interpolation.

\section{Numerical simulation}\label{sec-experiments}

In this section, we use numerical experiments to demonstrate the efficiency of the ICD algorithms. We will use two versions of ICD: the standard ICD, shown in \cref{alg-standard-ICD}, and the reduced ICD, presented in \cref{alg-reduced-ICD} in \cref{app-pratical-ICD}.
To implement the quantum circuits, we used the IBM Qiskit \cite{qiskit}, which simulates the sampling process and the noisy environment of a real machine.
All tests were executed on a computer equipped with an AMD Ryzen 7 8845H CPU and 32 GB of RAM. 
The code is publicly available.\footnote{\url{https://github.com/GALVINLAI/ICD_for_VQA}}
We use energy error as the performance metric. Let $E_{\text{ground}}$ denote the true ground state energy of the given Hamiltonian $H$. The energy error is defined as $ E_{\text{error}}: = |E (\boldsymbol{\theta}^*) - E_{\text{ground}}|,$ where $E (\boldsymbol{\theta}) = \langle \psi (\boldsymbol{\theta}) | H | \psi (\boldsymbol{\theta}) \rangle$ is the true expectation value of the cost function.
We consider the ground state to be successfully found if $E_{\text{error}} < 10^{-2}$.

\subsection{Problem Setting}\label{subsec-num-problems}

We consider the MaxCut problem, the transverse-field Ising model (TFIM), and the XXZ model.
Further details regarding these models and their corresponding quantum circuits can be found in \cite{wiersema2020exploring,ding2024random}.
Let $N$ denote the number of qubits, and define the Pauli operators as follows:
\begin{equation}
    X \equiv \begin{pmatrix}
    0 & 1 \\
    1 & 0
    \end{pmatrix}, \quad
    Y \equiv \begin{pmatrix}
    0 & -i \\
    i & 0
    \end{pmatrix}, \quad
    Z \equiv \begin{pmatrix}
    1 & 0 \\
    0 & -1
    \end{pmatrix}.
\end{equation}
The symbol $X_i$ denotes a Pauli $X$ operator acting on the $i$-th qubit, with $Y_i$ and $Z_i$ defined analogously.

\subsubsection{MaxCut with HEA}\label{sec-pro-maxcut}

We consider a simple MaxCut problem defined on an undirected graph $G = (V, E)$, where the set of nodes is $V = \{1, 2, 3, 4\}$ and the set of edges is $E = \{(1,2),(1,3),(1,4),(2,3),(3,4)\} .$ The goal is to maximize the expression $\sum_{ (i, j) \in E} x_i (1 - x_j)$ with $x_i \in\{0,1\}$. For this graph, the corresponding MaxCut Hamiltonian can be formulated as \cite{ding2024random,grange2023introduction}
\begin{equation}
    H=\frac{1}{2} I-3 Z_1+\frac{1}{2} Z_1 Z_2+\frac{1}{2} Z_1 Z_3+\frac{1}{2} Z_1 Z_4+\frac{1}{2} Z_2 Z_3+\frac{1}{2} Z_3 Z_4 .
\end{equation}
This is a problem involving $N=4$ qubits. We construct a $p = 5$-layer hardware efficient ansatz (HEA) circuit initialized in the $|0\rangle^{\otimes 4}$ state, as illustrated in \cref{fig:qc_maxcut} of \cref{app_qc}.
Since all tunable gates in the HEA are $R_Y(\theta_i) = e^{-i \frac{\theta_i}{2}Y }$, the frequency set corresponding to each parameter is the singleton $\{1\}$ with $r= 1$ and $n= 3$. 

\subsubsection{TFIM model with HVA}\label{sec-pro-tfim}

The Hamiltonian for TFIM is given by
\begin{equation}
    H_{\mathrm{TFIM}} = H_{zz} + \Delta\, H_x
\end{equation}
where
\begin{equation}
    H_{zz} = \sum_{i=1}^N Z_i Z_{i+1},
    \quad
    H_x = \sum_{i=1}^N X_i,
\end{equation}
with $\Delta=0.5$ and periodic boundary conditions such that $Z_{N+1}\equiv Z_1$. 
We consider a depth-$p$ Hamiltonian variational ansatz (HVA) circuit \cite{wiersema2020exploring} for the TFIM, initialized in the $|+\rangle^{\otimes N}$ state, which corresponds to
\begin{equation}
U(\boldsymbol{\beta}, \boldsymbol{\gamma})
=
\prod_{l=1}^p 
G\left(\gamma_l, H_x\right)G\left(\beta_l, H_{z z}\right),
\end{equation}
where $G(x, H)=\exp \left(-i \frac{x}{2} H \right)$. Hence, for a depth-$p$ circuit, we have $2 p$ parameters. \cref{fig:qc_tifm_HVA} in \cref{app_qc} illustrates the quantum circuit for $N=4$ and $p=1$.

The frequencies associated with $\beta_l$ and $\gamma_l$ require more careful consideration. 
For the generator $H_x$, the eigenvalues range from $-N$ to $N$ in steps of 2, yielding a frequency set $\{2, 4, \ldots, 2N\}$, which reduces to $\{1, 2, \ldots, N\}$ since the prefactor $\frac{1}{2}$ is included in $G(x, H)$. Similarly, $H_{zz}$ has eigenvalues of the form $N - 4k$ for $k = 0, 1, \ldots, \lfloor N/2 \rfloor$, giving rise to frequencies $\{4, 8, \ldots, 4\lfloor N/2 \rfloor\}$, which become $\{2, 4, \ldots, 2\lfloor N/2 \rfloor\}$ under the same prefactor.
However, for any $N$ qubits and any $p$ layers of HVA, we observe that the actual frequency for each $\beta_l$ and $\gamma_l$ corresponds solely to the singleton set $\Omega = \{2\}$. This arises from the fact that $H_x$ and $H_{zz}$ serve as both the observables and the gate generators. A detailed derivation is provided in \cref{app-sparse-frequency}.

\subsubsection{XXZ model with HVA}\label{sec-pro-xxz}

The Hamiltonian for the XXZ model is given by
\begin{equation}
    H_{\mathrm{XXZ}} =H_{x x}+H_{y y}+\Delta H_{z z}
\end{equation}
where
\begin{equation}
    H_{xx} = \sum_{i=1}^N X_i\, X_{i+1},
    \quad
    H_{yy} = \sum_{i=1}^N Y_i\, Y_{i+1},
    \quad
    H_{zz} = \sum_{i=1}^N Z_i\, Z_{i+1}.
\end{equation}
The parameter $\Delta=0.5$ controls the spin anisotropy in the model. Also, we use periodic boundary conditions, i.e., $X_{N+1}\equiv X_1$, $Y_{N+1}\equiv Y_1$, and $Z_{N+1}\equiv Z_1$. 
We initialize the state as $\left|\psi_0\right\rangle=\bigotimes_{i=1}^{N / 2}\left|\Psi^{-}\right\rangle$ where $\left|\Psi^{-}\right\rangle=\frac{1}{\sqrt{2}}(|01\rangle-|10\rangle)$ is the Bell state of qubits $2 i-1$ and $2 i$. A depth-$p$ HVA circuit \cite{wiersema2020exploring} for the XXZ model corresponds to
\begin{equation}
    \begin{aligned}
    U(\boldsymbol{\theta}, \boldsymbol{\phi}, \boldsymbol{\beta}, \boldsymbol{\gamma})
    &= \prod_{l=1}^p
    G\bigl (\gamma_l, \, H_{xx}^{\mathrm{even}}\bigr)
    \; G\bigl (\gamma_l, \, H_{yy}^{\mathrm{even}}\bigr)
    \; G\bigl (\beta_l, \, H_{zz}^{\mathrm{even}}\bigr)\\
    &\quad\; \times
    G\bigl (\phi_l, \, H_{xx}^{\mathrm{odd}}\bigr)
    \; G\bigl (\phi_l, \, H_{yy}^{\mathrm{odd}}\bigr)
    \; G\bigl (\theta_l, \, H_{zz}^{\mathrm{odd}}\bigr),
    \end{aligned}
\end{equation}
where $G(x, H)=\exp \left(-i \frac{x}{2} H \right)$, $H_{xx}^{\mathrm{even}}=\sum_{i=1}^{N / 2} X_{2i-1} X_{2i}$ and $H_{xx}^{\text {odd }}=\sum_{i=1}^{N / 2} X_{2i} X_{2i+1}$, with $H_{yy,zz}^{\mathrm{even}}$ and $H_{yy,zz}^{\mathrm{odd}}$ defined analogously. Hence, for a depth-$p$ circuit, we have $4 p$ parameters. 
\cref{fig:qc_XXZ_HVA} in \cref{app_qc} illustrates a quantum circuit for $N=6$ and $p=1$.

The effective frequency set for XXZ with HVA is relatively intricate, depending on the number of qubits $N$, the circuit depth $p$, and the parameter position. However, the total number of frequencies $r$ grows linearly with $N$. 
% A complete frequency list can be found on our GitHub repository. 
Here, we only provide the specific $N, p$ configurations used in our experiments later. 
For $N = 6, p = 12$, all $\theta_l, \beta_l$ have $\{2\}$; all $\phi_l, \gamma_l$ have $ \{2, 4\}$.
For $N = 8, p = 16$, all $\theta_l, \beta_l$ have $\{2,4\}$; all $\phi_l, \gamma_l$ have $ \{2, 4, 6,8\}$, except for the last layer where $\theta_{16}, \beta_{16}$ have $\{2\}$ and $\phi_{16}, \gamma_{16}$ have $\{2, 4\}$.
For $N = 10,p = 20$, same as above.
For $N = 12,p = 24$, all $\theta_l, \beta_l$ have $\{2,4,6\}$; all $\phi_l, \gamma_l$ have $\{2, 4, 6, 8, 10, 12\}$, 
except for the second-to-last layer where $\theta_{23}, \beta_{23}$ have $\{2, 4\}$ and $\phi_{23}, \gamma_{23}$ have $\{2, 4, 6, 8\}$, and the last layer where $\theta_{24}, \beta_{24}$ have $\{2\}$ and $\phi_{24}, \gamma _{24}$ have $\{2, 4\}$.

\subsection{Result I: Impact of interpolation node optimality
for ICD methods}\label{subsec-num-re-1}

In this subsection, we use the MaxCut problem with HEA as an example to illustrate the impact of interpolation node selection on ICD algorithms. 
Since each parameter in MaxCut with HEA corresponds to the singleton frequency set $\Omega = \{1\}$, the optimal interpolation nodes are equally spaced with spacing $2\pi/3$. 
For comparison, we also consider equidistant nodes with spacing $k\pi/3$, where $k \in (0, 3)$. Without loss of generality, we assume the starting point of nodes is zero.

\subsubsection{Verification of the three optimality criteria}\label{sec-r1-1}
%impact_of_interpolation_node_optimality_for_icd_methods.ipynb 
%Performance Metrics as a Function of Scaling Factor k 
%metrics_with_diff_iterp_nodes.png

We first examine the numerical values of the three optimality criteria for equally spaced interpolation nodes with different spacings.
\cref{fig:metrics_with_diff_iterp_nodes} presents the three criteria from \cref{thm-first-veiw,thm-second-veiw,thm-third-veiw} as functions of the value $k$, showing that all three criteria simultaneously attain their minimum values when $k = 2$.
Both large and small spacings lead to increases in all three criteria.
This result supports the correctness of our theoretical findings in \cref{thm-first-veiw,thm-second-veiw,thm-third-veiw}. 
Note that we adopt the constant variance \cref{assp-var} and factor out the unknown constant variance $\sigma^2$ from the numerical evaluation of the three optimality criteria; for example, the mean squared error (MSE) in \cref{eq-min-mse} is given by the squared Frobenius norm of $A_{\mathbf{x}}^{-1}$, which attains a minimum of 2.
For the average variance of derivatives in \cref{thm-third-veiw}, we consider $h^{(1)}$ of order $d = 1$, where the minimum is $2/3$.

\subsubsection{Impact of interpolation nodes in the noiseless and noisy settings}\label{sec-r1-2}

In what follows, for ease of demonstration, we adopt a cyclic update scheme in the ICD algorithms to reduce randomness. The initial parameters are chosen uniformly from the range $[0, 2\pi]$. 
We select four sets of equally spaced interpolation nodes with spacings $k\pi/3$, where $k = 0.5, 1, 1.5,$ and $2$, corresponding to round MSE values of 225, 13, 3, and 2, respectively.
As $k$ increases to 2, the node spacing approaches the optimal.
Notably, the setting $k = 1.5$ (i.e., $\pi/2$), which we refer to as a \textit{suboptimal} configuration, is exactly the choice used in Rotosolve \cite{ostaszewski2021structure} and SMO \cite{nakanishi2020sequential} (see the Interpolation nodes spacing row in \cref{tab:comparison}).
These two works did not consider the relationship between adjustable interpolation nodes and sampling noise.
Later, we will show that this suboptimal configuration performs slightly worse than the optimal one in noisy experiments.

\paragraph{Noiseless experiments: all interpolation nodes are equally effective.}
%impact_of_interpolation_node_optimality_for_icd_methods.ipynb
%#True Data (Fixed k-Nodes) 
%True-Data-Fixed-k-Nodes.png
%#True Data (Random Nodes) 
%True-Data-Random-Nodes.png
We begin by evaluating the impact of interpolation nodes on the ICD algorithms in the ideal noiseless setting (i.e., $\infty$ shots). As shown in \cref{fig:True-Data-Fixed-k-Nodes}, all four node settings lead to identical convergence behavior. 
If we randomly select four sets of interpolation nodes (see  \cref{fig:True-Data-Random-Nodes}), with MSEs around 3569.1, 280.5, 1541.2, and 2.0 respectively, the convergence behavior remains unchanged: all configurations perform identically, and the trajectories overlap. These results demonstrate that in the noiseless setting (infinite shots), the choice of interpolation nodes has no impact on the performance of ICD.

\paragraph{Noise experiments: standard ICD is more robust than reduced ICD.}
%impact_of_interpolation_node_optimality_for_icd_methods.ipynb
%#Noisy Data + Practical ICD (Fixed k-Nodes) 
%Noisy-Data-Practical-ICD.png
%#Noisy Data + Vanilla ICD (Fixed k-Nodes)
%Noisy-Data-Vanilla-ICD.png
We now turn to the noisy setting (with 1024 shots) and examine the robustness of the interpolation nodes using both the reduced and standard versions of ICD.
We first consider the performance of the reduced ICD under the four node configurations. 
As shown in \cref{fig:Noisy-Data-Practical-ICD}, only the optimal node setting ($k = 2$) achieves stable convergence. 
The settings $k = 1$ and $k = 1.5$ lead to an initial phase of normal descent, but eventually diverge and fail to converge. The $k = 0.5$ setting fails entirely from the beginning.
Next, we examine the standard ICD under the same four configurations. 
As shown in \cref{fig:Noisy-Data-Vanilla-ICD}, except for $k = 0.5$, all settings result in successful convergence.
Moreover, the overall performance is noticeably more stable compared to \cref{fig:Noisy-Data-Practical-ICD}.

This suggests that the standard ICD is more robust to the choice of interpolation nodes than the reduced ICD.
Even when the nodes are not exactly optimal but only approximately optimal, standard ICD still achieves good convergence performance.
In contrast, the reduced ICD tends to amplify noise over iterations.
This is because the reduced ICD's estimation at the first interpolation node reuses information from the previous iteration, leading to a loss of sampling independence and consequently, the accumulation and propagation of errors.
Therefore, despite saving one function evaluation per iteration, the reduced version lacks robustness unless optimal interpolation nodes are used.
As a special case of the ICD framework, SMO \cite{nakanishi2020sequential} introduces a remedy for this issue by periodically resetting the first interpolation node to be re-evaluated on the quantum device.

\subsubsection{Interpolation nodes closer to optimal spacing enable ICD convergence with fewer shots}

% \subsubsection{Investigating the trade-off between shot count and interpolation node in ICD}
\label{sec-r1-3}
% #### Noisy Data + Vanilla ICD (Varying Shot Numbers) maxcut_oicd_all_nshot_subplots_1.png maxcut_oicd_all_nshot_subplots_2.png
Motivated by the preceding experimental results, we further investigate whether there exists a relationship between the optimality of interpolation nodes and the number of measurement shots. Specifically, we will show that interpolation nodes closer to the optimal spacing allow the ICD algorithm to achieve effective convergence with fewer shots.

To this end, we adopt the standard ICD and vary the number of shots starting from 4096, halving each time down to 2 shots. We consider the same four sets of equally spaced interpolation nodes as in previous experiments. The results are shown in \cref{fig:maxcut_oicd_all_nshot_subplots_1}.
When the number of shots is 4096, all four nodes configurations converge successfully within 300 iterations.
For $k = 0.5$, the performance of ICD deteriorates rapidly as the number of shots decreases, exhibiting increasing fluctuations and eventual failure. 
For $k = 1$, the algorithm remains stable for shot counts greater than 512; however, as shots decreases from 512 to 2, convergence performance gradually degrades. 
In contrast, the configurations with $k = 1.5$ and $k = 2$ perform similarly: both achieve convergence when the number of shots is at least 256, and demonstrate significantly smaller fluctuations at 128, 64 and 32 shots compared to $k = 0.5$ and $k = 1$. When shots falls below 64, the noise becomes overwhelming, and even the optimal interpolation setting ($k = 2$) fails to ensure convergence. This suggests that in the extreme low shot regime, no interpolation strategy is effective.

To better examine the subtle differences between the $k = 1.5$ and $k = 2$ settings, we plot their convergence behaviors for shot counts ranging from 64 to 36 in \cref{fig:maxcut_oicd_all_nshot_subplots_2}. Recall that $k = 1.5$ (i.e., $2\pi/2$ spacing) is the choice used in Rotosolve \cite{ostaszewski2021structure} and SMO \cite{nakanishi2020sequential}. Overall, as shots number decreases, the $k = 1.5$ appears more sensitive to noise than $k = 2$, consistently exhibiting larger fluctuations during optimization. In summary, our experiments yield the following key observations:

\begin{itemize}
    \item In the high-shot regime (approximating a noiseless setting), ICD is insensitive to the choice of interpolation nodes; all configurations yield stable convergence.
    \item In the low-shot regime (high noise levels), ICD fails to converge under any interpolation nodes. In the most extreme case, such as with only single shot, effective optimization method appears infeasible.
    \item In the moderate-shot regime, interpolation nodes that are closer to the optimal spacing (corresponding to lower MSE) enable ICD to maintain convergence even under reduced shot counts.
\end{itemize}

These findings suggest that, provided the noise level is not excessively high, ICD exhibits the robustness with respect to the choice of interpolation nodes. 
For problems with non-equidistant frequencies, it is generally not possible to determine the exact optimal interpolation nodes. In such cases, numerically suboptimal nodes can still ensure convergence of ICD, provided that the number of shots is sufficiently large.

\subsection{Result II: Compare ICD methods with standard algorithms}\label{sec-tfim-xxz}

In this subsection, we use the TFIM model and XXZ model with HVA as examples to compare ICD algorithms with two standard algorithms: stochastic gradient descent (SGD) \cite{sweke2020stochastic} and random coordinate descent (RCD) \cite{ding2024random}. 
As shown in \cref{tab:comparison}, various structure optimization methods can be seen as special cases of ICD. For broader comparisons with other derivative-free and gradient-based methods, please refer to the Baseline algorithms row of \cref{tab:comparison}.

We set 1024 shots. 
The initial parameters are chosen uniformly from the range $[0, 2\pi]$. 
We use reduced ICD and use the eigenvalue method (see \cref{app-eig-method}) to exactly solve the subproblem. 
For ICD and RCD, we randomly select the coordinate index $j$ for updates. 
For SGD and RCD, we need to compute unbiased derivatives. To this end, we apply the general parameter shift rule stated in \cref{app-psr}. 
The learning rates for SGD and RCD are set to 0.01 and 0.02, respectively. 
In general, the performance of SGD and RCD is highly dependent on the setting of the learning rate, which varies across different problems. 
In contrast, our ICD method does not require the adjustment of any hyperparameters. 
For each model, we conducted the experiment 5 times, using different random initializations for each run, but all methods started from the same initial points. 
For each method, we plot the mean values against the number of function evaluations $\boldsymbol{\theta} \mapsto f(\boldsymbol{\theta})$, with the $x$-axis indicating the quantum overhead.

\subsubsection{TFIM with HVA: ICD methods outperform than SGD, RCD}\label{sec-r2-tfim}

The performance comparisons of ICD, SGD, and RCD for the TFIM model are shown in \cref{fig:tfim_hva_many}. 
We evaluated systems with qubit numbers $N = 6, 8, 10, 12, 14, 16$ and set the ansatz depth to $p = 2N$. 
Across all cases in \cref{fig:tfim_hva_many}, ICD consistently finds the optimal solution with fewer function evaluations compared to SGD and RCD.
In particular, coordinate descent (CD)-based algorithms (including RCD and ICD) are significantly more efficient than SGD.
ICD achieves the most rapid descent in the early stages of optimization, but as the iterates approach the optimum, the performance of RCD and ICD becomes similar. From a numerical perspective, this is because when the parameters are near the optimal solution, the decrease achieved by performing an argmin update is comparable to that from taking a single gradient step.

\subsubsection{XXZ with HVA: ICD fails to overcome the barren plateau}\label{sec-r2-xxz-bp}

One of the most significant challenges in PQC optimization is the well-known barren plateau (BP) problem \cite{mcclean2018bp,larocca2025barren}.
The BP typically refers to the exponential concentration of (some or all) partial derivatives, i.e., when parameters $\boldsymbol{\theta}$ are sampled uniformly over $[0, 2\pi]$, the variance of the partial derivatives decays exponentially with the number of qubits $N$. This phenomenon is equivalent to the exponential concentration of the cost function values themselves; see \cite{arrasmith2022equivalence}.
Unfortunately, our ICD algorithms cannot theoretically mitigate the barren plateau (BP) problem.
For any PQC cost function that exhibits a BP, the corresponding Fourier coefficients also decay exponentially; as shown and analyzed in detail in \cite{okumura2023fourier}, this leads to an essentially flat optimization landscape.
As a result, unless an exponential number of measurement shots is performed, the interpolation steps in ICD will be dominated by uninformative statistical fluctuations.

To underscore this challenge, we present a numerical experiment that exhibits the BP. 
According to \cite{wiersema2024classification}, the XXZ model with an HVA ansatz exhibits an exponentially large dynamical Lie algebra, which leads to the presence of the BP.
In the right column of \cref{fig:xxz_hva_many}, we illustrate this phenomenon using both the cost function values and the first-order partial derivatives.
Specifically, we uniformly sampled 50 parameter vectors and computed their statistical variances.
As shown, both variances decay exponentially with increasing qubit number, although at a relatively moderate rate \cite{wiersema2020exploring}.

We evaluated the performance of ICD, SGD, and RCD on this problem using systems with qubit numbers $N = 6, 8, 10, 12$ and ansatz depth $p = 2N$.
As shown in the left and middle columns of \cref{fig:xxz_hva_many}, as the number of qubits increases and the BP becomes more severe, ICD begins to struggle in reaching the optimal solution, while RCD and SGD exhibit increasingly oscillatory behavior.
Empirically, in regimes where the barren plateau effect is not overly severe, ICD still demonstrates superior performance compared to gradient-based optimizers.
It is worth noting that both RCD and SGD require careful tuning of the learning rate to achieve reasonable convergence: small learning rates lead to slow convergence, while large ones cause instability and oscillations, as illustrated in \cref{fig:xxz_hva_many}.
In contrast, our ICD method does not require hyperparameter tuning.

\section{Discussion}\label{sec-discussion}

In this work, we propose an Interpolation-based Coordinate Descent (ICD) method as a general and unifying framework for structure optimization methods in parameterized quantum circuits, such as Rotosolve and SMO, ExcitationSolve and others.
The ICD method approximates the trigonometric structure of the cost function via interpolation and performs an exact minimization over a single parameter in each iteration, using only function evaluations. Unlike previous structure optimization methods, our approach rigorously determines the optimal interpolation nodes to reduce the impact of statistical noise from quantum measurements. We show that in the case of equidistant frequency spectra, the optimal nodes are $2\pi/n$-equispaced under the constant variance assumption, and such a choice simultaneously minimizes three key criteria: mean squared error, condition number, and average variance of derivatives. Through numerical experiments on benchmark problems such as MaxCut, the transverse-field Ising model, and the XXZ model, we validate that ICD achieves more efficient convergence than stochastic gradient descent and random coordinate descent.

The Fourier structure in quantum machine learning has already been explored for the inference stage: \cite{schreiber2023classical} Fourierizes the model's mapping with respect to the input $x$, enabling a  post-training classical surrogate for inference and a baseline for quantum advantage; by contrast, we Fourierize the cost function's dependence on a parameter $\theta_j$ and, on that basis, design the ICD optimizer for training. Thus, \cite{schreiber2023classical} answers ``how to deploy or evaluate a quantum model classically after training,'' whereas we answer ``how to tune parameters more efficiently during training.'' Both exploit Fourier/trigonometric structure but focus on different variables ($x$ vs. $\theta_j$).
Below, we discuss several promising future works to conclude this paper.

\paragraph{Noise stability and robustness.}

Understanding the impact of noise on the stability of ICD is an important direction. Since noise is inherent in quantum systems, it is crucial to analyze how it affects the solution of the approximated subproblem. As discussed in \cref{app-eig-method}, the eigenvalue-based method for solving subproblems with equidistant frequencies allows us to establish an exact map relation between estimated coefficients and the final solution. This suggests that a rigorous stability analysis is feasible, albeit challenging, and could provide deeper insights into the resilience of ICD under realistic quantum conditions.

\paragraph{Sparsity in the Fourier series.}

Our experiments reveal that the Fourier series representation of $f(\theta_j)$ often exhibits sparsity, with the scale of $r_j$ depending on the ansatz, Hamiltonian, and boundary conditions. For instance, in the case of the TFIM model with an HVA ansatz, as discussed in \cref{sec-tfim-xxz}, we consistently observe $r_j = 1$ for all coordinates $j$, regardless of the number of layers or qubits. However, in general, this sparsity structure may require a case-by-case analysis. Identifying and exploiting such sparsity could significantly reduce the number of samples required for interpolation, thereby enhancing the efficiency of ICD. In \cref{app-sparse-frequency}, we briefly analyze a special case of the 3-qubit TFIM. For more general models and quantum circuits, developing systematic analytical methods remain directions for future research.

\paragraph{Convergence theory of ICD.}

The theoretical convergence of ICD remain an open question. 
Unlike standard coordinate descent (CD) algorithms, ICD does not have a direct counterpart in classical optimization literature, because existing convergence results for CD algorithms typically assume noise-free cost functions. 
While some CD algorithms update each step using argmin, they do not account for noise in the their problems. 
ICD may introduce a new optimization formulation, requiring a tailored theoretical analysis to establish its convergence properties. Investigating this aspect is an important direction for future research.

On the other hand, we would like to clarify that achieving a global minimum is generally intractable for nonconvex optimization problems due to their intrinsic complexity. 
The motivation behind using a CD scheme (where a single parameter is updated per iteration via a one-dimensional minimization) is to enhance convergence speed compared to RCD, particularly by leveraging the closed-form solution of the one-dimensional subproblem. 
While global optimality is not guaranteed in general, under certain conditions, such as when the initial point is sufficiently close to a global minimizer and the cost function satisfies the Kurdyka-\L{}ojasiewicz (K\L{}) property, convergence to a global minimum can be theoretically achievable. 
Please see \cite[Corollary 2.7]{xu2013block} for a related discussion. A rigorous analysis of this direction is beyond the current scope and will be considered in future work.
 
\paragraph{Local minimum of cost function.}

The PQCs cost function is inherently non-convex, and the landscape is often highly rugged, filled with numerous spurious local minima, which makes classical gradient-based methods prone to getting trapped in local minimum.
\cite{bittel2021training} proves that the classical optimization of PQCs is NP-hard.
On the other hand, ICD employs a randomized coordinate selection along with an argmin update scheme, which offers the potential to escape local minima.
However, in our numerical experiments, we occasionally observe that ICD can also become stuck in local minima.
As noted in \cite{shi2016primer}, CD with the argmin update scheme may fail to converge for certain cost functions.
Specifically, \cite{shi2016primer} presents a convex but non-smooth function (originally proposed in \cite{warga1963minimizing}),
$f(x, y) = |x - y| - \min(x, y),$
for which CD with the argmin update gets stuck at a even non-stationary point (see in \cite[Figure 2]{shi2016primer}).

% \section*{Acknowledgments}

% If you have acknowledgments, this puts in the proper section head.
\begin{acknowledgments}
This work was supported by the National Natural Science Foundation of China under the grant numbers 12331010 and 12288101, and the National Key R\&D Program of China under the grant number 2024YFA1012903.
DA acknowledges the support by The Fundamental Research Funds for the Central Universities, Peking University.
Part of this work was completed while JH was affiliated with UC Berkeley.
We would like to express our sincere gratitude to Liyuan Cao, Zhiyan Ding, Tianyou Li, Xiantao Li, Xiufan Li, Lin Lin, Yin Liu, and Zaiwen Wen for their valuable feedback and insightful comments on the manuscript. 
\end{acknowledgments}

\section*{Author contributions}
Z.L. conceived the idea and carried out the theoretical analysis. Z.L., T.K. and J.W. performed the numerical simulations. Z.L., J.H. and D.A. analyzed the results. All authors contributed to the preparation of the manuscript. 

\section*{Competing interests}
The authors declare no competing interests. 

\section*{Data availability}
The data and codes used in this study are available from public repositories, as described in \cref{sec-experiments}.

\newpage

\begin{figure}
    \centering
    \includegraphics[width=0.42\linewidth]{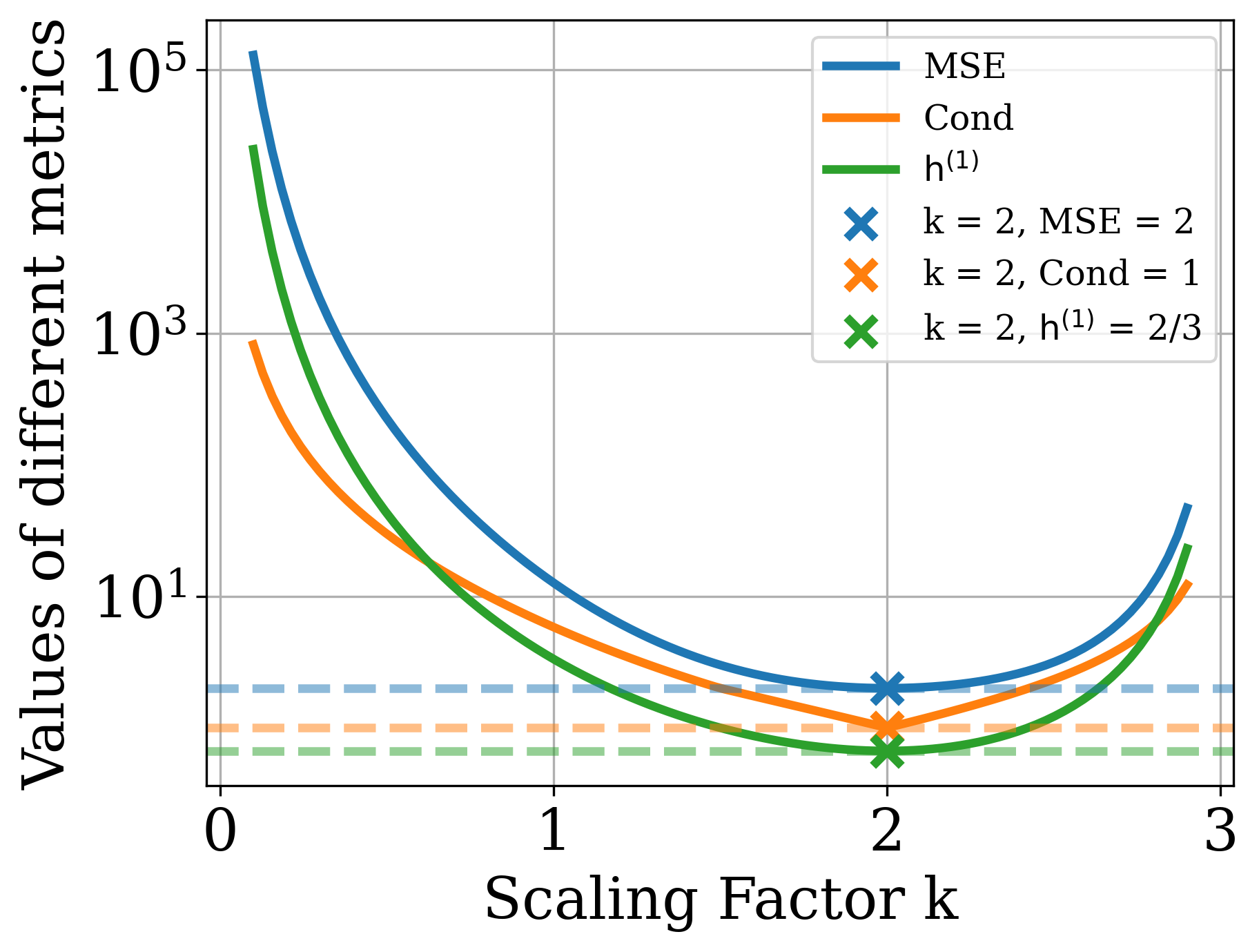}
    \caption{Comparison of the three optimality criteria --- mean squared error (MSE), condition number (Cond) and average variance of derivatives ($h^{(1)}$) (i.e., \cref{thm-first-veiw,thm-second-veiw,thm-third-veiw}, respectively) --- as functions of $k \in (0,3)$, where the interpolation nodes spaced by $k\pi/3$.}
    \label{fig:metrics_with_diff_iterp_nodes}
\end{figure}

\begin{figure}[htbp]
  \begin{subfigure}{0.45\textwidth}
    \centering
    \includegraphics[width=0.9\linewidth]{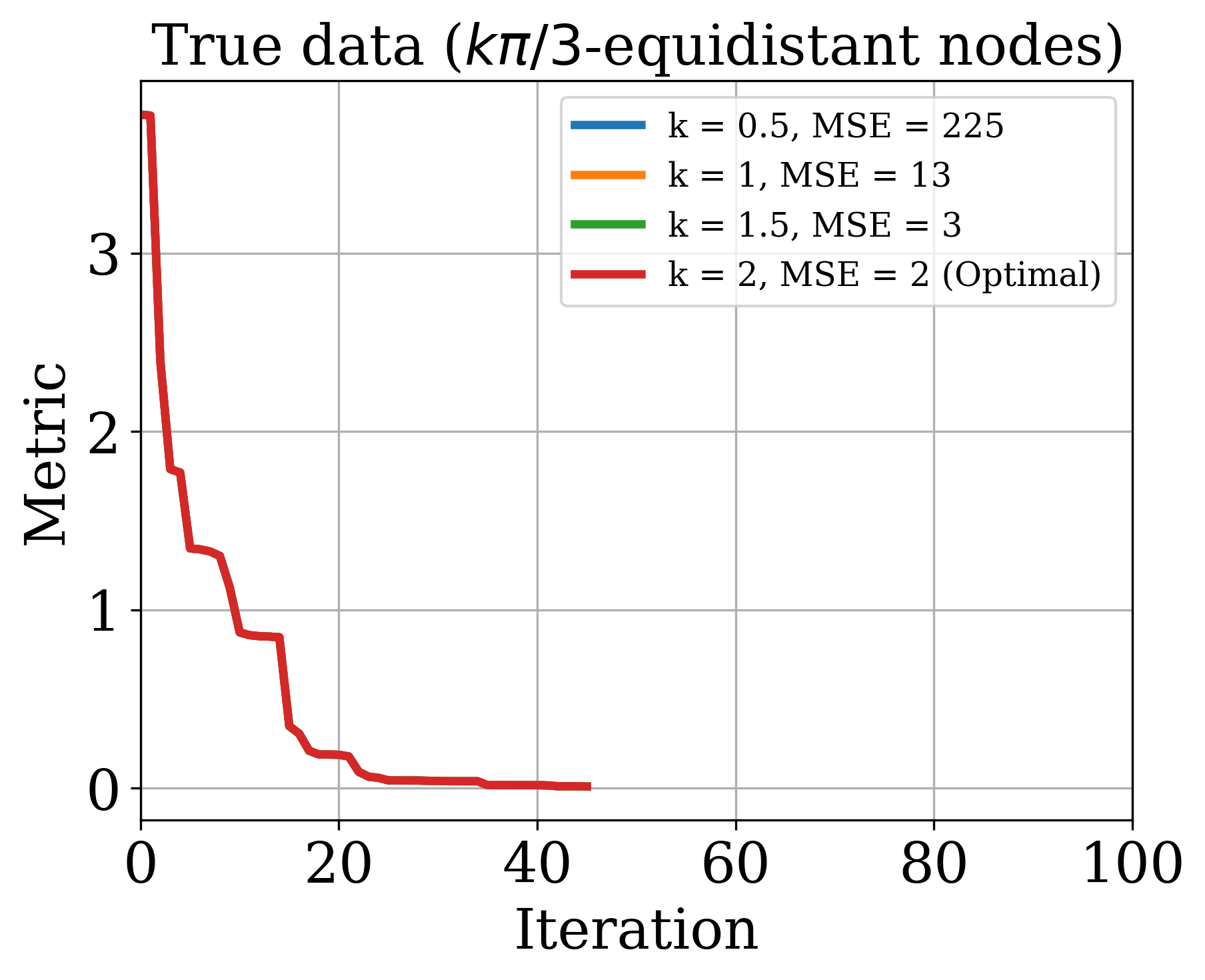} 
    \caption{}
    \label{fig:True-Data-Fixed-k-Nodes}
  \end{subfigure}
  \begin{subfigure}{0.45\textwidth}
    \includegraphics[width=0.9\linewidth]{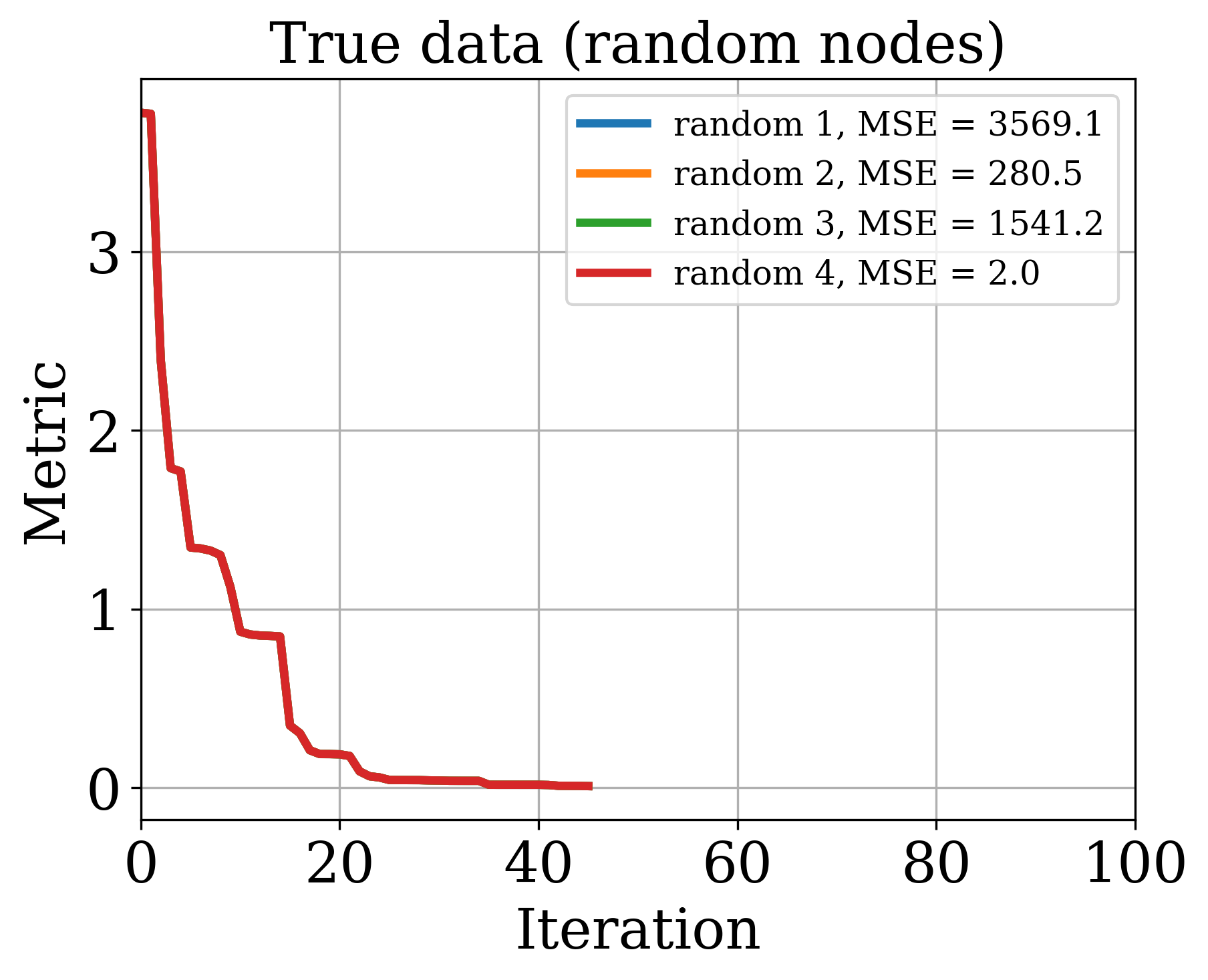}
    \caption{}
    \label{fig:True-Data-Random-Nodes}
  \end{subfigure}
  \caption{
    We test the 4-qubit MaxCut problem in \cref{sec-pro-maxcut}. The figures present the convergence of the ICD algorithm in the noiseless setting (infinite shots).
    (a) Using $k\pi/3$ equidistant interpolation nodes ($k = 0.5, 1, 1.5, 2$);
    (b) Using randomly selected interpolation nodes.
    All configurations exhibit identical convergence trajectories, confirming that in the absence of sampling noise, the choice of interpolation nodes has no impact on the optimization performance.}
    % \label{fig:True-Data-Fixed-k-Nodes.png}
\end{figure}

\begin{figure}[htbp]
  \begin{subfigure}{0.45\textwidth}
    \centering
    \includegraphics[width=0.9\linewidth]{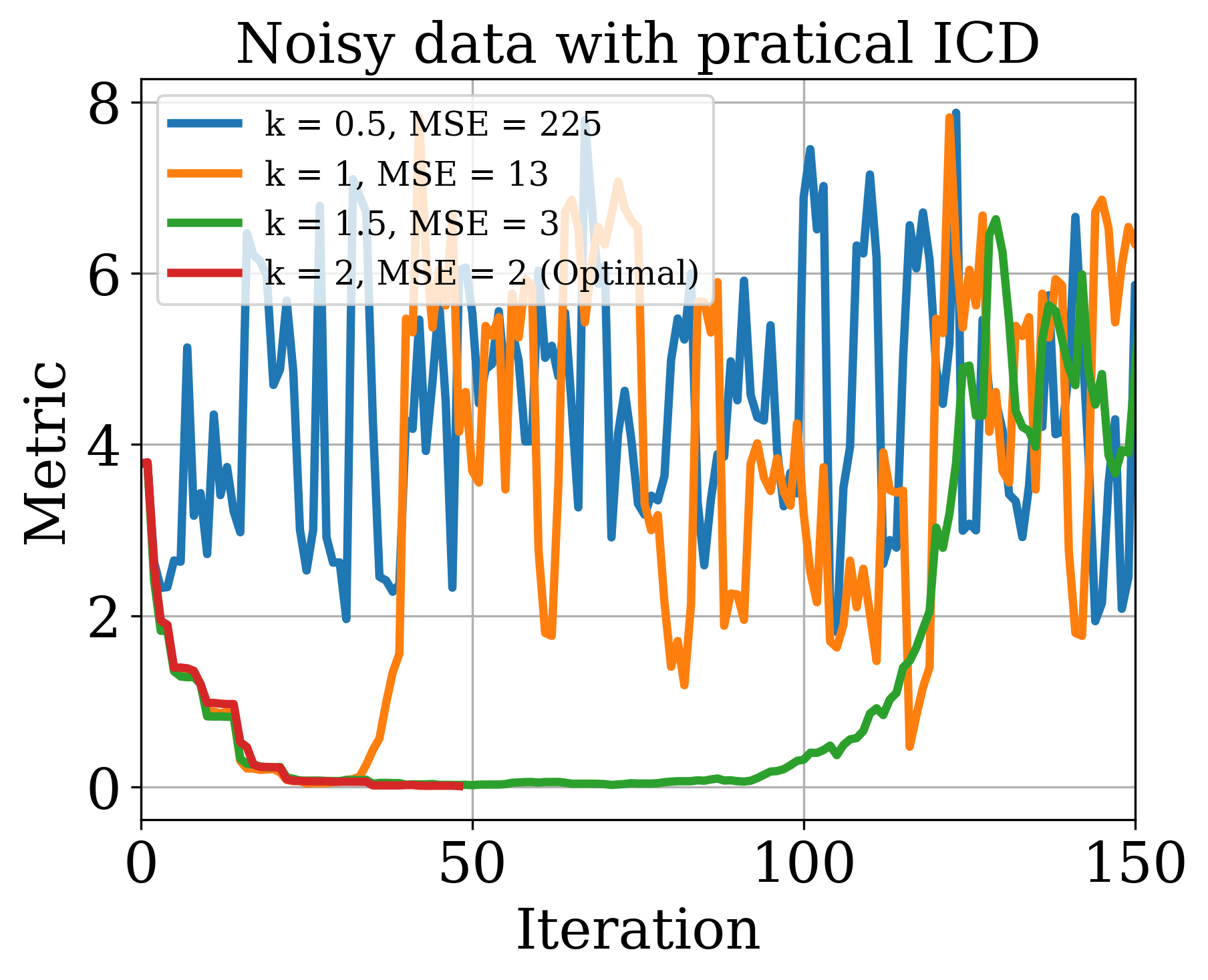} 
    \caption{}
    \label{fig:Noisy-Data-Practical-ICD}
  \end{subfigure}
  \begin{subfigure}{0.45\textwidth}
    \includegraphics[width=0.9\linewidth]{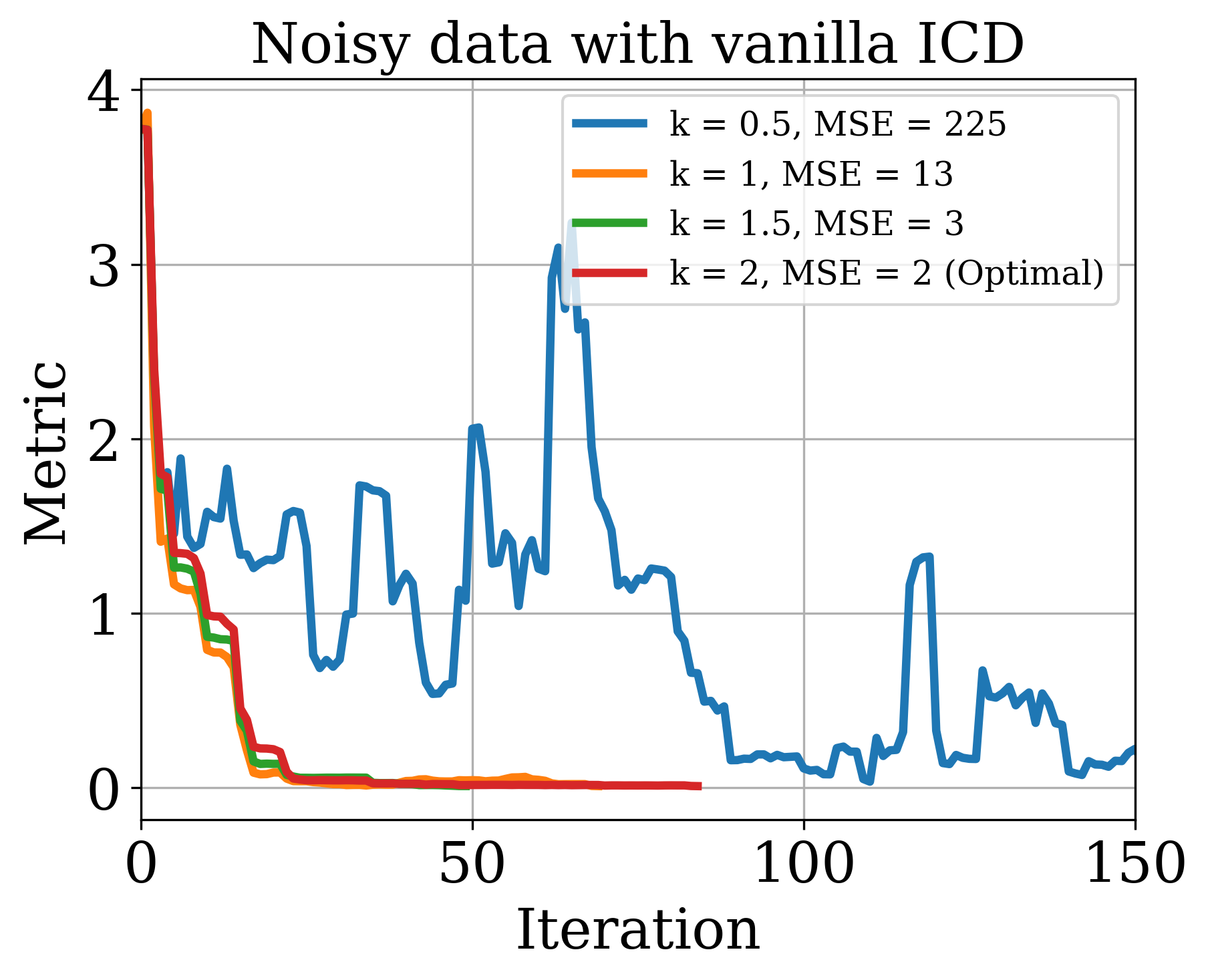}
    \caption{}
    \label{fig:Noisy-Data-Vanilla-ICD}
  \end{subfigure}
  \caption{We test the 4-qubit MaxCut problem in \cref{sec-pro-maxcut}. The figures present the convergence behavior of (a) the reduced ICD algorithm and (b) the standard ICD algorithm under noisy settings (1024 shots), using $k\pi/3$ equidistant interpolation nodes ($k = 0.5, 1, 1.5, 2$).}
% \label{fig:True-Data-Fixed-k-Nodes.png}
\end{figure}

\begin{figure}
    \centering
    \includegraphics[width=1\linewidth]{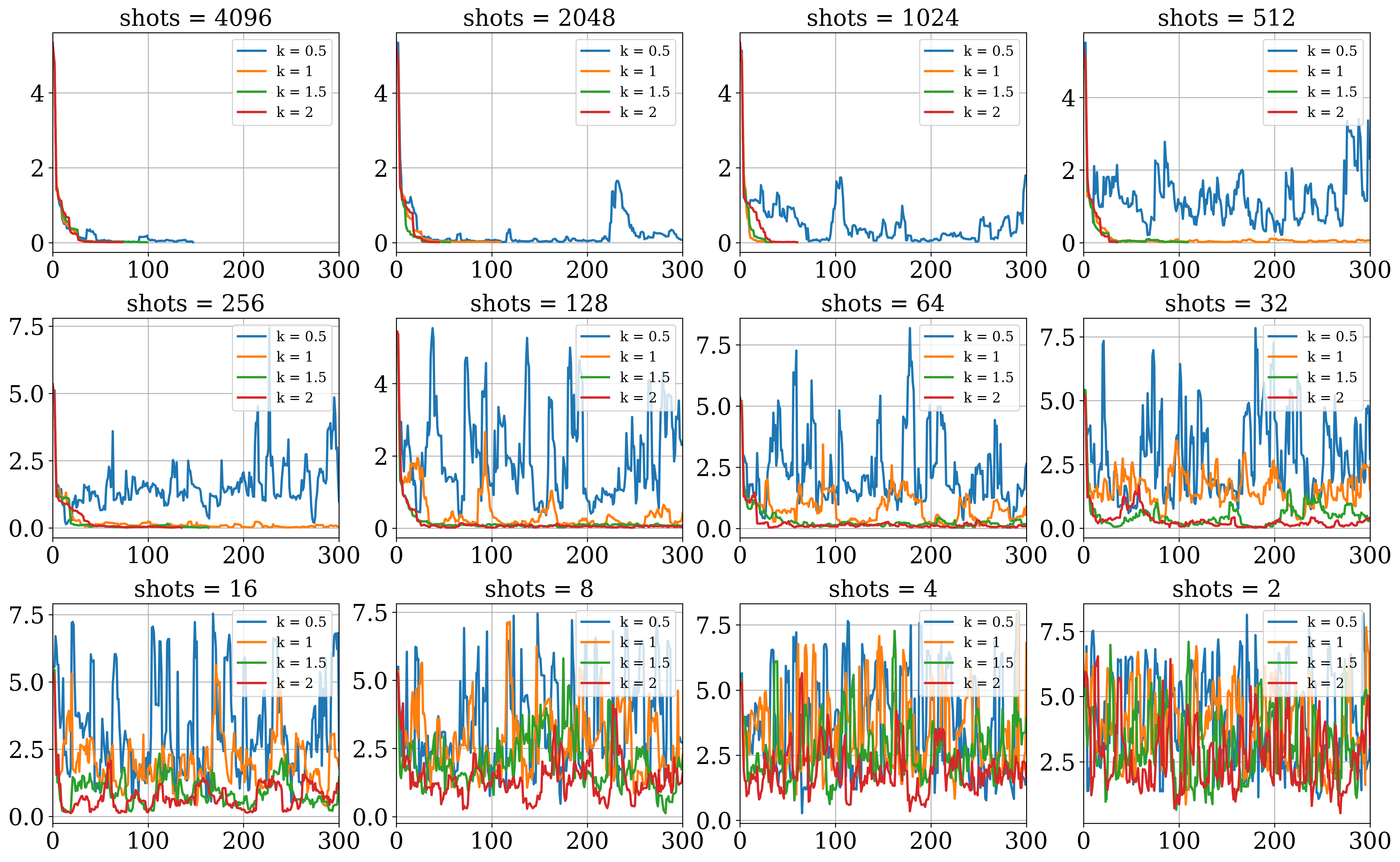}
    \caption{We test the 4-qubit MaxCut problem in \cref{sec-pro-maxcut}. The figures present the convergence behavior of standard ICD under varying shot counts for $k\pi/3$ equidistant interpolation nodes ($k = 0.5, 1, 1.5, 2$). Each curve corresponds to one interpolation node configuration. As the number of shots decreases from 4096 to 128, configurations with larger $k$ (closer to optimal spacing) maintain better convergence properties, while smaller $k$ values degrade more rapidly.}
    \label{fig:maxcut_oicd_all_nshot_subplots_1}
\end{figure}

\begin{figure}
    \centering
    \includegraphics[width=1\linewidth]{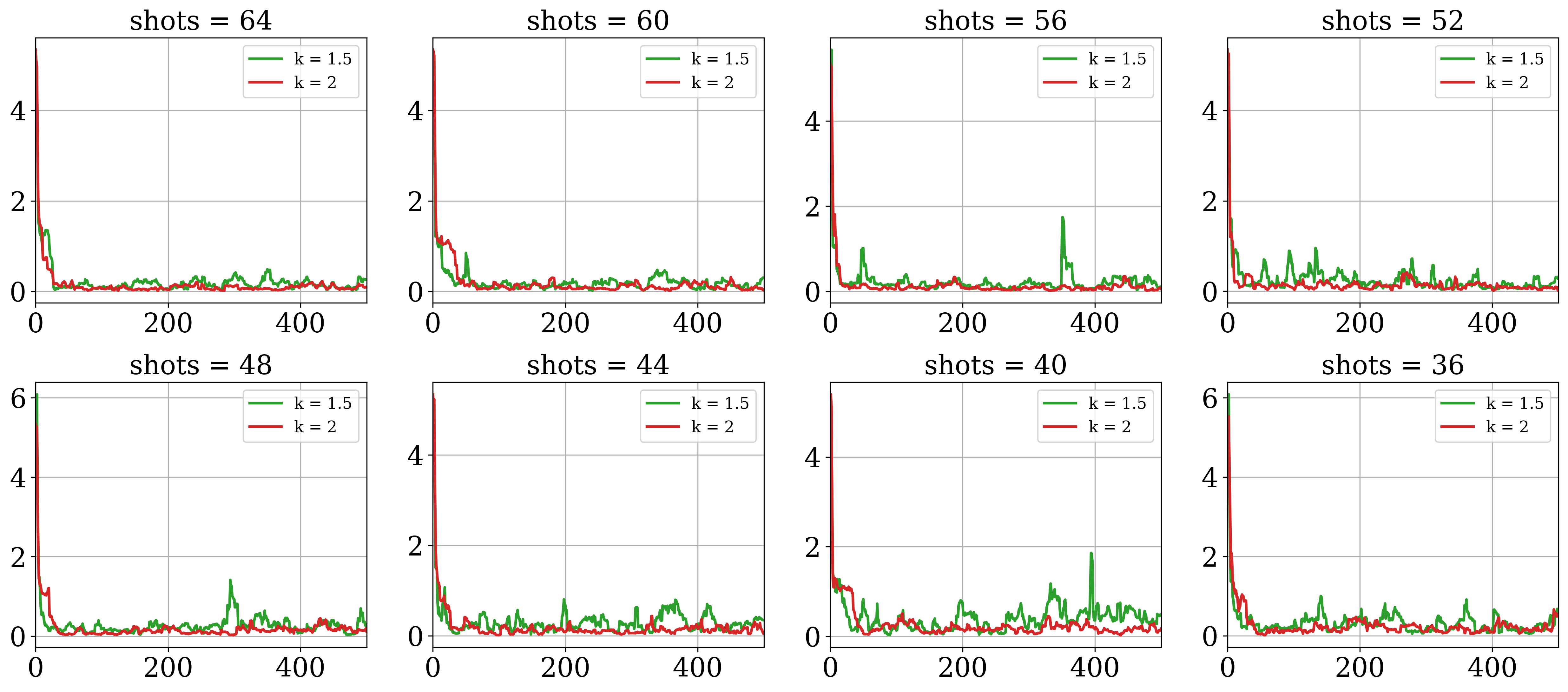}
    \caption{
    We test the 4-qubit MaxCut problem in \cref{sec-pro-maxcut}. The figures present the detailed comparison of convergence behaviors for $k\pi/3$ equidistant interpolation nodes with $\frac{\pi}{2}$ ($k = 1.5$) and $\frac{2\pi}{3}$ ($k = 2$)  under low shot counts (from 64 to 36). While both settings perform similarly overall, the $k = 1.5$ configuration shows greater sensitivity to noise.}
    \label{fig:maxcut_oicd_all_nshot_subplots_2}
\end{figure}

\begin{figure}[htbp]
    \centering
    \begin{subfigure}{0.3\textwidth}
        \includegraphics[width=\linewidth]{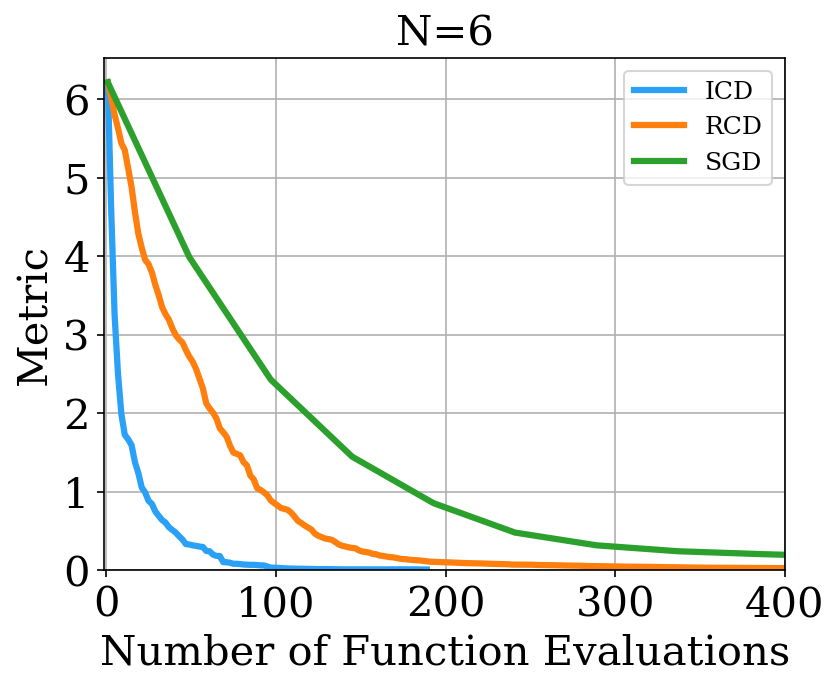}
        % \caption{465903}
    \end{subfigure}
    \begin{subfigure}{0.3\textwidth}
        \includegraphics[width=\linewidth]{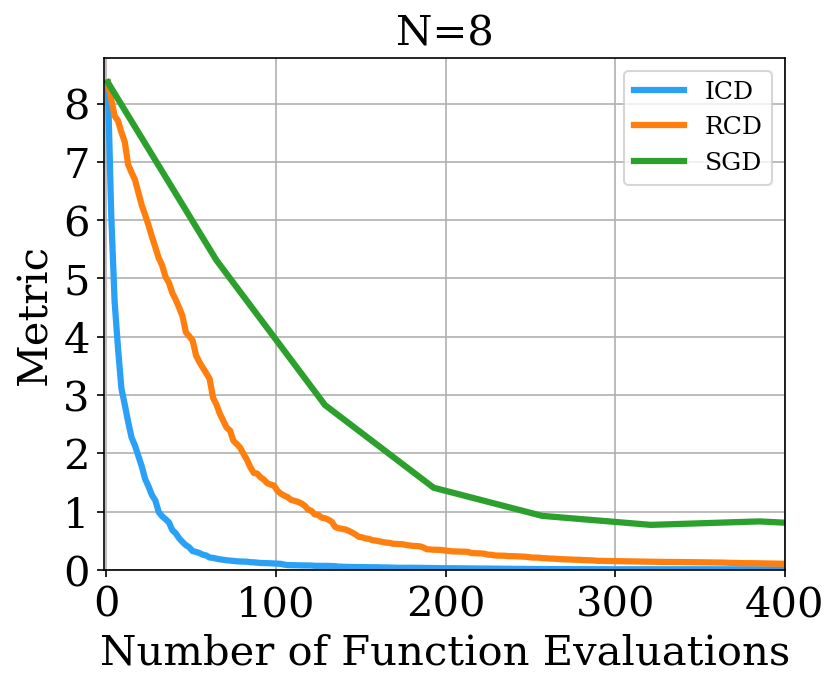}
        % \caption{465901}
    \end{subfigure}
    \begin{subfigure}{0.3\textwidth}
        \includegraphics[width=\linewidth]{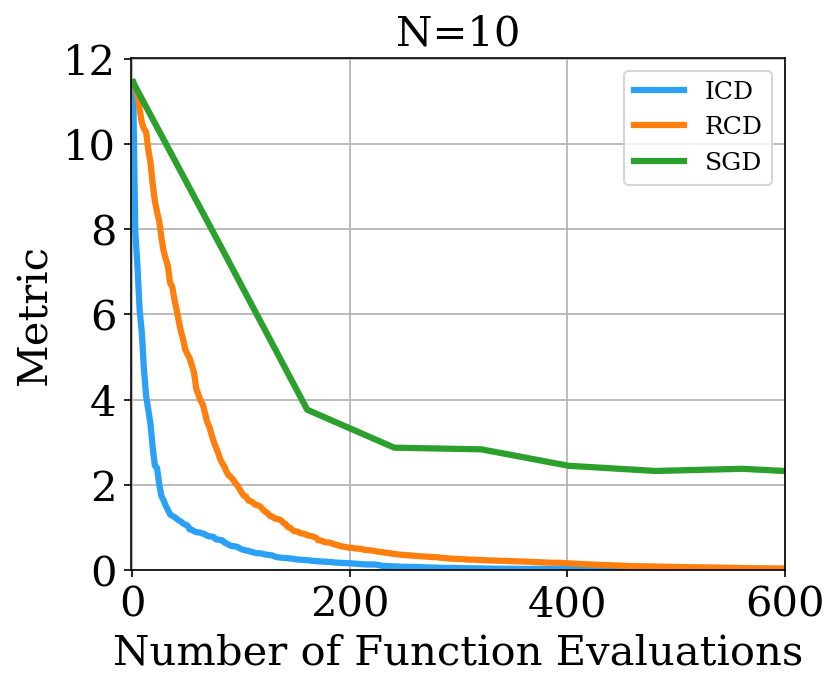}
        % \caption{465913}
    \end{subfigure}

    \vspace{0.5em}

    \begin{subfigure}{0.3\textwidth}
        \includegraphics[width=\linewidth]{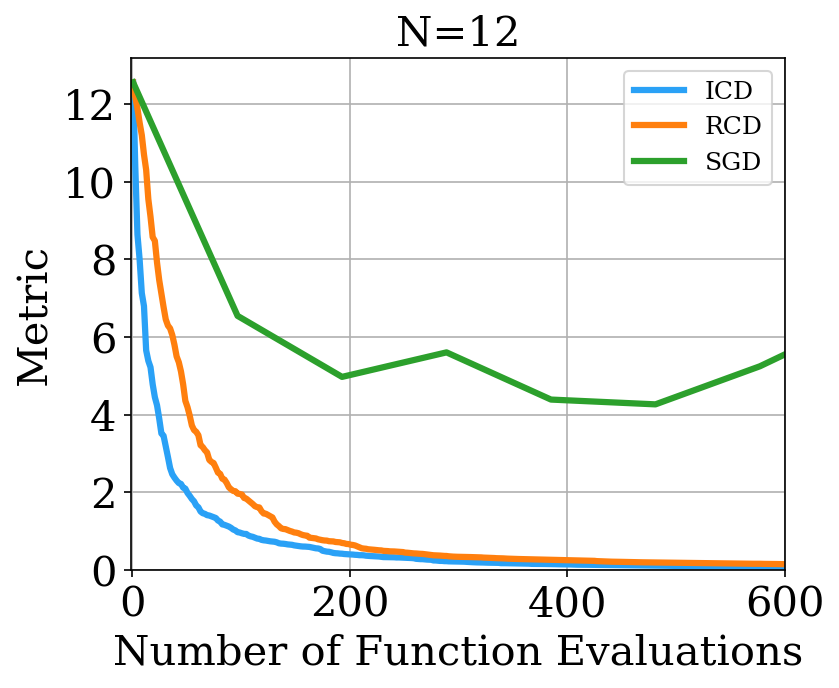}
        % \caption{465914}
    \end{subfigure}
    \begin{subfigure}{0.3\textwidth}
        \includegraphics[width=\linewidth]{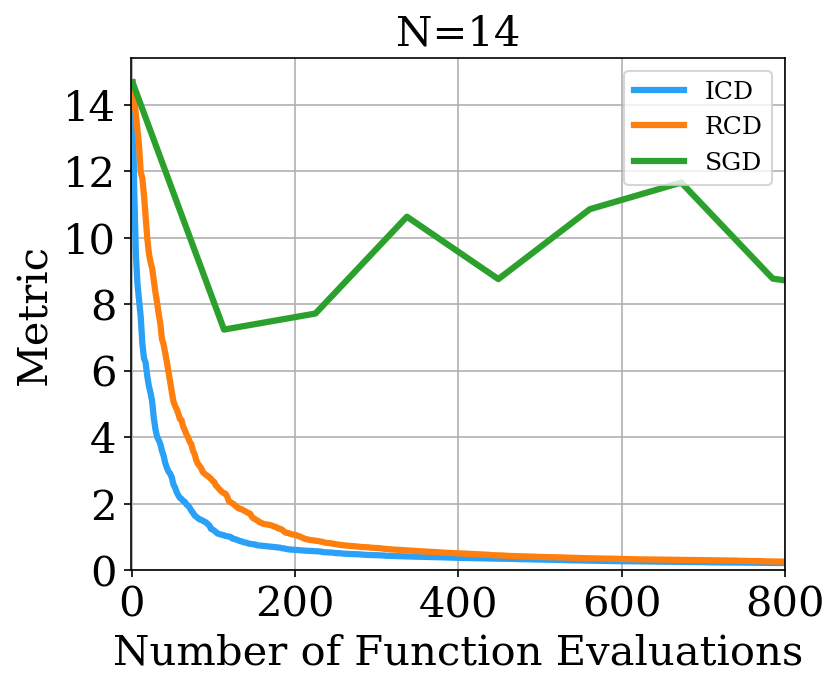}
        % \caption{465915}
    \end{subfigure}
    \begin{subfigure}{0.3\textwidth}
        \includegraphics[width=\linewidth]{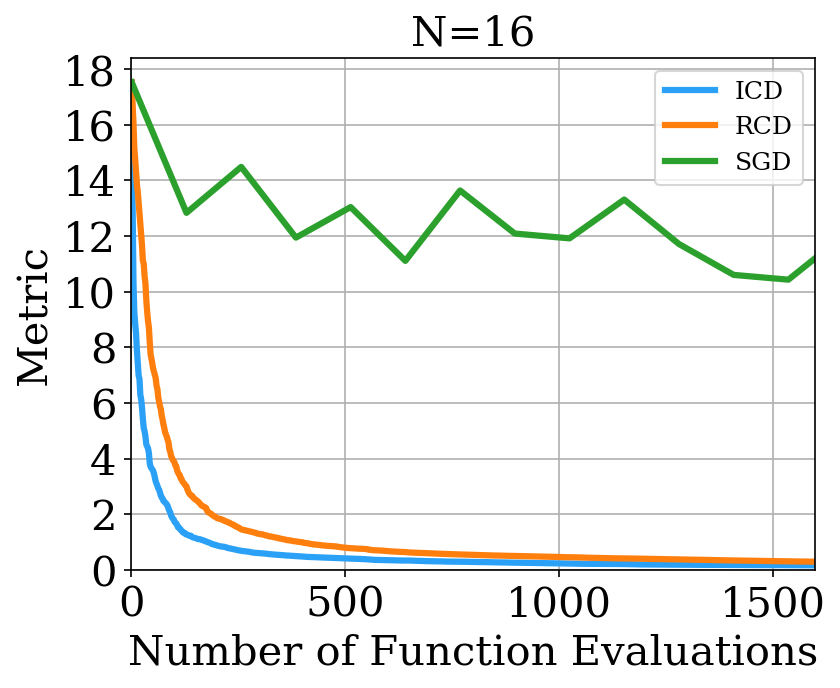}
        % \caption{465912}
    \end{subfigure}

    \caption{
    We test the $N$-qubit TFIM models using a $2N$-layer HVA ansatz, as described in \cref{sec-pro-tfim}, for $N=6,8,10,12,14,16$. The figures present a performance comparison among ICD, RCD, and SGD. ICD consistently outperforms RCD and SGD across various system sizes, showing faster convergence without encountering the barren plateau phenomenon.
    }
    \label{fig:tfim_hva_many}
\end{figure}

\begin{figure}[htbp]
    \centering
    \begin{subfigure}{0.3\textwidth}
        \includegraphics[width=\linewidth]{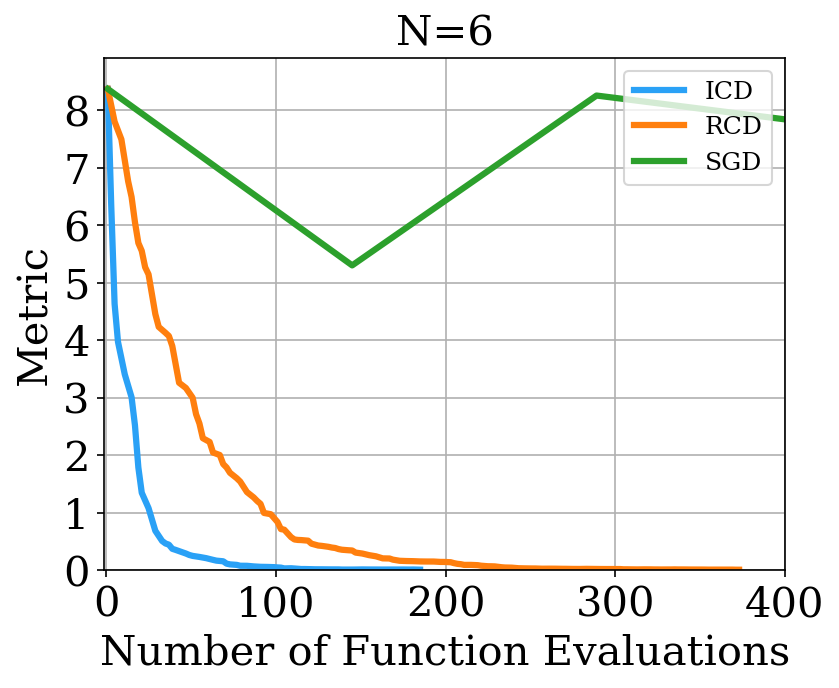}
        % \caption{514713}
    \end{subfigure}
    \begin{subfigure}{0.3\textwidth}
        \includegraphics[width=\linewidth]{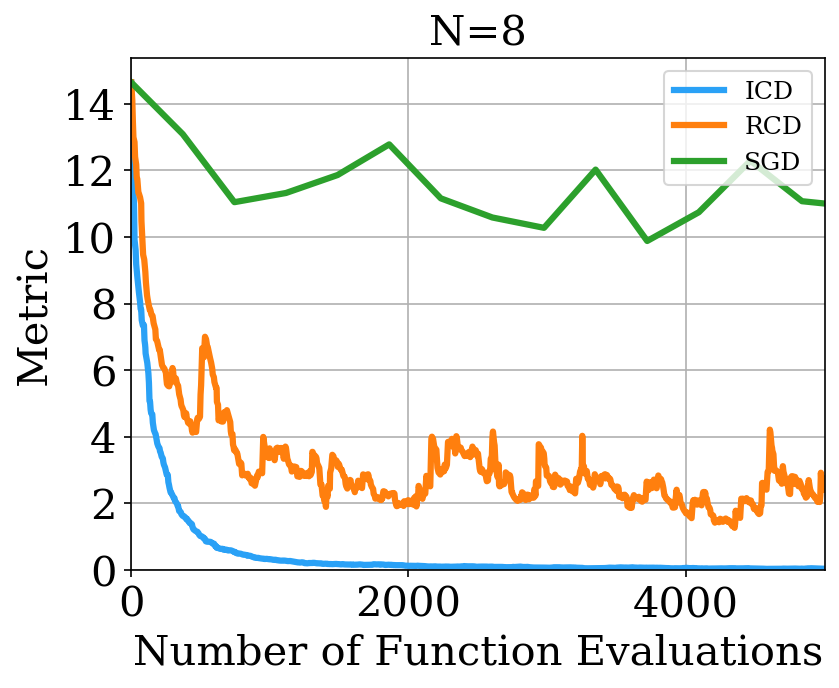}
        % \caption{506039}
    \end{subfigure}
    \begin{subfigure}{0.3\textwidth}
        \includegraphics[width=0.95\linewidth]{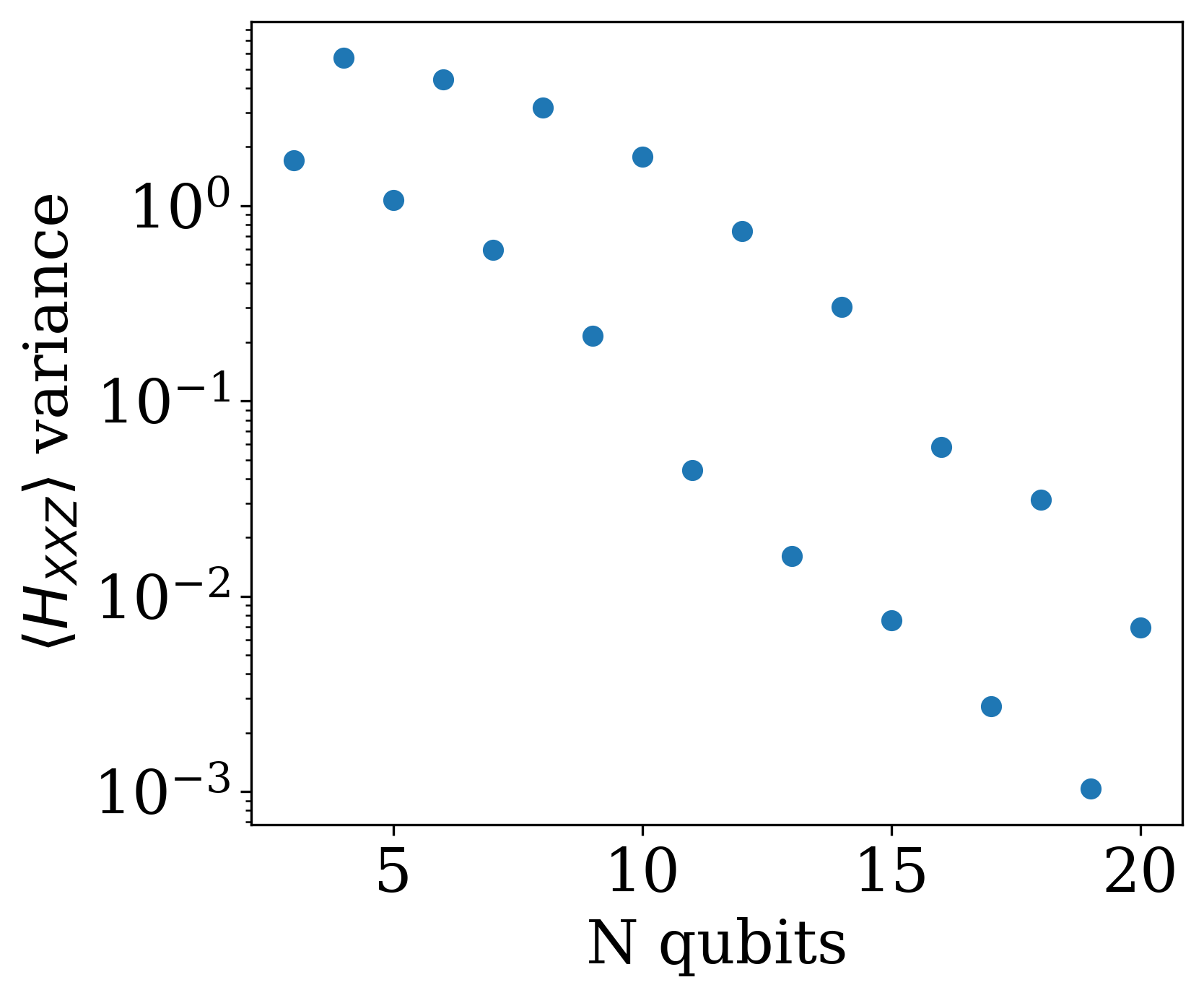}
        % \caption{BP Cost}
    \end{subfigure}

    \vspace{0.5em}

    \begin{subfigure}{0.3\textwidth}
        \includegraphics[width=\linewidth]{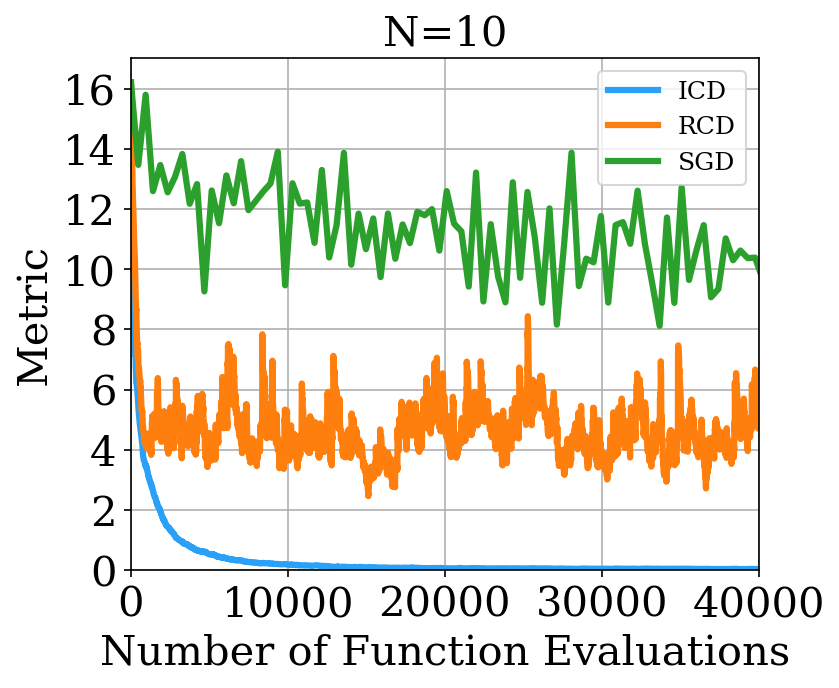}
        % \caption{506040}
    \end{subfigure}
    \begin{subfigure}{0.3\textwidth}
        \includegraphics[width=\linewidth]{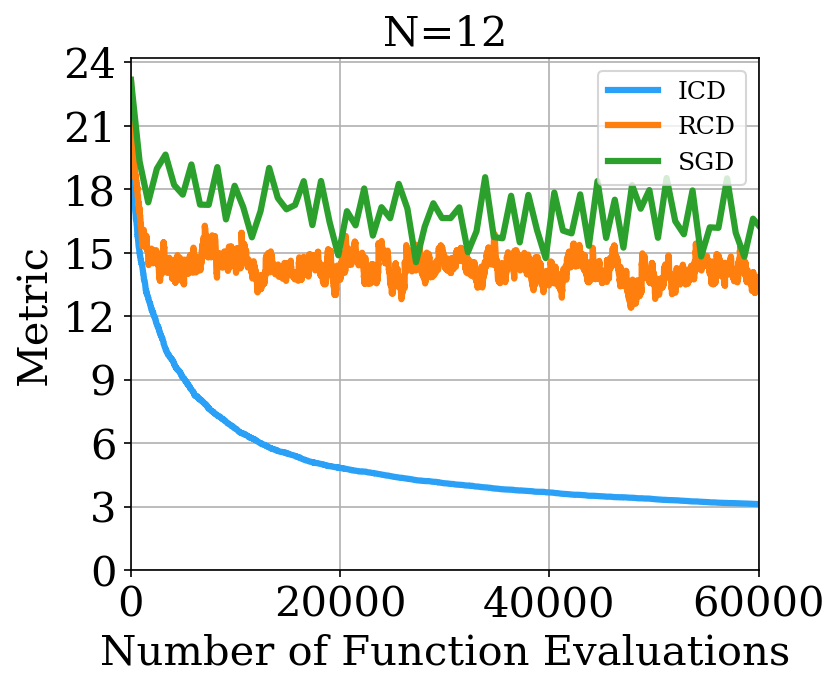}
        % \caption{506041}
    \end{subfigure}
    \begin{subfigure}{0.3\textwidth}
        \includegraphics[width=0.94\linewidth]{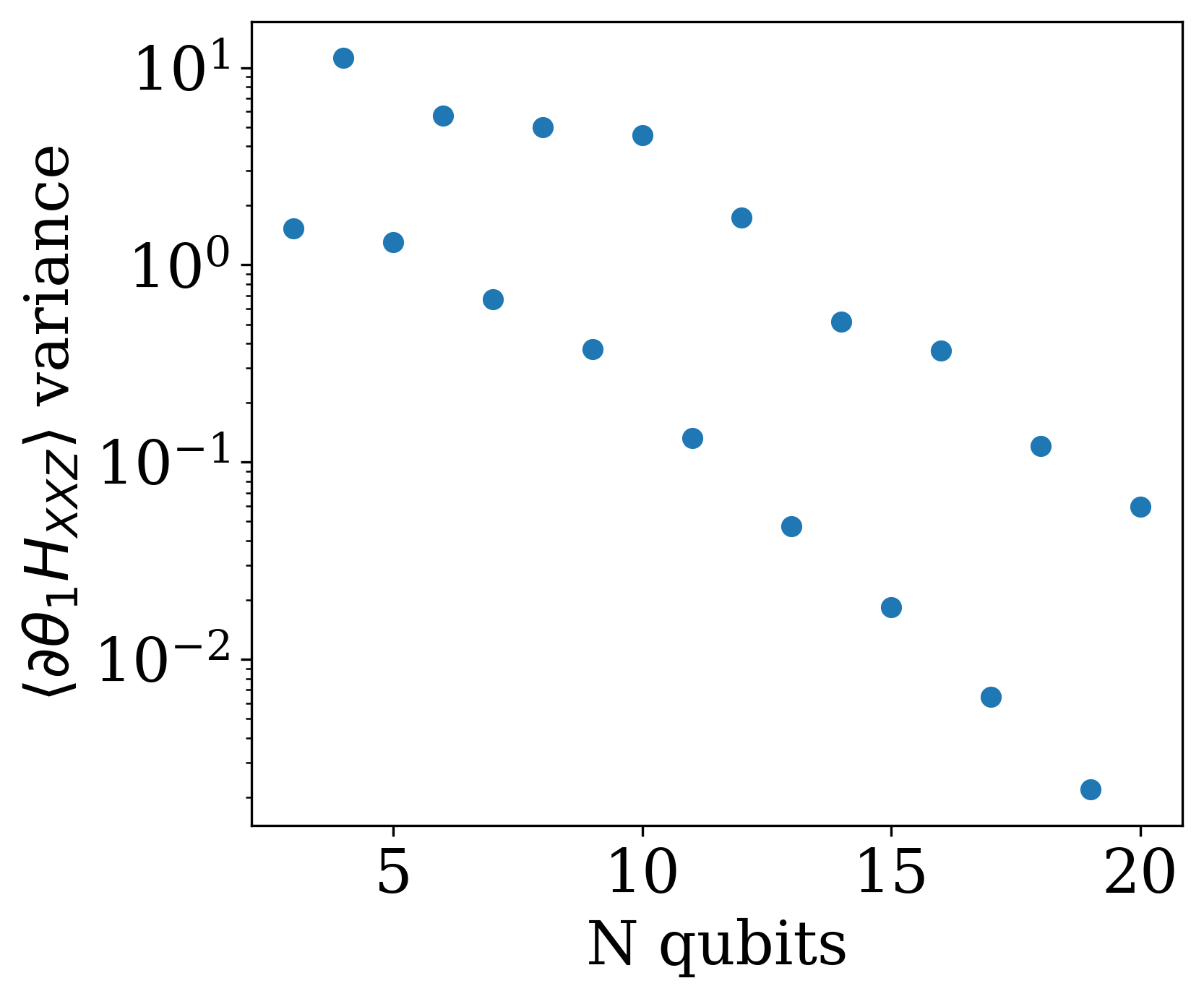}
        % \caption{BP Derivative}
    \end{subfigure}

    \caption{
    We test the $N$-qubit XXZ models using a $2N$-layer HVA ansatz, as described in \cref{sec-pro-xxz}, for $N=6,8,10,12$. 
The figures present a performance comparison among ICD, RCD, and SGD. The barren plateau phenomenon is evident, as indicated by the exponential decay in cost and gradient variances. ICD fails to converge as qubit number increases.
    }
    \label{fig:xxz_hva_many}
\end{figure}

\newpage
\section*{References}
% Create the reference section using BibTeX:

\bibliography{mybib}

\appendix

\section{Trigonometric representation of PQC cost function}\label{app-trig}

The derivation of \cref{eq-trig-thetaj} comes from \cite{wierichs2022general}. 
For the sake of completeness, we provide the detailed derivation process in this appendix. 
Let $N=2^q$ for the $q$-qubit system.
We begin by formulating a PQC cost function that depends on a single parameter $x \in \mathbb{R}$.
Consider the unitary operator defined as $U(x) = \exp(i H x),$ where $H$ is a Hermitian generator. 
Let $|\psi\rangle$ denote the quantum state to which $U(x)$ is applied, and let $O$ represent the observable being measured. 
The PQC cost function is then defined by
\begin{equation}
    f(x) := \langle \psi | U(x)^{\dagger} O U(x) | \psi \rangle.
\end{equation}
Let $\{ \lambda_j \}_{j \in [N]}$ represent the eigenvalues of $H$, where $[N] := \{1, \ldots, N\}$ and the eigenvalues are arranged in non-decreasing order ($\lambda_1 \leq \dots \leq \lambda_N$). 
For any real number $x$, the set $\{ \exp(i \lambda_j x) \}_{j \in [N]}$ constitutes the eigenvalues of $U(x)$. 
Specifically, let $\{ |\phi_j\rangle \}_{j \in [N]}$ be the eigenbasis of $H$. Then, $H$ can be diagonalized as
\begin{equation}
    H = \sum_{j=1}^N \lambda_j |\phi_j\rangle \langle \phi_j |,
\end{equation}
and $U(x) = \exp(i H x)$ can also be diagonalized in the same eigenbasis as
\begin{equation}\label{eq-eigenbasis}
    U(x) = \sum_{j=1}^N \exp(i x \lambda_j) |\phi_j\rangle \langle \phi_j |. 
\end{equation}
We proceed by expanding $O$ and $|\psi\rangle$ in the eigenbasis  $\{ |\phi_j\rangle \}_{j \in [N]}$. Specifically, we define the matrix entries of $O$ and the coefficients of $|\psi\rangle$ in this eigenbasis by 
\begin{equation}
    [O]_{jk} := \langle \phi_j | O | \phi_k \rangle, \quad [\psi]_j := \langle \phi_j | \psi \rangle, \quad \forall j, k \in[N].
\end{equation}
By \cref{eq-eigenbasis}, we apply $U(x)$ to the state $|\psi\rangle$ to obtain
\begin{equation}
    U(x)|\psi\rangle = \sum_{j=1}^N \exp(i x \lambda_j) |\phi_j\rangle \langle \phi_j | \psi \rangle = \sum_{j=1}^N [\psi]_j \exp(i x \lambda_j) |\phi_j\rangle.
\end{equation}
Substituting the expanded form of $U(x)|\psi\rangle$ above into $f(x)$, we have
\begin{align}
f(x) =\langle \psi | U(x)^{\dagger} O U(x) | \psi \rangle
&= \left( \sum_{j=1}^N \overline{[\psi]_j} \exp(-i x \lambda_j) \langle \phi_j | \right)O \left( \sum_{k=1}^N [\psi]_k \exp(i x \lambda_k) |\phi_k\rangle \right) \\
&= \sum_{j,k=1}^N \overline{[\psi]_j} [\psi]_k \exp\left[ i (\lambda_k - \lambda_j) x \right] \langle \phi_j | O | \phi_k \rangle \\
&= \sum_{j,k=1}^N \overline{[\psi]_j} [\psi]_k \;[O]_{jk} \;e^{i (\lambda_k - \lambda_j) x}.
\end{align}
To further simplify $f(x)$, we separate the terms where $j \neq k$ and $j = k$ and obtain
\begin{equation}
    f(x) = \sum_{\substack{j, k=1 \\ j < k}}^N \left[ \overline{[\psi]_j} [\psi]_k b_{jk} e^{i (\lambda_k - \lambda_j) x} + [\psi]_j \overline{[\psi]_k [O]_{jk}} e^{-i (\lambda_k - \lambda_j) x} \right] + \sum_{j=1}^N |[\psi]_j|^2 [O]_{jj}.
\end{equation}
We can collect the $x$-independent terms into coefficients defined by $c_{jk} := \overline{[\psi]_j} [\psi]_k [O]_{jk}.$ Then, it becomes
\begin{equation}\label{eq-931}
f(x)
= \sum_{\substack{j, k=1 \\ j < k}}^N \left[ c_{jk} e^{i(\lambda_k - \lambda_j)x} + \overline{c_{jk}} e^{-i(\lambda_k - \lambda_j)x} \right] + \frac{1}{\sqrt{2}} a_0,
\end{equation}
where we let $a_0 := \sqrt{2} \sum_{j=1}^N |[\psi]_j|^2 [O]_{jj} \in \mathbb{R}.$ Notice that $[O]_{jj}$ must be real since $O$ is Hermitian.
Moreover, we introduce the unique positive differences (called frequencies)
\begin{equation}\label{eq-509}
    \{\Omega_\ell\}_{\ell \in [r]} := \{\lambda_k - \lambda_j \mid \forall j, k \in [N], \ \lambda_k > \lambda_j\}.
\end{equation}
Here, $r$ is the number of the unique positive differences.
For the same term $e^{i(\lambda_k - \lambda_j) x}=e^{i \Omega_{\ell} x}$, we sum the coefficients $c_{jk}$ in front of them and re-index it as $c_{\ell}$.
Consequently, \cref{eq-931} becomes
\begin{equation}\label{eq-1420}
f(x)
= \sum_{\ell=1}^r c_\ell e^{i \Omega_\ell x} + \sum_{\ell=1}^r \overline{c_\ell} e^{-i \Omega_\ell x} + \frac{1}{\sqrt{2}} a_0.
\end{equation}
This is precisely the complex form of a finite-term Fourier series.
Next, we parameterize the complex coefficients $c_\ell$ by real numbers $a_\ell$ and $b_\ell$ as
\begin{equation}
    c_\ell = \frac{1}{2}(a_\ell - i b_\ell), \quad \forall \ell \in [r].
\end{equation}
Utilizing the trigonometric identities $\cos(x) = \frac{1}{2}(e^{ix} + e^{-ix}), \sin(x) = \frac{i}{2} (-e^{ix} + e^{-ix}),$ we can rewrite $f(x)$ as the real form of a finite-term Fourier series:
\begin{align}
f(x)
&= \sum_{\ell=1}^r \frac{1}{2}(a_\ell - i b_\ell) e^{i \Omega_\ell x} + \sum_{\ell=1}^r \frac{1}{2}(a_\ell + i b_\ell) e^{-i \Omega_\ell x} + \frac{1}{\sqrt{2}} a_0 \\
&= \sum_{\ell=1}^r a_\ell \left( \frac{e^{i \Omega_\ell x} + e^{-i \Omega_\ell x}}{2} \right) + \sum_{\ell=1}^r i b_\ell \left( \frac{-e^{i \Omega_\ell x} + e^{-i \Omega_\ell x}}{2} \right) + \frac{1}{\sqrt{2}} a_0 \\
&= \frac{1}{\sqrt{2}} a_0 + \sum_{\ell=1}^r \left[  a_\ell \cos(\Omega_\ell x) + b_\ell \sin(\Omega_\ell x)\right], \label{eq-1421}
\end{align}
where the frequencies are given by \cref{eq-509}. We complete the proof for \cref{eq-trig-thetaj}.

\begin{remark}[Actual frequency is more important]\label{re-sparse-freq}
From the above construction, it can be observed that if the coefficient $c_{jk}$ equals zero in \cref{eq-931}, then the corresponding $\lambda_k - \lambda_j$ can be discarded. Therefore, the actual frequencies should consist only of the terms where both $[O]_{jk}$ and $[\psi]_k$ are non-zero. In other words, the frequency set defined in \cref{eq-509} (which depends solely on $H$) provides an upper bound for the actual frequency set. In the worst-case scenario, the number of frequencies is $O (4^q)$ for q-qubit system.
However, in practical cases, such as HEA circuits where $H$ consists of Pauli strings, the frequencies $\Omega$ are singletons, resulting in $O(1)$ frequencies. 
Additionally, for HVA circuits, the generator $H$ is highly correlated with the observable $O$, leading to a large number of zero coefficients $[O]_{jk}$. As a result, the actual frequency is further reduced. \cref{app-sparse-frequency} provides a detailed analysis for the TFIM model with HVA, proving that their frequency count is also $O (1)$. 
\end{remark}

\section{Recover the complex coefficients of restricted univariate functions}\label{app-recover-complex}

In \cref{sec-interp-real}, we discussed the interpolation method to recover $n = 2r + 1$ real coefficients in the restricted univariate function $f(x)$ in \cref{eq-trig-poly-real}. In this appendix, we discuss an interpolation method for recovering complex coefficients of $f(x)$. It will not appear in our actual ICD algorithms, but it is very useful for explaining the theoretical phenomena behind the ICD.
For the moment, assume that we can access to $ f(x)$ without noise. 

As we saw in \cref{eq-1420,eq-1421} in \cref{app-trig}, there are two equivalent expressions for $f(x)$, since we observe that
\begin{align}
    f(x)
     &= \frac{1}{\sqrt{2}} a_0 + \sum_{k=1}^r \left[ a_k \cos(\Omega_k x) + b_k \sin(\Omega_k x) \right]  \label{eq-trig-poly-real000}\\
    % & = \frac{1}{\sqrt{2}} a_0 + \sum_{k=1}^r \left[ a_k \frac{1}{2}\left(e^{i\Omega_k x} + e^{-i\Omega_k x}\right) + b_k \frac{i}{2}\left(e^{-i\Omega_k x} - e^{i\Omega_k x}\right) \right] \\
    % & = \frac{1}{\sqrt{2}} a_0 + \sum_{k=1}^r \left[ \left( \frac{a_k}{2} - \frac{b_k}{2}i \right)e^{i\Omega_k x} + \left( \frac{a_k}{2} + \frac{b_k}{2}i \right)e^{-i\Omega_k x} \right] \\
    &= \sum_{k=-r}^r c_k e^{i\Omega_k x}, \label{eq-trig-poly-complex}
\end{align}
where we define $\Omega_{-k} := -\Omega_{k}$ for $k=1, \cdots, r$ and $\Omega_0 := 0$; and we convert the real coefficients $\mathbf{z} = \left[ a_0, a_1, b_1, \cdots, a_r, b_r \right]^{\dagger} \in \mathbb{R}^n$ to the complex coefficients 
\begin{equation}
    \mathbf{z}^{c} := \left[ c_{-r}, \cdots, c_0, \cdots, c_r \right]^{\dagger} \in \mathbb{C}^n
\end{equation}
by using the a linear transformation, i.e.,
\begin{equation}
    \begin{cases}
    c_0 := \frac{1}{\sqrt{2}} a_0 \\
    c_k := \frac{a_k}{2} - \frac{b_k}{2}i,   \;\forall k = 1, \dots, r \\
    c_{-k} := \frac{a_k}{2} + \frac{b_k}{2}i = c_k^*,   \;\forall k = 1, \dots, r
    \end{cases}
    \Longleftrightarrow
    \begin{cases}
    a_0 = \sqrt{2} c_0 \\
    a_k = c_k + c_{-k},  \;\forall k = 1, \dots, r \\
    b_k = i\left(c_{k} - c_{-k}\right),   \;\forall k = 1, \dots, r.
    \end{cases}
\end{equation}
The invertibility of this transformation establishes the equivalence between the real-coefficient expression \cref{eq-trig-poly-real000} and its complex-coefficient counterpart \cref{eq-trig-poly-complex}. If we define the constant matrix
\begin{equation}\label{eq-C}
    C := \left[\begin{array}{cccccc}
    & & & & \frac{1}{2} & \frac{i}{2} \\
    & & & \ddots & & \\
    & \frac{1}{2} & \frac{i}{2} & & & \\
    \frac{1}{\sqrt{2}} & & & & & \\
    & \frac{1}{2} & -\frac{i}{2} & & & \\
    & & & \ddots & & \\
    & & & & \frac{1}{2} & -\frac{i}{2}
    \end{array}\right] \in \mathbb{C}^{n \times n},
\end{equation}
we can rewrite this transformation as $\mathbf{z}^{c} = C \mathbf{z}$. It is easily to see that $C C^{\dagger} = C^{\dagger} C = \frac{1}{2} I$.

Next, like interpolation \cref{eq-yAz}, we also attempt to recover the complex coefficients $\mathbf{z}^{c}$ by interpolation. Similarly, after selecting some interpolation nodes $\mathbf{x} = [x_0, x_1, \cdots, x_{2r}]^{\dagger} \in \mathbb{R}^n$ with distinct entries, we construct the complex interpolation matrix
\begin{equation}
    A^{c}_{\mathbf{x}} :=
    \left[\begin{array}{ccccccc}
    \omega_0^{-\Omega_r} & \cdots & \omega_0^{-\Omega_1} & 1 & \omega_0^{\Omega_1} & \cdots & \omega_0^{\Omega_r} \\
    \omega_1^{-\Omega_r} & \cdots & \omega_1^{-\Omega_1} & 1 & \omega_1^{\Omega_1} & \cdots & \omega_1^{\Omega_r} \\
    \vdots & \ddots & \vdots & \vdots & \vdots & \ddots & \vdots \\
    \omega_{2r}^{-\Omega_r} & \cdots & \omega_{2r}^{-\Omega_1} & 1 & \omega_{2r}^{\Omega_1} & \cdots & \omega_{2r}^{\Omega_r}
    \end{array}\right]\in \mathbb{C}^{n \times n},
\end{equation}
where $\omega_k := e^{ix_k}$ for $k = 0, 1, \ldots, 2r$. Then, similarly, we can solve the linear equation
\begin{equation}\label{eq-yAz-prime}
   A^{c}_{\mathbf{x}} \mathbf{z}^{c}= \mathbf{y}_{\mathbf{x}}
\end{equation}
to obtain $\mathbf{z}^{c}$. Notice that $\mathbf{y}_{\mathbf{x}}$ above is again the true data vector given in \cref{eq-yx}.

A close relationship exists between real-coefficient and complex-coefficient interpolation schemes. In fact, we have the decomposition $A^{c}_{\mathbf{x}} = D_{\mathbf{x}}V_{\mathbf{x}}$, where the diagonal matrix (with $\omega_k = e^{i x_k}$)
\begin{equation}
    D _{\mathbf{x}} :=\operatorname{diag}\left(\omega_0^{-\Omega_r}, \omega_1^{-\Omega_r},\ldots, \omega_{2r}^{-\Omega_r}\right) \in \mathbb{C}^{n \times n}
\end{equation}
is unitary, and
\begin{equation}\label{eq-Vx}
    V _{\mathbf{x}}:= \left[\begin{array}{cccccccc}
    1 & \omega_0^{\Omega_r-\Omega_{r-1}} & \cdots & \omega_0^{\Omega_r-\Omega_{1}} &\omega_0^{\Omega_r} & \omega_0^{\Omega_r+\Omega_{1}} & \cdots& \omega_0^{2\Omega_r} \\
    1 & \omega_1^{\Omega_r-\Omega_{r-1}} & \cdots &  \omega_1^{\Omega_r-\Omega_{1}} &\omega_1^{\Omega_r} & \omega_1^{\Omega_r+\Omega_{1}} & \cdots& \omega_1^{2\Omega_r} \\
    \vdots & \vdots & \ddots & \vdots & \vdots & \ddots & \vdots\\
    1 & \omega_{2r}^{\Omega_r-\Omega_{r-1}} & \cdots & \omega_{2r}^{\Omega_r-\Omega_{1}} &\omega_{2r}^{\Omega_r}  & \omega_{2r}^{\Omega_r+\Omega_{1}}  & \cdots& \omega_{2r}^{2\Omega_r}
    \end{array}\right]\in \mathbb{C}^{n \times n},
\end{equation}
which becomes the Vandermonde matrix for equidistant frequencies $\Omega_k=k$ for $k=1,\ldots,r$. The relationship between the real interpolation matrix $A_{\mathbf{x}}$ in \cref{eq-A} and all the previously defined matrices is as follows:
\begin{equation}\label{eq-matrix-relations}
    A_{\mathbf{x}} = A^{c}_{\mathbf{x}} C = D_{\mathbf{x}} V_{\mathbf{x}} C.
\end{equation}
Since $D_{\mathbf{x}}$ and $C$ are always invertible, $A_{\mathbf{x}}$ is invertible if and only if $V_{\mathbf{x}}$ is invertible. When equidistant frequencies $\Omega_k=k$ for $k=1,\ldots,r$, hold, we can derive the well-known Vandermonde determinant:
\begin{equation}
    \operatorname{det}(V_{\mathbf{x}})
    = \prod_{0 \leq j < k \leq n} \left( \omega_k - \omega_j \right)
    = \prod_{0 \leq j < k \leq n} \left( e^{ix_k} - e^{ix_j}\right),
\end{equation}
which is non-zero if and only if all $x_i$ are distinct modulo $2\pi$. Therefore, the \cref{eq-yAz} (also \cref{eq-yAz-prime}) always has the unique solution. However, for general $\Omega_k$, it is difficult to find a similar conclusion. 

In our ICD algorithms, we only need to use the matrix $A_{\mathbf{x}}$ to recover the real coefficients $\mathbf{z}$, but the complex coefficient counterpart is very convenient for theoretical analysis. For example, \cref{lem-shift-invariant} in \cref{app-pratical-ICD} is a typical instance.

\section{Reduced ICD algorithm}\label{app-pratical-ICD}

In the standard ICD \cref{alg-standard-ICD}, each interpolation requires $2r_j+1$ function evaluations. 
In the following, we propose a reduced ICD \cref{alg-reduced-ICD}, which is the almost same as the standard ICD but only requires $2r_j$ function evaluations per iteration (the same as in RCD).
% \cref{alg-reduced-ICD} is the implementation version of our code.
The differences from \cref{alg-standard-ICD} are marked with $\blacktriangleright$.

\begin{algorithm}[H]
\caption{Reduced ICD Method for \cref{Problem}}
\label{alg-reduced-ICD}
\SetAlgoLined
\SetKwInOut{Input}{Input}
\SetKwInOut{Output}{Output}
\Input{Initial parameters $\boldsymbol{\theta}^0 = [\theta_1^0, \ldots, \theta_m^0]^{\dagger}$, and the number of iterations $\mathsf{T}$.}
\Output{Optimized parameters $\boldsymbol{\theta}^{\mathsf{T}}$ after $\mathsf{T}$ iterations.}
Obtain the optimal interpolation schemes $\{(\mathbf{x}^{j,*},A_{\mathbf{x}^{j,*}}^{-1})\}_{j=1}^m$ using \cref{alg-opt-interpolation}\; 
$\blacktriangleright$ Compute initial function value $\hat{f}^{0} := \tilde{f}(\boldsymbol{\theta}^{0})$\; 
\For{$t = 0$ \KwTo $\mathsf{T}$}{
    Select a coordinate $j \in \{1, \ldots, m\}$, either sequentially or uniformly at random\; 
    Fix all parameters of $\boldsymbol{\theta}^t$ except for $\theta_j^t$, and consider the restricted univariate function $\theta_j \mapsto f(\theta_j)$\;  
    $\blacktriangleright$ (Quantum burden) Construct the observed data vector at $\mathbf{x}^{j,*}$ but starting from $\theta_j^t$, i.e., 
\begin{equation}
 \mathbf{y}_{\text{obs}} := [\tilde{f}(\theta_j^t), \tilde{f}(\theta_j^t+ (x_1^{j,*}- x^{j,*}_0)), \ldots, \tilde{f}(\theta_j^t+ (x_{2r}^{j,*}- x^{j,*}_0))]^{\dagger},
\end{equation}
and $\tilde{f}(\theta_j^t)$ is replaced by $\hat{f}^{t}$\;
%\hspace{-3mm} 
    $\blacktriangleright$ Construct the matrix $E_{s}^{-1}$ in \cref{eq-defn-Bi} with $s:=\theta_j^t-x^{j,*}_0$\; 
    $\blacktriangleright$ Compute the estimated coefficients $\hat{\mathbf{z}} := E_{s}^{-1} A_{\mathbf{x}^{j,*}}^{-1}  \mathbf{y}_{\text{obs}}$ and recover the estimated function $\hat{f}(\theta_j)$ as in \cref{eq-inter-hat-f}\; 
    Let $\theta_j^{t+1} := \underset{\theta_j \in \mathbb{R}}{\operatorname{argmin}\;} \hat{f}(\theta_j)$\; 
    Let $\theta_i^{t+1} := \theta_i^{t}$ for all $i \neq j$\; 
    $\blacktriangleright$ Record $\hat{f}^{t+1} := \hat{f}(\theta_j^{t+1})$\; 
}
\end{algorithm}

We now provide a detailed explanation of the reduced ICD described in \cref{alg-reduced-ICD}. 
The improvement relies on the following lemma concerning the shift invariance of interpolation. 
We return to the notation introduced in \cref{sec-interp-real}, omitting the specific index $j$ for clarity.

\begin{lemma}[Shift invariance of interpolation nodes]\label{lem-shift-invariant}
Fix some nodes $\mathbf{x} \in \mathbb{R}^n$ and let $\mathbf{1} \in \mathbb{R}^n$ denotes the all-ones vector.
Then, for any shift value $s \in \mathbb{R}$, we have:
\begin{enumerate}
    \item \label{item-1} 
    $A_{\mathbf{x} + s \mathbf{1} }^{-1} = E_s^{-1} A_{\mathbf{x}}^{-1} $ where the block diagonal matrix $E_s^{-1}=\operatorname{diag}\left(1, B_1^{\dagger}, \ldots, B_r^{\dagger}\right) \in \mathbb{R}^{n \times n}$ and $B_i \in \mathbb{R}^{2 \times 2}$ is given by rotation matrices
\begin{equation}\label{eq-defn-Bi}
B_i :=\left[\begin{array}{ll}
\cos \left( \Omega_i s\right) & \sin \left(\Omega_i  s\right) \\
- \sin \left(\Omega_i s\right) & \cos \left(\Omega_i  s\right)
\end{array}\right];
\end{equation}
\item \label{item-2} $\|A_{\mathbf{x} + s \mathbf{1} }^{-1}\|_F^2 = \|A_{\mathbf{x}}^{-1}\|_F^2$.
\end{enumerate}
\end{lemma}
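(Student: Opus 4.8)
The plan is to prove item~\ref{item-1} by a direct computation exploiting the angle-addition formulas for sine and cosine, and then deduce item~\ref{item-2} as an immediate corollary via orthogonality of the blocks $B_i$.

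\textbf{Step 1: reduce item~\ref{item-1} to a row-wise identity relating $A_{\mathbf{x}+s\mathbf{1}}$ and $A_{\mathbf{x}}$.} Since $A_{\mathbf{x}+s\mathbf{1}}^{-1} = E_s^{-1} A_{\mathbf{x}}^{-1}$ is equivalent (by taking inverses, using that $E_s$ is invertible because each $B_i$ is a rotation matrix with determinant $1$) to $A_{\mathbf{x}+s\mathbf{1}} = A_{\mathbf{x}}\, E_s$, where $E_s = \operatorname{diag}(1, B_1, \ldots, B_r)$, it suffices to verify this latter matrix identity. Write the $\ell$-th row of $A_{\mathbf{x}}$ as $\mathbf{t}(x_\ell)^{\dagger}$ with $\mathbf{t}(\cdot)$ as defined in \cref{eq-t0x}. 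Then the claimed identity is exactly
\begin{equation}
\mathbf{t}(x_\ell + s)^{\dagger} = \mathbf{t}(x_\ell)^{\dagger} E_s \quad \text{for every } \ell = 0, 1, \ldots, 2r.
\end{equation}

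\textbf{Step 2: verify the row identity block-by-block.} The first entry of $\mathbf{t}(x_\ell+s)$ is $1/\sqrt{2}$, matching the $1$ in the top-left corner of $E_s$. For the $k$-th frequency block ($k = 1, \ldots, r$), we must check
\begin{equation}
\begin{bmatrix} \cos(\Omega_k(x_\ell+s)) \\ \sin(\Omega_k(x_\ell+s)) \end{bmatrix}^{\dagger}
= \begin{bmatrix} \cos(\Omega_k x_\ell) \\ \sin(\Omega_k x_\ell) \end{bmatrix}^{\dagger} B_k.
\end{equation}
Expanding the left side by the angle-addition formulas gives $\cos(\Omega_k x_\ell)\cos(\Omega_k s) - \sin(\Omega_k x_\ell)\sin(\Omega_k s)$ and $\sin(\Omega_k x_\ell)\cos(\Omega_k s) + \cos(\Omega_k x_\ell)\sin(\Omega_k s)$, which is precisely the row vector $[\cos(\Omega_k x_\ell),\ \sin(\Omega_k x_\ell)]$ times the matrix $B_k$ in \cref{eq-defn-Bi}. (Note $B_k$ is the transpose of the standard rotation by $\Omega_k s$, which is why it appears on the right.) This establishes $A_{\mathbf{x}+s\mathbf{1}} = A_{\mathbf{x}} E_s$, hence item~\ref{item-1} with $E_s^{-1} = \operatorname{diag}(1, B_1^{\dagger}, \ldots, B_r^{\dagger})$ since $B_k^{-1} = B_k^{\dagger}$ for a rotation.

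\textbf{Step 3: deduce item~\ref{item-2}.} The Frobenius norm is invariant under left multiplication by an orthogonal matrix. Since each $B_i$ is a rotation matrix, $B_i^{\dagger}$ is orthogonal, so $E_s^{-1}$ is block-diagonal with orthogonal blocks and is therefore itself orthogonal: $(E_s^{-1})^{\dagger} E_s^{-1} = \operatorname{diag}(1, B_1 B_1^{\dagger}, \ldots, B_r B_r^{\dagger}) = I_n$. Consequently
\begin{equation}
\|A_{\mathbf{x}+s\mathbf{1}}^{-1}\|_F^2 = \|E_s^{-1} A_{\mathbf{x}}^{-1}\|_F^2 = \operatorname{tr}\!\big((A_{\mathbf{x}}^{-1})^{\dagger} (E_s^{-1})^{\dagger} E_s^{-1} A_{\mathbf{x}}^{-1}\big) = \operatorname{tr}\!\big((A_{\mathbf{x}}^{-1})^{\dagger} A_{\mathbf{x}}^{-1}\big) = \|A_{\mathbf{x}}^{-1}\|_F^2,
\end{equation}
which is item~\ref{item-2}. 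I do not expect a genuine obstacle here; the only mild care needed is bookkeeping the transpose on $B_i$ (ensuring $E_s$, not $E_s^{-1}$, multiplies on the right of $A_{\mathbf{x}}$) and confirming that the shift acts uniformly on all nodes so that a single block-diagonal $E_s$ works simultaneously for every row. As the excerpt notes, one could alternatively route this through the complex interpolation matrix $A^c_{\mathbf{x}}$, where a shift $\mathbf{x} \mapsto \mathbf{x} + s\mathbf{1}$ multiplies column $k$ by $e^{i\Omega_k s}$, i.e.\ right-multiplication by a unitary diagonal matrix; this is perhaps the cleanest way to see the Frobenius-norm invariance, but the real-form computation above is elementary and self-contained.
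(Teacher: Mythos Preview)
Your proof is correct and takes a genuinely more elementary route than the paper. The paper works through the complex decomposition $A_{\mathbf{x}} = D_{\mathbf{x}} V_{\mathbf{x}} C$ established in \cref{app-recover-complex}: it tracks how the shift acts on each factor separately (showing $V_{\mathbf{x}+s\mathbf{1}} = V_{\mathbf{x}} S$ for a diagonal $S$ and $D_{\mathbf{x}+s\mathbf{1}} = e^{-is\Omega_r} D_{\mathbf{x}}$), combines these to obtain $A_{\mathbf{x}+s\mathbf{1}} = A_{\mathbf{x}}(2e^{-is\Omega_r} C^{\dagger} S C)$, and only then computes this product explicitly to reveal the block-diagonal rotation structure $E_s = \operatorname{diag}(1,B_1,\ldots,B_r)$. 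You instead verify $A_{\mathbf{x}+s\mathbf{1}} = A_{\mathbf{x}} E_s$ directly, row-by-row and block-by-block, via the angle-addition formulas---bypassing the complex machinery entirely. For item~\ref{item-2}, your argument that $E_s^{-1}$ is orthogonal (as a block-diagonal matrix of rotations) and hence preserves the Frobenius norm is also more direct than the paper's trace computation, which again routes through $E_s = 2C^{\dagger} S' C$ with $S'$ unitary and $CC^{\dagger} = \tfrac{1}{2}I$. Your approach is cleaner and self-contained for this lemma in isolation; the paper's pays off because the complex factorization is reused elsewhere (e.g., for invertibility via the Vandermonde determinant and in the proofs of \cref{thm-first-veiw,thm-second-veiw,thm-third-veiw}). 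You even flag the complex-matrix alternative yourself, so you clearly see both paths.
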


\begin{proof}
Let us now construct the matrices (introduced in \cref{app-recover-complex}) corresponding to the shifted interpolation nodes $\mathbf{x} + s \mathbf{1} $. First, we have (with $\omega_k = e^{i x_k}$)
\begin{equation}
    V_{\mathbf{x} + s \mathbf{1} }=
    \left[\begin{array}{ccccc}
    1 & e^{i s (\Omega_r-\Omega_{r-1})} \omega_0^{\Omega_r-\Omega_{r-1}} & e^{i s (\Omega_r-\Omega_{r-2})} \omega_0^{\Omega_r-\Omega_{r-2}} & \cdots & e^{i s 2\Omega_r} \omega_0^{2 \Omega_r} \\
    1 & e^{i s (\Omega_r-\Omega_{r-1})} \omega_1^{\Omega_r-\Omega_{r-1}} & e^{i s (\Omega_r-\Omega_{r-2})} \omega_1^{\Omega_r-\Omega_{r-2}} & \cdots & e^{i s 2\Omega_r} \omega_1^{2 \Omega_r} \\
    \vdots & \vdots & \vdots & \ddots & \vdots \\
    1 & e^{i s (\Omega_r-\Omega_{r-1})} \omega_{2 r}^{\Omega_r-\Omega_{r-1}} & e^{i s (\Omega_r-\Omega_{r-2})} \omega_{2 r}^{\Omega_r-\Omega_{r-2}} & \cdots & e^{i s 2\Omega_r} \omega_{2 r}^{2 \Omega_r} \\
    \end{array}\right].
\end{equation}
We can observe that $ V_{\mathbf{x} + s \mathbf{1} } =  V_{\mathbf{x}} S$, where $S \in \mathbb{C}^{n \times n}$ is the diagonal matrix
\begin{equation}
    S :=\operatorname{diag}\left(1, e^{i s (\Omega_r-\Omega_{r-1})}, e^{i s (\Omega_r-\Omega_{r-2})},\ldots, e^{i s 2\Omega_r}\right).
\end{equation}
Next, we have
\begin{equation}
    D_{\mathbf{x} + s \mathbf{1} }=\operatorname{diag}\left(e^{-i s \Omega_ r} \omega_0^{-\Omega_r}, e^{-i s \Omega_ r} \omega_1^{-\Omega_r},\ldots, e^{-i s \Omega_ r} \omega_{2r}^{-\Omega_r}\right)= e^{-i s \Omega_ r} \cdot D_{\mathbf{x}}.
\end{equation}
Finally, using the identity $C C^\dagger = \frac{1}{2} I$, we obtain
\begin{equation}
    A_{\mathbf{x} + s \mathbf{1}} = D_{\mathbf{x} + s \mathbf{1}} V_{\mathbf{x} + s \mathbf{1}} C = (e^{-i s \Omega_r} \cdot D_{\mathbf{x}}) (V_{\mathbf{x}} S) C = 2 e^{-i s \Omega_r} \cdot (D_{\mathbf{x}} V_{\mathbf{x}} C) C^\dagger S C.
\end{equation}
Since $A_{\mathbf{x}} = D_{\mathbf{x}} V_{\mathbf{x}} C$, we get
\begin{equation}
    A_{\mathbf{x} + s \mathbf{1}} = A_{\mathbf{x}} (2 e^{-i s \Omega_r} \cdot C^\dagger S C) = A_{\mathbf{x}} E_s,
\end{equation}
where $S^{\prime} := e^{-i s \Omega_ r} \cdot S = \operatorname{diag}(e^{-i s \Omega_ r}, \ldots, e^{-i s \Omega_1}, 1, e^{i s \Omega_1}, \ldots, e^{i s \Omega_r})$ is unitary, and
\begin{equation}
    E_s := 2 C^{\dagger} S^{\prime} C.
\end{equation}
In fact, after substituting the specific expression \cref{eq-C} for $C$, $E_s$ can be expressed as a block diagonal matrix given by
\begin{equation}
    E_s = \operatorname{diag}\left(1, B_1, \ldots, B_r\right) \in \mathbb{R}^{n \times n}
\end{equation}
where $B_i$ is defined as in \cref{eq-defn-Bi}. Since each $B_i$ is a rotation matrix, we know that $B_i^{-1} = B_i^{\dagger}$, so
\begin{equation}
    E_s^{-1} = 2 C^\dagger (S')^{-1} C = \operatorname{diag}\left(1, B_1^{\dagger}, \ldots, B_r^{\dagger}\right).
\end{equation}
This proves \cref{item-1} of \cref{lem-shift-invariant}. On the other hand,
\begin{equation}
    \begin{aligned}
    \left\|A_{\mathbf{x}+s \mathbf{1}}^{-1}\right\|_F^2
    & = \operatorname{tr}\left[\left(A_{\mathbf{x}+s \mathbf{1}}^{-1}\right)^{\dagger} A_{\mathbf{x}+s \mathbf{1}}^{-1}\right] \\
    & = \operatorname{tr}\left[\left(A_{\mathbf{x}}^{-1} \right)^{\dagger} \left(E_s^{-1}\right)^{\dagger} E_s^{-1} A_{\mathbf{x}}^{-1} \right] \quad \left(\text{using } A_{\mathbf{x} + s \mathbf{1}} = A_{\mathbf{x}} E_s \right)\\
    & = \operatorname{tr}\left[\left(A_{\mathbf{x}}^{-1} \right)^{\dagger} \left(2 C^{\dagger} \left(S^{\prime} \right)^{-1} C\right)^{\dagger} 2 C^{\dagger} \left(S^{\prime} \right)^{-1} C A_{\mathbf{x}}^{-1} \right] \\
    & = \operatorname{tr}\left[\left(A_{\mathbf{x}}^{-1} \right)^{\dagger} \left(4 C^{\dagger} S^{\prime} C C^{\dagger} \left(S^{\prime} \right)^{-1} C \right) A_{\mathbf{x}}^{-1} \right] \\
    & = \operatorname{tr}\left[\left(A_{\mathbf{x}}^{-1} \right)^{\dagger} A_{\mathbf{x}}^{-1} \right] = \left\|A_{\mathbf{x}}^{-1}\right\|_F^2. \quad (\text{since } S' \text{ is unitary, } C C^{\dagger} = C^{\dagger} C = \frac{1}{2} I )
    \end{aligned}
\end{equation}
This completes the proof of \cref{item-2} of \cref{lem-shift-invariant}.
\end{proof}

The \cref{item-2} in \cref{lem-shift-invariant} states that the value of $\|A_{\mathbf{x}}^{-1}\|_F^2$ is invariant under any shift of the fixed vector $\mathbf{x}$. 
As a result, if a solution $\mathbf{x}^*=[x_0^*,x_1^*,\ldots,x_{2r}^*]^{\dagger} \in \mathbb{R}^n$ solves \cref{pro-min-mse}, then any shifted vector $\mathbf{x}^* + s\mathbf{1} \in \mathbb{R}^n$ is also a solution. 
This implies that for $\operatorname{MSE}(\hat{\mathbf{z}}_{\mathbf{x}})$, the key is the spacing between the interpolation nodes, rather than their absolute positions. 
Consequently, we are free to choose the position of the first point, and then determine the subsequent $2r$ points based on the optimal spacing of $\mathbf{x}^*$. 
For example, given any specific $\theta \in \mathbb{R}$, we can choose $s:= \theta - x^*_0$, then
\begin{align}
    \mathbf{x}_{\text{new}}^*  = \mathbf{x}^* + s\mathbf{1} &= [x_0^* + s,x_1^*+ s,\ldots,x_{2r}^*+ s]^{\dagger} \\
& = [\theta,\theta + (x_1^*- x^*_0),\ldots,\theta +(x_{2r}^*- x^*_0)]^{\dagger}
\end{align}
is another valid solution and preserve the minimization of the $\operatorname{MSE}(\hat{\mathbf{z}}_{\mathbf{x}})$ in \cref{pro-min-mse}. 
Moreover, to obtain the formulation for $A_{\mathbf{x}_{\text{new}}^*}^{-1}$, we do not need to naively compute the inverse matrix again. The  \cref{item-1} in \cref{lem-shift-invariant} shows that we only need to multiply the original inverse matrix $A_{\mathbf{x}^*}^{-1}$ by a simple block diagonal matrix $E_s^{-1}$ that depends on the shift value $s$, whose diagonal blocks are rotation matrices $B_i$. 

Return to the $t$-th iteration of \cref{alg-reduced-ICD}, after selecting coordinate $j$, we deliberately set the first node to be the current value at $j$, i.e., $\theta_j^t$. The following reasoning motivates this choice.
Consider the previous iteration $\boldsymbol{\theta}^{t-1} = [\theta_1^{t-1}, \ldots, \theta_k^{t-1}, \ldots, \theta_m^{t-1}]^{\dagger}$. After selecting coordinate $k$, we update $\theta_k^{t-1} \to \theta_k^{t} := \operatorname{argmin}\; \hat{f}(\theta_k)$, and the new point becomes $\boldsymbol{\theta}^{t} = [\theta_1^{t-1}, \ldots, \theta_k^{t}, \ldots, \theta_m^{t-1}]^{\dagger}$. From the unbiasedness of the approximated function, as shown in \cref{eq-unbaised-hatf}, we know that
\begin{equation}\label{eq-Ef1}
    \operatorname{E}[\hat{f}(\theta_k^t)] = f(\boldsymbol{\theta}^t).
\end{equation}
In the $t$-th iteration, we select another coordinate $j$. Due to the properties of $\tilde{f}$, we also have
\begin{equation}\label{eq-Ef2}
    \operatorname{E}[\tilde{f}(\theta_j^t)] = f(\boldsymbol{\theta}^t).
\end{equation}
Thus, both $\hat{f}(\theta_k^t)$ and $\tilde{f}(\theta_j^t)$ are unbiased estimates of $f(\boldsymbol{\theta}^t)$. This means that $\hat{f}(\theta_k^t)$ can replace $\tilde{f}(\theta_j^t)$, allowing us to reduce one function evaluation when we set the first interpolation node to $\theta_j^t$. 
Combining these considerations, we describe the reduced ICD algorithm in \cref{alg-reduced-ICD}.

However, we did not previously analyze their variances in \cref{eq-Ef1} and \cref{eq-Ef2}. Specifically, the variance of $\tilde{f}(\theta_j^t)$ is $\sigma^2$ according to \cref{assm-tilde-f}, whereas the variance of $\hat{f}(\theta_k^t)$ depends on $\theta_k^t$ and is generally difficult to determine. It is typically not exactly equal to $\sigma^2$, implying that their probability distributions are not strictly identical. Nevertheless, when the PQC problem involves equidistant frequencies and $\frac{2\pi}{n}$-equidistant nodes are adopted, as shown in \cref{eq-1101}, the variance at any point remains $\sigma^2$, matching that of $\tilde{f}(\theta_j^t)$. Consequently, we may theoretically regard $\hat{f}(\theta_k^t)$ as a genuine sample from $\tilde{f}(\theta_j^t) \sim \mathcal{N}(f(\theta_j^t), \sigma^2)$. At this point, the theoretical foundation of the reduced ICD in \cref{alg-reduced-ICD} is complete

\begin{remark}[Beware of error accumulation!]
Although reduced ICD reduces the number of function evaluations by one, it sacrifices sample independence, leading to cumulative errors that may ultimately cause the algorithm to fail. In experiments of \cref{subsec-num-re-1}, we clearly observed this phenomenon. Consequently, reduced ICD is less robust than standard ICD. To mitigate this issue, one can combine the strengths of both methods: for instance, after M iterations of reduced ICD, perform a single standard ICD step to reset the accumulated error. Of course, with sufficient quantum resources, directly applying standard ICD remains the most reliable approach.
\end{remark}

\section{Proofs of optimality of \texorpdfstring{$\frac{2 \pi}{n}$}{}-equidistant interpolation nodes} \label{app-proof-equidistnt}

In \cref{sec-equ-opt}, we asserted that $\frac{2 \pi}{n}$-equidistant interpolation nodes are optimal from three perspectives, as formalized in \cref{thm-first-veiw,thm-second-veiw,thm-third-veiw}. The purpose of this appendix is to prove these three theorems. We begin by establishing several auxiliary results. 

\subsection{Auxiliary results for Vandermonde matrix}

\begin{lemma}\label{lem-complex-1}
Let $n \geq 2$ be an integer, and let $m$ be any nonzero integer such that $-n < m < n$. Consider the $n$-th roots of unity (the solutions to the equation $z^n = 1$) $\omega_k = e^{2 \pi i \frac{k}{n}}$ for $k = 0, 1, \ldots, n-1$. Then,
\begin{equation}
    \sum_{k=0}^{n-1} \omega_k^{m} = 0.
\end{equation}
\end{lemma}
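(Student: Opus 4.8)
The plan is to recognize this as the standard geometric-series identity for roots of unity and to be careful only about the hypothesis that guarantees the common ratio is not $1$. First I would rewrite the sum explicitly: since $\omega_k = e^{2\pi i k/n}$, we have $\omega_k^m = e^{2\pi i m k/n} = \zeta^k$ where $\zeta := e^{2\pi i m/n}$. Hence
\begin{equation}
    \sum_{k=0}^{n-1} \omega_k^m = \sum_{k=0}^{n-1} \zeta^k,
\end{equation}
which is a finite geometric sum with ratio $\zeta$.

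Next I would check that $\zeta \neq 1$. We have $\zeta = 1$ if and only if $m/n \in \mathbb{Z}$, i.e. $n \mid m$. But by hypothesis $m$ is a nonzero integer with $-n < m < n$, so $0 < |m| < n$ and therefore $n \nmid m$; consequently $\zeta \neq 1$. At the same time $\zeta^n = e^{2\pi i m} = 1$. Applying the closed form for a geometric sum then gives
\begin{equation}
    \sum_{k=0}^{n-1} \zeta^k = \frac{\zeta^n - 1}{\zeta - 1} = \frac{1 - 1}{\zeta - 1} = 0,
\end{equation}
which is the claimed identity.

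There is essentially no hard part here; the only thing to watch is the use of the two-sided bound $-n < m < n$ together with $m \neq 0$, which is exactly what rules out $n \mid m$ and hence secures $\zeta \neq 1$ so that the geometric-series formula applies. If one preferred an argument avoiding division, an alternative is to multiply $S := \sum_{k=0}^{n-1}\zeta^k$ by $\zeta$ and use $\zeta^n = 1$ to get $\zeta S = S$, so $(\zeta - 1)S = 0$, and again conclude $S = 0$ from $\zeta \neq 1$; either route is routine and short.
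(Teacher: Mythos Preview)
Your proof is correct and follows essentially the same route as the paper's: both identify the sum as a geometric series with ratio $e^{2\pi i m/n}$, verify this ratio is not $1$ using the hypothesis $0 < |m| < n$, and conclude via the closed-form formula together with $e^{2\pi i m} = 1$.
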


\begin{proof}
We need to show $S := \sum_{k=0}^{n-1} e^{2\pi i \frac{mk}{n}} = 0.$ This sum is a geometric series with the common ratio $q := e^{2\pi i \frac{m}{n}} \neq 1$ since $m$ is a nonzero integer and $-n < m < n$. Then,
\begin{equation}
    S = \sum_{k=0}^{n-1} q^k = \frac{1 - q^n}{1 - q}.
\end{equation}
Since $q^n = \left(e^{2\pi i \frac{m}{n}}\right)^n = e^{2\pi i m} = 1$, we have $S = 0.$
\end{proof}

\begin{lemma}\label{lem-complex-2}
Let $n \geq 2$ be an integer. Consider the $n$-th roots of unity $\omega_k = e^{2 \pi i \frac{k}{n}}$ for $k = 0, 1, \ldots, n-1$. Then, for any pair $k, j$ such that $k < j$, and $k, j = 0, 1, \ldots, n-1$, we have
\begin{equation}
    1+\frac{\omega_k}{\omega_j}+\left(\frac{\omega_k}{\omega_j}\right)^2+\cdots+\left(\frac{\omega_k}{\omega_j}\right)^{n-1}=0.
\end{equation}
\end{lemma}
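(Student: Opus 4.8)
The plan is to reduce \cref{lem-complex-2} directly to \cref{lem-complex-1}. The key observation is that the quotient $\omega_k/\omega_j$ is itself an $n$-th root of unity: indeed $\omega_k/\omega_j = e^{2\pi i (k-j)/n}$, and since $0 \le k < j \le n-1$ we have $-n < k-j < 0$, so $m := k-j$ is a nonzero integer with $-n < m < n$. Thus the left-hand side of the claimed identity is precisely the geometric sum $\sum_{\ell=0}^{n-1} (\omega_k/\omega_j)^\ell = \sum_{\ell=0}^{n-1} (e^{2\pi i m/n})^\ell$, which is exactly the sum $\sum_{\ell=0}^{n-1}\omega_\ell^{m}$ appearing in \cref{lem-complex-1} (after relabelling the summation index). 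Applying \cref{lem-complex-1} with this value of $m$ gives that the sum vanishes.

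Concretely, the proof I would write is: set $\zeta := \omega_k/\omega_j = e^{2\pi i (k-j)/n}$. Since $k \ne j$ and $|k-j| < n$, we have $\zeta \ne 1$ but $\zeta^n = e^{2\pi i (k-j)} = 1$. Then
\begin{equation}
    1 + \zeta + \zeta^2 + \cdots + \zeta^{n-1} = \frac{1 - \zeta^n}{1 - \zeta} = \frac{1-1}{1-\zeta} = 0,
\end{equation}
which is the assertion. (Alternatively, one invokes \cref{lem-complex-1} verbatim with $m = k-j$.) Substituting back $\zeta = \omega_k/\omega_j$ recovers the displayed equation in the statement.

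There is essentially no obstacle here: this lemma is a routine corollary of \cref{lem-complex-1}, and its only purpose is bookkeeping — presumably it will be used to show that the columns of the Vandermonde matrix $V_{\mathbf{x}^*}$ built from the $\frac{2\pi}{n}$-equidistant nodes are mutually orthogonal, so that $V_{\mathbf{x}^*}^{\dagger} V_{\mathbf{x}^*}$ is a scalar multiple of the identity. The one minor care point is to make sure the range of the exponent $k-j$ genuinely falls in the open interval $(-n,n)$ excluding $0$, which is guaranteed by the hypothesis $k < j$ with both indices in $\{0,1,\dots,n-1\}$; this is exactly the hypothesis needed to invoke \cref{lem-complex-1}.
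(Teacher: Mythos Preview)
Your proof is correct and follows essentially the same approach as the paper: both identify $\omega_k/\omega_j$ as a nontrivial $n$-th root of unity and then apply the geometric-series identity $1+\zeta+\cdots+\zeta^{n-1}=(1-\zeta^n)/(1-\zeta)=0$. The only cosmetic difference is that the paper re-derives this identity via the factorization $(\omega-1)(\omega^{n-1}+\cdots+1)=\omega^n-1$ rather than citing \cref{lem-complex-1}, whereas you (more economically) note that \cref{lem-complex-1} with $m=k-j$ already does the job.
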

\begin{proof}
Let $\omega$ be an arbitrary $n$-th root of unity and $\omega \neq 1$. Then, $\omega^n = 1$. Note that
\begin{equation}
    (\omega-1)\left(\omega^{n-1}+\omega^{n-2}+\cdots+1\right) = \omega^n-1=0.
\end{equation}
Since $\omega-1 \neq 0$, we conclude that
\begin{equation}\label{eq-sum-roots}
    1+\omega+\omega^2+\cdots+\omega^{n-1}=0.
\end{equation}
Given $k < j$, and $k, j = 0, 1, \ldots, n-1$, then
\begin{equation}
    \frac{\omega_k}{\omega_j} = \frac{e^{2 \pi i \frac{k}{n}}}{e^{2 \pi i \frac{j}{n}}}
    = e^{2 \pi i \frac{k-j}{n}}
    = e^{2 \pi i \frac{n + k - j}{n}}.
\end{equation}
Note that all possible values of $n + k - j$ are $1, 2, \ldots, n-1$, except for $0$ and $n$. Therefore, $\frac{\omega_k}{\omega_j}$ is again one of the $n$-th roots of unity, but is not equal to one. Applying \cref{eq-sum-roots} completes the lemma.
\end{proof}

\begin{lemma}\label{lem-Vandermonde}
Let $n \geq 2$ be an integer. Consider the $n$-th roots of unity $\omega_k = e^{2 \pi i \frac{k}{n}}$ for $k = 0, 1, \ldots, n-1$. Consider the Vandermonde matrix
\begin{equation}
    V := \left[\begin{array}{ccccc}
    1 & \omega_0 & \omega_0^2 & \cdots & \omega_0^{n-1}\\
    1 & \omega_1 & \omega_1^2 & \cdots & \omega_1^{n-1}\\
    \vdots & \vdots & \vdots & \ddots & \vdots \\
    1 & \omega_{n-1} & \omega_{n-1}^2 & \cdots & \omega_{n-1}^{n-1}
    \end{array}\right]\in \mathbb{C}^{n\times n}.
\end{equation}
Then, $V^{\dagger} V = V V^{\dagger} = n I$.
\end{lemma}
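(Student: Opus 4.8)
The plan is to compute the $(p,q)$ entry of $V^{\dagger}V$ directly from the definition and recognize it as a geometric sum over roots of unity, which either telescopes to $n$ (on the diagonal) or vanishes (off the diagonal) by the auxiliary lemmas already established. Writing $V = (\omega_k^{\ell})_{k,\ell}$ with row index $k = 0,\dots,n-1$ and column index $\ell = 0,\dots,n-1$, the conjugate transpose has entries $(V^{\dagger})_{\ell k} = \overline{\omega_k^{\ell}} = \omega_k^{-\ell}$ since each $\omega_k$ lies on the unit circle. Hence
\begin{equation}
(V^{\dagger}V)_{pq} = \sum_{k=0}^{n-1} \omega_k^{-p}\,\omega_k^{q} = \sum_{k=0}^{n-1} \omega_k^{\,q-p}.
\end{equation}

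Now I would split into cases on $m := q - p$, where $p, q \in \{0,\dots,n-1\}$ so that $-n < m < n$. If $m = 0$, every summand is $\omega_k^0 = 1$, giving $(V^{\dagger}V)_{pp} = n$. If $m \neq 0$, then $m$ is a nonzero integer with $-n < m < n$, so \cref{lem-complex-1} applies verbatim and yields $\sum_{k=0}^{n-1}\omega_k^{m} = 0$. Combining the two cases gives $(V^{\dagger}V)_{pq} = n\,\delta_{pq}$, i.e. $V^{\dagger}V = nI$. For the identity $VV^{\dagger} = nI$, I would note that $V^{\dagger}V = nI$ already shows $V$ is invertible with $V^{-1} = \frac{1}{n}V^{\dagger}$; multiplying $V^{\dagger}V = nI$ on the left by $V$ and on the right by $V^{-1}$ (or simply invoking that a left inverse of a square matrix is also a right inverse) gives $VV^{\dagger} = nI$. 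Alternatively one can repeat the entrywise computation: $(VV^{\dagger})_{pq} = \sum_{\ell} \omega_p^{\ell}\overline{\omega_q^{\ell}} = \sum_{\ell}(\omega_p/\omega_q)^{\ell}$, which is $n$ when $p=q$ and $0$ otherwise by \cref{lem-complex-2}.

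There is essentially no hard part here: the statement is a clean orthogonality relation for the discrete Fourier matrix, and both auxiliary lemmas (\cref{lem-complex-1} and \cref{lem-complex-2}) have been set up precisely to discharge the off-diagonal vanishing. The only point requiring a little care is bookkeeping the index ranges so that the exponent $m = q-p$ genuinely satisfies the hypothesis $-n < m < n$ of \cref{lem-complex-1} and never accidentally equals $\pm n$; this is immediate from $p,q \in \{0,\dots,n-1\}$. I would present the proof in the forward direction ($V^{\dagger}V = nI$ via \cref{lem-complex-1}, then $VV^{\dagger} = nI$ via \cref{lem-complex-2} or by invertibility), keeping the computation to two or three displayed lines.
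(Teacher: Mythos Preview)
Your proposal is correct and follows essentially the same approach as the paper: compute $(V^{\dagger}V)_{pq}$ as $\sum_k \omega_k^{q-p}$ and invoke \cref{lem-complex-1} for the off-diagonal entries, then handle $VV^{\dagger}$ via the entrywise computation with \cref{lem-complex-2}. The only small addition is your alternative route to $VV^{\dagger}=nI$ through the left-inverse-is-right-inverse observation, which the paper does not use but which is a perfectly valid shortcut.
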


\begin{proof}
First, consider the product
\begin{equation}
    V^{\dagger} V = \left[\begin{array}{cccc}
    1 & 1 & \cdots & 1 \\
    \omega_0^{-1} & \omega_1^{-1} & \cdots & \omega_{n-1}^{-1} \\
    \omega_0^{-2} & \omega_1^{-2} & \cdots & \omega_{n-1}^{-2} \\
    \vdots & \vdots & \ddots & \vdots \\
    \omega_0^{-(n-1)} & \omega_1^{-(n-1)} & \cdots & \omega_{n-1}^{-(n-1)}
    \end{array}\right]
    \left[\begin{array}{ccccc}
    1 & \omega_0 & \omega_0^2 & \cdots & \omega_0^{n-1}\\
    1 & \omega_1 & \omega_1^2 & \cdots & \omega_1^{n-1}\\
    \vdots & \vdots & \vdots & \ddots & \vdots \\
    1 & \omega_{n-1} & \omega_{n-1}^2 & \cdots & \omega_{n-1}^{n-1}
    \end{array}\right].
\end{equation}
For indices $j, l = 1, \ldots, n$, the component of $V^{\dagger} V$ is
\begin{equation}
    [V^{\dagger} V]_{jl} = \sum_{k=0}^{n-1} \omega_{k}^{l-j}.
\end{equation}
If $j = l$, then $[V^{\dagger} V]_{jj} = \sum_{k=0}^{n-1} \omega_{k}^{0} = n.$ Consider $j \neq l$. Note that all possible values of $l-j$ are $\pm 1, \pm 2, \ldots, \pm (n-1)$. By \cref{lem-complex-1}, we have $[V^{\dagger} V]_{jl} = 0$. Thus, $V^{\dagger} V = n I$. Now consider the product
\begin{equation}
    VV^{\dagger} =
    \left[\begin{array}{ccccc}
    1 & \omega_0 & \omega_0^2 & \cdots & \omega_0^{n-1}\\
    1 & \omega_1 & \omega_1^2 & \cdots & \omega_1^{n-1}\\
    \vdots & \vdots & \vdots & \ddots & \vdots \\
    1 & \omega_{n-1} & \omega_{n-1}^2 & \cdots & \omega_{n-1}^{n-1}
    \end{array}\right]
    \left[\begin{array}{cccc}
    1 & 1 & \cdots & 1 \\
    \omega_0^{-1} & \omega_1^{-1} & \cdots & \omega_{n-1}^{-1} \\
    \omega_0^{-2} & \omega_1^{-2} & \cdots & \omega_{n-1}^{-2} \\
    \vdots & \vdots & \ddots & \vdots \\
    \omega_0^{-(n-1)} & \omega_1^{-(n-1)} & \cdots & \omega_{n-1}^{-(n-1)}
    \end{array}\right].
\end{equation}
For indices $j, l = 1, \ldots, n$, the component of $VV^{\dagger}$ is
\begin{equation}
    [V V^{\dagger}]_{jl} = \sum_{k=0}^{n-1}\left(\frac{\omega_j}{\omega_l}\right)^{k}.
\end{equation}
If $j = l$, then $[V V^{\dagger}]_{jj} = \sum_{k=0}^{n-1} 1^{k} = n.$ For $j \neq l$ and $j < l$, by \cref{lem-complex-2}, the elements above the diagonal are zeros. By symmetry, we have $V^{\dagger} V = n I$.
\end{proof}

\subsection{Other auxiliary results}

Let $S_{++}^n$ denote the set of $n \times n$ symmetric positive definite matrices. The following two properties about positive definite matrices can be found in many matrix textbooks. 

\begin{lemma}\label{lem-pd-1}
For any $X \in S_{++}^n$, we have the inequality $\operatorname{tr}(X^{-1}) \geqslant \frac{n^2}{\operatorname{tr}(X)}.$ The equality holds if and only if $X = \lambda I$ for some $\lambda > 0$.
\end{lemma}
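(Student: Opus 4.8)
The plan is to reduce everything to a scalar inequality on the eigenvalues of $X$. Since $X \in S_{++}^n$ is symmetric positive definite, it admits a spectral decomposition with strictly positive eigenvalues $\lambda_1, \dots, \lambda_n > 0$. Then $\operatorname{tr}(X) = \sum_{i=1}^n \lambda_i$ and, because $X^{-1}$ has eigenvalues $\lambda_i^{-1}$, also $\operatorname{tr}(X^{-1}) = \sum_{i=1}^n \lambda_i^{-1}$. So the claimed matrix inequality is exactly the scalar statement
\begin{equation}
\left( \sum_{i=1}^n \lambda_i \right)\left( \sum_{i=1}^n \frac{1}{\lambda_i} \right) \geq n^2 .
\end{equation}

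To prove this I would apply the Cauchy--Schwarz inequality to the vectors $u = (\sqrt{\lambda_1}, \dots, \sqrt{\lambda_n})^{\dagger}$ and $v = (1/\sqrt{\lambda_1}, \dots, 1/\sqrt{\lambda_n})^{\dagger}$, which are well-defined since all $\lambda_i > 0$. This gives $\|u\|^2 \|v\|^2 \geq \langle u, v \rangle^2 = \big(\sum_i \sqrt{\lambda_i}\cdot \tfrac{1}{\sqrt{\lambda_i}}\big)^2 = n^2$, which is precisely the displayed inequality; dividing by $\operatorname{tr}(X) > 0$ yields $\operatorname{tr}(X^{-1}) \geq n^2/\operatorname{tr}(X)$. (Equivalently one could quote the AM--HM inequality for the positive numbers $\lambda_i$, but Cauchy--Schwarz makes the equality case cleanest.)

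For the equality characterization, I would use the equality condition in Cauchy--Schwarz: $\|u\|^2\|v\|^2 = \langle u,v\rangle^2$ holds iff $u$ and $v$ are linearly dependent, i.e.\ there is a scalar $c$ with $1/\sqrt{\lambda_i} = c\sqrt{\lambda_i}$ for all $i$, which forces $\lambda_i = 1/c$ to be the same value $\lambda > 0$ for every $i$. A matrix in $S_{++}^n$ with all eigenvalues equal to $\lambda$ is $\lambda I$, and conversely $X = \lambda I$ obviously attains equality since $\operatorname{tr}(X)\operatorname{tr}(X^{-1}) = (n\lambda)(n/\lambda) = n^2$. This completes both directions.

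There is essentially no serious obstacle here: the only point requiring a line of care is making sure the equality analysis is phrased in terms of the eigenvalues (not the matrix directly) and then translating "all eigenvalues equal" back to "$X = \lambda I$" using the spectral theorem for symmetric matrices; everything else is a one-line invocation of Cauchy--Schwarz.
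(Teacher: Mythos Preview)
Your proof is correct. The paper does not give a standalone proof of this lemma (it is cited as a standard textbook fact), but it does prove the underlying scalar inequality $(\sum x_i)(\sum 1/x_i) \geq n^2$ as \cref{lem:Cauchy-1} using exactly the same Cauchy--Schwarz argument on $(\sqrt{x_i})$ and $(1/\sqrt{x_i})$ that you use, so your approach is essentially the intended one.
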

\begin{lemma}\label{lem-pd-2}
For any $X \in S_{++}^n$, we have $(X^{-1})_{ii} X_{ii} \geq 1$ for all $i=1,\ldots,n$
\end{lemma}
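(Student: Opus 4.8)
The plan is to obtain the inequality directly from the Cauchy--Schwarz inequality. Since $X\in S_{++}^n$, it possesses a unique symmetric positive definite square root $X^{1/2}$, and $X^{-1}$ (hence also $X^{-1/2}$) again lies in $S_{++}^n$; in particular $X^{1/2}X^{-1/2}=I$ with both factors symmetric. First I would record the trivial identity
\[
1=\mathbf{e}_i^{\dagger}\mathbf{e}_i=\mathbf{e}_i^{\dagger}X^{1/2}X^{-1/2}\mathbf{e}_i=\big\langle X^{1/2}\mathbf{e}_i,\;X^{-1/2}\mathbf{e}_i\big\rangle,
\]
where $\mathbf{e}_i$ denotes the $i$-th standard basis vector. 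Applying Cauchy--Schwarz to the right-hand side gives $1\le \|X^{1/2}\mathbf{e}_i\|^2\,\|X^{-1/2}\mathbf{e}_i\|^2$. Finally I would identify the two norms as $\|X^{1/2}\mathbf{e}_i\|^2=\mathbf{e}_i^{\dagger}X\mathbf{e}_i=X_{ii}$ and $\|X^{-1/2}\mathbf{e}_i\|^2=\mathbf{e}_i^{\dagger}X^{-1}\mathbf{e}_i=(X^{-1})_{ii}$, which yields $1\le X_{ii}\,(X^{-1})_{ii}$, the claim. As a bonus this argument gives the equality case: equality holds iff $X^{1/2}\mathbf{e}_i$ and $X^{-1/2}\mathbf{e}_i$ are parallel, i.e. iff $\mathbf{e}_i$ is an eigenvector of $X$.

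An alternative, square-root-free route proceeds via a Schur complement. After a symmetric permutation moving index $i$ to the first coordinate, write $X=\left[\begin{smallmatrix} X_{ii} & b^{\dagger}\\ b & C\end{smallmatrix}\right]$, where $C\in S_{++}^{n-1}$ is a principal submatrix (positive definite because $X$ is). The block-inversion formula then gives $(X^{-1})_{ii}=(X_{ii}-b^{\dagger}C^{-1}b)^{-1}$, and since $C^{-1}\succ 0$ we have $b^{\dagger}C^{-1}b\ge 0$, whence $(X^{-1})_{ii}\ge 1/X_{ii}$, i.e. $X_{ii}(X^{-1})_{ii}\ge 1$. I would present the Cauchy--Schwarz version as the main proof, since it is the most transparent, and mention the Schur-complement version only as a remark.

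There is no serious obstacle here; the statement is a classical fact. The only points requiring care are (i) invoking positive definiteness of $X$ exactly where it is used --- to guarantee existence of $X^{1/2}$ and $X^{-1}$ in the first proof, and positive definiteness of the principal submatrix $C$ in the second --- and (ii) keeping track of $\dagger$ versus ordinary transpose, which is harmless here since $S_{++}^n$ consists of real symmetric matrices. I would also note in passing that \cref{lem-pd-2} is morally the $1\times 1$ analogue of the trace inequality in \cref{lem-pd-1}, but since deducing it from \cref{lem-pd-1} is no shorter than the direct argument above, I would keep the two proofs independent.
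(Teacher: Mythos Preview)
Your proof is correct; both the Cauchy--Schwarz argument via $X^{1/2}$ and the Schur-complement alternative are standard and valid. The paper does not actually prove this lemma: it simply states it together with \cref{lem-pd-1} as textbook facts about positive definite matrices, so there is no ``paper's own proof'' to compare against. Your Cauchy--Schwarz version is the cleanest choice here and would fit the paper's level of detail well.
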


\begin{lemma}\label{lem:Cauchy-1}
For $x_i > 0, \forall i=1, \ldots, n$, we have $\left(\sum_{i=1}^n x_i\right)\left(\sum_{i=1}^n \frac{1}{x_i}\right) \geq n^2,$ with equality holding if and only if all $x_i$ are equal.
\end{lemma}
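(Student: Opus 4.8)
The plan is to derive this directly from the Cauchy--Schwarz inequality. First I would introduce the two vectors $\mathbf{u} = (\sqrt{x_1}, \ldots, \sqrt{x_n})^{\dagger}$ and $\mathbf{v} = (1/\sqrt{x_1}, \ldots, 1/\sqrt{x_n})^{\dagger}$ in $\mathbb{R}^n$, both well-defined because every $x_i > 0$. Their inner product is $\mathbf{u}^{\dagger}\mathbf{v} = \sum_{i=1}^n \sqrt{x_i}\cdot\tfrac{1}{\sqrt{x_i}} = n$, while $\|\mathbf{u}\|^2 = \sum_{i=1}^n x_i$ and $\|\mathbf{v}\|^2 = \sum_{i=1}^n 1/x_i$. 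Applying $(\mathbf{u}^{\dagger}\mathbf{v})^2 \le \|\mathbf{u}\|^2\,\|\mathbf{v}\|^2$ then yields $n^2 \le \left(\sum_{i=1}^n x_i\right)\left(\sum_{i=1}^n 1/x_i\right)$, which is exactly the claimed inequality.

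For the equality case I would invoke the equality condition of Cauchy--Schwarz: $(\mathbf{u}^{\dagger}\mathbf{v})^2 = \|\mathbf{u}\|^2\,\|\mathbf{v}\|^2$ holds if and only if $\mathbf{u}$ and $\mathbf{v}$ are linearly dependent, i.e.\ $\mathbf{v} = \lambda \mathbf{u}$ for some scalar $\lambda$ (note $\mathbf{u} \neq 0$). Componentwise this reads $1/\sqrt{x_i} = \lambda\sqrt{x_i}$, so $x_i = 1/\lambda$ for every $i$, forcing all the $x_i$ to coincide (with $\lambda > 0$ necessarily). Conversely, if $x_1 = \cdots = x_n = c > 0$, then $\left(\sum_i x_i\right)\left(\sum_i 1/x_i\right) = (nc)(n/c) = n^2$, so equality indeed holds; this settles the ``if and only if''.

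An equivalent route, should one prefer to avoid quoting Cauchy--Schwarz, is to expand the product as a double sum, $\left(\sum_i x_i\right)\left(\sum_j 1/x_j\right) = \sum_{i,j} x_i/x_j = n + \sum_{i<j}\left(x_i/x_j + x_j/x_i\right)$, and then apply $t + 1/t \ge 2$ (equality iff $t = 1$) to each ratio $t = x_i/x_j$; summing gives the lower bound $n + 2\binom{n}{2} = n^2$, with equality iff $x_i = x_j$ for all $i < j$. I do not anticipate a genuine obstacle here: the only point requiring mild care is stating the equality characterization cleanly, and the double-sum argument makes the ``all $x_i$ equal'' condition transparent. I would present the Cauchy--Schwarz version as the main proof for brevity and remark on the double-sum alternative.
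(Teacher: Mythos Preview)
Your proposal is correct and matches the paper's own proof essentially line for line: the paper also applies Cauchy--Schwarz to the vectors $(\sqrt{x_i})$ and $(1/\sqrt{x_i})$, obtains $n^2$ on the right-hand side, and reads off the equality condition as $x_i = \lambda$ for all $i$. Your additional double-sum remark is a nice elementary alternative but is not needed to align with the paper.
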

\begin{proof}
By Cauchy-Schwarz inequality, we have
\begin{equation}
\left(\sum_{i=1}^n \left(\sqrt{x_i}\right)^2\right) \left(\sum_{i=1}^n \left(\frac{1}{\sqrt{x_i}}\right)^2\right)
    \geq \left(\sum_{i=1}^n \left(\sqrt{x_i} \cdot \frac{1}{\sqrt{x_i}}\right)\right)^2 = n^2.
\end{equation}
Equality holds if and only if there exists a constant $\lambda > 0$ such that $\lambda=\frac{\sqrt{x_i}}{\frac{1}{\sqrt{x_i}}}=x_i$, for all $i=1, \ldots, n$, i.e., when all $x_i$ are equal.
\end{proof}

\begin{lemma}\label{lem:Cauchy-2}
For $x_i > 0$ and $a_i > 0, \forall i = 1, \ldots, n$, we have $\left(\sum_{i=1}^n a_i x_i\right)\left(\sum_{i=1}^n \frac{a_i}{x_i}\right) \geq \left(\sum_{i=1}^n a_i\right)^2,$ with equality holding if and only if all $x_i$ are equal.
\end{lemma}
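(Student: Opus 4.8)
The plan is to mirror the proof of \cref{lem:Cauchy-1}, applying the Cauchy--Schwarz inequality to a suitably weighted pair of vectors. First I would set $u_i := \sqrt{a_i x_i}$ and $v_i := \sqrt{a_i / x_i}$, which are well-defined and strictly positive because $a_i > 0$ and $x_i > 0$. The key algebraic observation is that $u_i^2 = a_i x_i$, $v_i^2 = a_i / x_i$, and $u_i v_i = \sqrt{a_i x_i \cdot a_i / x_i} = a_i$. Then Cauchy--Schwarz applied to the vectors $(u_i)_{i=1}^n$ and $(v_i)_{i=1}^n$ yields $\left(\sum_{i=1}^n u_i^2\right)\left(\sum_{i=1}^n v_i^2\right) \geq \left(\sum_{i=1}^n u_i v_i\right)^2$, which is exactly $\left(\sum_{i=1}^n a_i x_i\right)\left(\sum_{i=1}^n a_i/x_i\right) \geq \left(\sum_{i=1}^n a_i\right)^2$, the desired inequality.

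For the equality case, I would recall that equality in Cauchy--Schwarz holds if and only if the vectors $(u_i)$ and $(v_i)$ are linearly dependent. Since both are nonzero, this means there exists $\lambda > 0$ with $u_i = \lambda v_i$ for all $i$, i.e. $\sqrt{a_i x_i} = \lambda \sqrt{a_i / x_i}$. Dividing both sides by $\sqrt{a_i} > 0$ and squaring gives $x_i = \lambda$ for every $i$, so equality forces all the $x_i$ to coincide. Conversely, if $x_i = c$ for all $i$ with $c > 0$, then the left-hand side is $\left(c \sum_i a_i\right)\left(c^{-1} \sum_i a_i\right) = \left(\sum_i a_i\right)^2$, so equality indeed holds; this completes the characterization.

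I do not anticipate any genuine obstacle here; the argument is a direct weighted version of \cref{lem:Cauchy-1}. The only point requiring a little care is the equality characterization: one must use $a_i > 0$ to cancel $\sqrt{a_i}$ from both sides of the proportionality relation $\sqrt{a_i x_i} = \lambda \sqrt{a_i/x_i}$, so that linear dependence of $(u_i)$ and $(v_i)$ is seen to be equivalent to the $x_i$ being constant rather than to some looser weighted relation. Everything else reduces to the standard inequality.
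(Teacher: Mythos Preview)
Your proposal is correct and matches the paper's proof essentially line for line: the paper also sets $u_i = \sqrt{a_i x_i}$, $v_i = \sqrt{a_i/x_i}$, applies Cauchy--Schwarz, and reads off the equality condition from the proportionality $\sqrt{a_i x_i}\big/\sqrt{a_i/x_i} = x_i$ being constant.
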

\begin{proof}
By Cauchy-Schwarz inequality, we have
\begin{equation}
\left(\sum_{i=1}^n \left(\sqrt{a_i x_i}\right)^2\right) \left(\sum_{i=1}^n \left(\sqrt{\frac{a_i}{x_i}}\right)^2\right)
\geq \left(\sum_{i=1}^n \left( \sqrt{a_i x_i} \cdot \sqrt{\frac{a_i}{x_i}}\right)\right)^2 
= \left(\sum_{i=1}^n a_i\right)^2.
\end{equation}
Equality holds if and only if there exists a constant $\lambda > 0$ such that $\lambda=\frac{\sqrt{a_i x_i}}{\sqrt{\frac{a_i}{x_i}}}=x_i$, for all $i=1, \ldots, n$, i.e., when all $x_i$ are equal.
\end{proof}

\subsection{Proof of \cref{thm-first-veiw}}\label{sec-proof-thm-1}

\begin{proof}[Proof of \cref{thm-first-veiw}]
We first demonstrate that $2\sigma^2$ is the (global) lower bound of cost function $\operatorname{MSE}(\hat{\mathbf{z}}_{\mathbf{x}})$ in \cref{eq-min-mse}, and then we prove that this lower bound can be achieved when taking $\frac{2 \pi}{n}$-equidistant nodes defined in \cref{eq-2pi/n-equidistant}. These two steps will complete the proofs.

For any $\mathbf{x} \in \mathbb{R}^n$ with $x_i$ distinct modulo $2 \pi$, consider the matrix $V_{\mathbf{x}}$ defined in \cref{eq-Vx}. It is easy to see that $[V_{\mathbf{x}}^{\dagger} V_{\mathbf{x}}]_{kk} = n$ for all $k=1,\ldots,n$; thus, $\operatorname{tr}(V_{\mathbf{x}}^{\dagger} V_{\mathbf{x}}) = n^2$. On the other hand, from the matrix relations \cref{eq-matrix-relations} and $C C^{\dagger}=\frac{1}{2} I$ in \cref{eq-C}, one has
\begin{equation}
\operatorname{tr}(A_{\mathbf{x}}^{\dagger} A_{\mathbf{x}}) = \operatorname{tr}( C^{\dagger} V_{\mathbf{x}}^{\dagger} D_{\mathbf{x}}^{\dagger}D_{\mathbf{x}} V_{\mathbf{x}} C) 
= \frac{1}{2} \operatorname{tr}(V_{\mathbf{x}}^{\dagger}V_{\mathbf{x}}) 
= \frac{n^2}{2} . 
\end{equation}
By \cref{lem-pd-1}, we obtain
\begin{equation}
\operatorname{MSE}(\hat{\mathbf{z}}_{\mathbf{x}})
=\sigma^2 \|A_{\mathbf{x}}^{-1}\|_F^2
=\sigma^2 \operatorname{tr}([A_{\mathbf{x}}^{\dagger} A_{\mathbf{x}}]^{-1})
\geq
\sigma^2 \frac{n^2 }{\operatorname{tr}(A_{\mathbf{x}}^{\dagger} A_{\mathbf{x}})} 
=\sigma^2 \frac{n^2 }{\frac{n^2}{2}}
=2 \sigma^2.
\end{equation}
Hence, $2 \sigma^2$ is a lower bound for $\operatorname{MSE}(\hat{\mathbf{z}}_{\mathbf{x}})$ for all $\mathbf{x} \in \mathbb{R}^n$ with $x_i$ distinct modulo $2 \pi$.

By \cref{lem-shift-invariant}, the value of $\operatorname{MSE}(\hat{\mathbf{z}}_{\mathbf{x}})$ is invariant under any shift of $\mathbf{x}$. Therefore, it is sufficient to consider $\frac{2\pi}{n}$-equidistant nodes $\mathbf{x}^*$ without shift, i.e.,
\begin{equation}
    x_k^* = \frac{2\pi}{n} k \quad \text{for } k = 0, 1, \ldots, 2r.
\end{equation}
This implies that $\omega_k \equiv e^{i x_k^*} = e^{2\pi i \frac{k}{n}}$ are the $n$-th roots of unity. In this case, by \cref{lem-Vandermonde}, we have $ V_{\mathbf{x}^*}^{\dagger} V_{\mathbf{x}^*} = V _{\mathbf{x}^*}V_{\mathbf{x}^*}^{\dagger} = n I.$ Hence, 
\begin{equation}
    A_{\mathbf{x}^*}^{\dagger}  A_{\mathbf{x}^*} 
    = C^{\dagger} V_{\mathbf{x}}^{\dagger} D_{\mathbf{x}}^{\dagger}  D_{\mathbf{x}} V_{\mathbf{x}} C
    = C^{\dagger} V_{\mathbf{x}}^{\dagger} V_{\mathbf{x}} C
    = \frac{n}{2} I.
\end{equation}
Then, $\operatorname{MSE}(\hat{\mathbf{z}}_{\mathbf{x}^*})
=\sigma^2 \operatorname{tr}([A_{\mathbf{x}}^{\dagger} A_{\mathbf{x}}]^{-1})
=\frac{2}{n} \sigma^2 \operatorname{tr}( I)=2\sigma^2.$ Therefore, the lower bound $2 \sigma^2$ is achieved at $\frac{2\pi}{n}$-equidistant nodes $\mathbf{x}^*$.
\end{proof}

\subsection{Proof of \cref{thm-second-veiw}}\label{sec-proof-thm-2}

Given an arbitrary invertible complex matrix $X \in \mathbb{C}^{n \times n}$, the 2-norm condition numbers defined as
\begin{equation}
\kappa_2(X):=\|X\|_2\|X^{-1}\|_2=\frac{\sigma_{\max }(X)}{\sigma_{\min }(X)} \geq 1.
\end{equation}
where $\|X\|_2$ is spectral norm, and $\sigma_{\max }(X)$ ($\sigma_{\min }(X)$) is the largest (smallest) singular value of $X$. The lower bound 1 is attained if and only if all singular values of $X$ are equal. The next lemma implies that the condition number $\kappa_2$ is invariant under multiplication by a scaled unitary matrix.

\begin{lemma}\label{lem-cond2}
Given any nonsingular complex matrix $A$ and any complex matrix $B$ with $B^{\dagger}B = \lambda I$ for some constant $\lambda >0$, we have
\begin{equation}
    \kappa_2( A B ) = \kappa_2( B A ) = \kappa_2( A ).
\end{equation}
\end{lemma}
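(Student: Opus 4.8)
The plan is to reduce both identities to a single observation: since $B^{\dagger}B=\lambda I$ with $\lambda>0$, the matrix $B$ is $\sqrt{\lambda}$ times a unitary, so multiplying $A$ by $B$ on either side only rescales all singular values by the common factor $\sqrt{\lambda}$, and $\kappa_2$, being the ratio $\sigma_{\max}/\sigma_{\min}$, is insensitive to this.

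First I would record that $B$ is nonsingular: $|\det B|^2=\det(B^{\dagger}B)=\lambda^n>0$, so $AB$ and $BA$ are nonsingular and their $\kappa_2$ are well-defined. For the right-hand case I would compute $(BA)^{\dagger}(BA)=A^{\dagger}B^{\dagger}BA=\lambda\,A^{\dagger}A$, so the eigenvalues of $(BA)^{\dagger}(BA)$ are exactly $\lambda$ times those of $A^{\dagger}A$; taking square roots gives $\sigma_i(BA)=\sqrt{\lambda}\,\sigma_i(A)$ for every $i$, hence $\kappa_2(BA)=\sqrt{\lambda}\,\sigma_{\max}(A)/(\sqrt{\lambda}\,\sigma_{\min}(A))=\kappa_2(A)$. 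For the left case I would write $B=\sqrt{\lambda}\,Q$ with $Q$ unitary and compute $(AB)^{\dagger}(AB)=\lambda\,Q^{\dagger}(A^{\dagger}A)Q$, which is unitarily similar to $\lambda\,A^{\dagger}A$; similarity preserves the spectrum, so again $\sigma_i(AB)=\sqrt{\lambda}\,\sigma_i(A)$ and $\kappa_2(AB)=\kappa_2(A)$.

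There is essentially no obstacle here: the whole argument is a two-line Gram-matrix computation in each of the two cases, the only point requiring a word of care being the identification $B=\sqrt{\lambda}\,Q$ (equivalently, that $B^{\dagger}B=\lambda I$ forces all singular values of $B$ to equal $\sqrt{\lambda}$), after which the scale-invariance of the condition number does all the work. This lemma will then be applied with $B=C$ (so $\lambda=\tfrac12$) and with $B=D_{\mathbf{x}}$ (so $\lambda=1$) in the proof of \cref{thm-second-veiw}, reducing $\kappa_2(A_{\mathbf{x}})$ to $\kappa_2(V_{\mathbf{x}})$.
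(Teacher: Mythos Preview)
Your argument is correct. The paper's proof is slightly different in technique: it works directly with the variational definition of the spectral norm, computing $\|BA\|_2^2=\sup_{\mathbf{x}}\langle BA\mathbf{x},BA\mathbf{x}\rangle/\|\mathbf{x}\|^2=\lambda\|A\|_2^2$ and separately $\|A^{-1}B^{-1}\|_2^2=\tfrac{1}{\lambda}\|A^{-1}\|_2^2$ (using $B^{-1}=\tfrac{1}{\lambda}B^{\dagger}$), then multiplying to get $\kappa_2(BA)=\kappa_2(A)$; the $AB$ case is dismissed with ``similarly.'' Your Gram-matrix route---showing $(BA)^{\dagger}(BA)=\lambda A^{\dagger}A$ and $(AB)^{\dagger}(AB)=\lambda Q^{\dagger}(A^{\dagger}A)Q$---is a bit more economical, since it handles the full singular spectrum at once and avoids treating the inverse separately. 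Both hinge on the same observation that $B=\sqrt{\lambda}\,Q$ with $Q$ unitary; your version just packages it through eigenvalues of the Gram matrix rather than through two Rayleigh-quotient calculations.
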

\begin{proof}
We have
\begin{align}
    \| B A \|_2^2
    &= \sup _{\mathbf{x} \in \mathbb{R}^n \backslash\{\mathbf{0}\}} \frac{\langle B A \mathbf{x}, B A \mathbf{x}\rangle}{\|\mathbf{x}\|_2^2}
    = \sup _{\mathbf{x} \in \mathbb{R}^n \backslash\{\mathbf{0}\}} \frac{\langle B ^{\dagger} B A \mathbf{x}, A \mathbf{x}\rangle}{\|\mathbf{x}\|_2^2} \\
    &= \sup _{\mathbf{x} \in \mathbb{R}^n \backslash\{\mathbf{0}\}} \frac{\lambda \langle A \mathbf{x}, A \mathbf{x}\rangle}{\|\mathbf{x}\|_2^2}
    = \lambda\| A \|_2^2,
\end{align}
and note that $B^{-1} = \frac{1}{\lambda} B^{\dagger}$, so
\begin{align}
    \| A^{-1} B^{-1} \|_2^2
    &= \sup _{\mathbf{x} \in \mathbb{R}^n \backslash\{\mathbf{0}\}} \frac{\left\langle A^{-1} B^{-1} \mathbf{x}, A^{-1} B^{-1} \mathbf{x}\right\rangle}{\|\mathbf{x}\|_2^2}
    = \sup _{\mathbf{y} \in \mathbb{R}^n \backslash\{\mathbf{0}\}} \frac{\left\langle A^{-1} \mathbf{y}, A^{-1} \mathbf{y}\right\rangle}{\| B \mathbf{y}\|_2^2} \\
    &= \sup _{\mathbf{y} \in \mathbb{R}^n \backslash\{\mathbf{0}\}} \frac{\left\langle A^{-1} \mathbf{y}, A^{-1} \mathbf{y}\right\rangle}{\lambda\|\mathbf{y}\|_2^2} = \frac{1}{\lambda} \| A^{-1}\|_2^2.
\end{align}
Hence, we have
\begin{equation}
    \kappa_2( B A ) = \| B A \|_2 \| A^{-1} B^{-1}\|_2 = \| A \|_2 \| A^{-1}\|_2 = \kappa_2( A ).
\end{equation}
Similarly, one can show that $\kappa_2(A B) = \kappa_2( A )$.
\end{proof}

Now we are ready to provide the proof
of \cref{thm-second-veiw}.

\begin{proof}[Proof of \cref{thm-second-veiw}]
We know that $1$ is the global lower bound of the cost function $\kappa_2(A_{\mathbf{x}})$ in \cref{eq-min-cn}. Therefore, it remains to show that $\kappa_2(A_{\mathbf{x}^*}) = 1$ for the $\frac{2\pi}{n}$-equidistant nodes $\mathbf{x}^*$ defined in \cref{eq-2pi/n-equidistant}.

We first establish that the value of $\kappa_2(A_{\mathbf{x}})$ is invariant under any shift of $\mathbf{x}$. By \cref{lem-shift-invariant}, we know that for any shift $s$, we have
\begin{equation}
    A_{\mathbf{x} + s \mathbf{1}} = A_{\mathbf{x}} E_s,
\end{equation}
where $E_s = \operatorname{diag}(1, B_1, \ldots, B_r) \in \mathbb{R}^{n \times n}$ and $B_i$ are rotation matrices (hence unitary), as defined in \cref{eq-defn-Bi}. It is straightforward to verify that $E_s$ is unitary. So, by applying \cref{lem-cond2}, we obtain:
\begin{equation}
    \kappa_2(A_{\mathbf{x} + s \mathbf{1}}) = \kappa_2(A_{\mathbf{x}}) \quad \text{for any shift value } s \in \mathbb{R}.
\end{equation}
Thus, we can consider the $\frac{2\pi}{n}$-equidistant nodes $\mathbf{x}^*$ without any shift, i.e., $ x_k^* = \frac{2\pi}{n} k$ for $k = 0, 1, \ldots, 2r.$ This implies that $\omega_k \equiv e^{i x_k^*} = e^{2\pi i \frac{k}{n}}$. In this case, by \cref{lem-Vandermonde}, we have $V_{\mathbf{x}^*}^{\dagger} V_{\mathbf{x}^*} = n I.$

Using the relations $A_{\mathbf{x}^*} = D_{\mathbf{x}^*} V_{\mathbf{x}^*} C$ from \cref{eq-matrix-relations} and $C^{\dagger} C = \frac{1}{2} I$, we can apply \cref{lem-cond2} twice to obtain
\begin{equation}
    \kappa_2(A_{\mathbf{x}^*}) = \kappa_2(V_{\mathbf{x}^*}).
\end{equation}
Thus, it suffices to show that $\kappa_2(V_{\mathbf{x}^*}) = 1$. Since $V_{\mathbf{x}^*}^{\dagger} V_{\mathbf{x}^*} = n I$, we set $B = V_{\mathbf{x}^*}$ and $A = I$ in \cref{lem-cond2}, yielding
\begin{equation}
    \kappa_2(V_{\mathbf{x}^*}) = \kappa_2(I) = 1.
\end{equation}
This completes the proof.
\end{proof}

\subsection{Proof of \cref{thm-third-veiw}}\label{sec-proof-thm-3}

Let us start with an auxiliary result in probability theory.

\begin{lemma}\label{lem-quadratic-form}
Consider the random vector $\mathbf{x}$ with $\operatorname{E}[\mathbf{x}] = \mu,\operatorname{VAR}[ \mathbf{x}] = \Sigma.$ If $A = yz^{\dagger}$ for some constant vectors $y, z$, then $\operatorname{E}[ \mathbf{x}^{\dagger} A  \mathbf{x}] = \mu^{\dagger} A \mu + \text{tr}(A \Sigma).$
\end{lemma}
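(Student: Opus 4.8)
The plan is to reduce the quadratic form to a trace and then invoke the standard second-moment identity $\operatorname{E}[\mathbf{x}\mathbf{x}^{\dagger}] = \Sigma + \mu\mu^{\dagger}$. First I would note that $\mathbf{x}^{\dagger} A \mathbf{x}$ is a scalar, hence equal to its own trace, so by cyclicity of the trace $\mathbf{x}^{\dagger} A \mathbf{x} = \operatorname{tr}(\mathbf{x}^{\dagger} A \mathbf{x}) = \operatorname{tr}(A\,\mathbf{x}\mathbf{x}^{\dagger})$. Taking expectations and pulling $\operatorname{E}[\cdot]$ inside the finite sum that defines the trace gives $\operatorname{E}[\mathbf{x}^{\dagger} A \mathbf{x}] = \operatorname{tr}\!\big(A\,\operatorname{E}[\mathbf{x}\mathbf{x}^{\dagger}]\big)$.

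Next I would recover $\operatorname{E}[\mathbf{x}\mathbf{x}^{\dagger}]$ directly from the definition of the covariance matrix: since $\Sigma = \operatorname{VAR}[\mathbf{x}] = \operatorname{E}[(\mathbf{x}-\mu)(\mathbf{x}-\mu)^{\dagger}] = \operatorname{E}[\mathbf{x}\mathbf{x}^{\dagger}] - \mu\mu^{\dagger}$, we obtain $\operatorname{E}[\mathbf{x}\mathbf{x}^{\dagger}] = \Sigma + \mu\mu^{\dagger}$. Substituting this in and using linearity of the trace yields $\operatorname{E}[\mathbf{x}^{\dagger} A \mathbf{x}] = \operatorname{tr}(A\Sigma) + \operatorname{tr}(A\mu\mu^{\dagger})$, and one last application of cyclicity turns $\operatorname{tr}(A\mu\mu^{\dagger}) = \operatorname{tr}(\mu^{\dagger} A\mu) = \mu^{\dagger} A\mu$, which is the asserted equality. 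Note that the rank-one hypothesis $A = yz^{\dagger}$ is not strictly needed for this argument, but it does allow an even shorter route: write $\mathbf{x}^{\dagger} A \mathbf{x} = (z^{\dagger}\mathbf{x})(\mathbf{x}^{\dagger} y)$, take $\operatorname{E}[\cdot]$ to get $z^{\dagger}\operatorname{E}[\mathbf{x}\mathbf{x}^{\dagger}]y = z^{\dagger}\Sigma y + (z^{\dagger}\mu)(\mu^{\dagger} y)$, and identify $z^{\dagger}\Sigma y = \operatorname{tr}(yz^{\dagger}\Sigma) = \operatorname{tr}(A\Sigma)$ and $(z^{\dagger}\mu)(\mu^{\dagger} y) = \mu^{\dagger} yz^{\dagger}\mu = \mu^{\dagger} A\mu$. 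I would likely present this rank-one version, since it mirrors exactly how the lemma is applied later, with $y = z = \mathbf{t}^{(d)}(x)$.

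There is no genuine obstacle here; the only thing to watch is the bookkeeping of conjugate transposes in the complex setting — checking that with $A = yz^{\dagger}$ the scalar $\mathbf{x}^{\dagger} A\mathbf{x}$ factors as claimed, and that it is $\operatorname{E}[\mathbf{x}\mathbf{x}^{\dagger}]$ (defined with $^{\dagger}$, matching the convention used for $\operatorname{VAR}[\cdot]$ in the Notations subsection), rather than $\operatorname{E}[\mathbf{x}\mathbf{x}^{\top}]$, that enters. In every place the lemma is applied the relevant vectors are real, so this subtlety is harmless, and I would either state the identity for a real random vector or simply remark that it follows verbatim from the $^{\dagger}$-convention already fixed.
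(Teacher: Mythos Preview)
Your proof is correct. The paper actually states this lemma without proof, treating it as a standard probability identity, so there is nothing to compare against; your trace-cyclicity argument (and the rank-one shortcut) are exactly the kind of one-line justifications the authors are implicitly relying on.
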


\begin{lemma}\label{lem-cov-var}
For the $d$-th order derivative of $\hat{f}(x)$ in \cref{eq-app-fx}, i.e, $\hat{f}^{(d)}(x)=\mathbf{t}^{(d)}(x)^{\dagger} \hat{\mathbf{z}}_{\mathbf{x}}$ with $\mathbf{t}^{(d)}(x)$ given in \cref{eq-tdx}. We have
\begin{align}
\operatorname{Cov}[\hat{f}^{(d)}(x_1), \hat{f}^{(d)}(x_2)]
&= \mathbf{t}^{(d) }(x_1) ^{\dagger} \operatorname{VAR}[\hat{\mathbf{z}}_{\mathbf{x}}] \mathbf{t}^{(d)}(x_2), \quad \forall x_1, x_2 \in \mathbb{R}, \\
\operatorname{Var}[\hat{f}^{(d)}(x)]
&=\mathbf{t}^{(d)}(x)^{\dagger} \operatorname{VAR}[\hat{\mathbf{z}}_{\mathbf{x}}] \mathbf{t}^{(d)}(x), \quad \forall x \in \mathbb{R}.
\end{align}
\end{lemma}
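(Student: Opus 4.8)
The plan is to recognize $\hat{f}^{(d)}(x) = \mathbf{t}^{(d)}(x)^{\dagger}\hat{\mathbf{z}}_{\mathbf{x}}$ as a deterministic linear functional of the single random vector $\hat{\mathbf{z}}_{\mathbf{x}}$ (with the interpolation nodes $\mathbf{x}$ held fixed), and then to apply the elementary fact that the covariance of two linear functionals of a random vector is the bilinear form induced by its covariance matrix. All ingredients are already available: from \cref{sec:noise} the estimator satisfies $\operatorname{E}[\hat{\mathbf{z}}_{\mathbf{x}}] = \mathbf{z}$ with covariance matrix $\operatorname{VAR}[\hat{\mathbf{z}}_{\mathbf{x}}]$ as in \cref{eq-var}, while $\mathbf{t}^{(d)}(x_1),\mathbf{t}^{(d)}(x_2)$ from \cref{eq-tdx} are deterministic.

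Concretely, I would first note $\operatorname{E}[\hat{f}^{(d)}(x_i)] = \mathbf{t}^{(d)}(x_i)^{\dagger}\mathbf{z}$, then set $\Delta := \hat{\mathbf{z}}_{\mathbf{x}} - \mathbf{z}$ and compute
\begin{align*}
\operatorname{Cov}[\hat{f}^{(d)}(x_1),\hat{f}^{(d)}(x_2)]
&= \operatorname{E}\!\big[(\mathbf{t}^{(d)}(x_1)^{\dagger}\Delta)(\mathbf{t}^{(d)}(x_2)^{\dagger}\Delta)\big]
= \operatorname{E}\!\big[\mathbf{t}^{(d)}(x_1)^{\dagger}\,\Delta\Delta^{\dagger}\,\mathbf{t}^{(d)}(x_2)\big] \\
&= \mathbf{t}^{(d)}(x_1)^{\dagger}\,\operatorname{E}[\Delta\Delta^{\dagger}]\,\mathbf{t}^{(d)}(x_2)
= \mathbf{t}^{(d)}(x_1)^{\dagger}\,\operatorname{VAR}[\hat{\mathbf{z}}_{\mathbf{x}}]\,\mathbf{t}^{(d)}(x_2),
\end{align*}
where the second equality uses that $\mathbf{t}^{(d)}(x_2)^{\dagger}\Delta$ is a real scalar and hence equals $\Delta^{\dagger}\mathbf{t}^{(d)}(x_2)$, and the third uses linearity of expectation together with the determinism of the $\mathbf{t}^{(d)}(x_i)$. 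The variance statement is then the special case $x_1 = x_2 = x$. As an alternative route for the variance alone, one can apply \cref{lem-quadratic-form} with $A = \mathbf{t}^{(d)}(x)\mathbf{t}^{(d)}(x)^{\dagger}$, which has the required rank-one form $yz^{\dagger}$, obtaining $\operatorname{E}[\hat{\mathbf{z}}_{\mathbf{x}}^{\dagger}A\hat{\mathbf{z}}_{\mathbf{x}}] = \mathbf{z}^{\dagger}A\mathbf{z} + \operatorname{tr}(A\operatorname{VAR}[\hat{\mathbf{z}}_{\mathbf{x}}])$; subtracting $(\operatorname{E}[\hat{f}^{(d)}(x)])^2 = \mathbf{z}^{\dagger}A\mathbf{z}$ and using cyclicity of the trace leaves exactly $\operatorname{tr}(A\operatorname{VAR}[\hat{\mathbf{z}}_{\mathbf{x}}]) = \mathbf{t}^{(d)}(x)^{\dagger}\operatorname{VAR}[\hat{\mathbf{z}}_{\mathbf{x}}]\mathbf{t}^{(d)}(x)$.

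There is essentially no substantive obstacle here; the claim is routine second-moment bookkeeping, and the only point needing care is notational. Since $\mathbf{t}^{(d)}(x)$ and $\hat{\mathbf{z}}_{\mathbf{x}}$ are real, the superscript $\dagger$ is plain transposition, so scalars may be transposed freely and the outer-product rewriting above is legitimate. It is also worth emphasizing (for the reader, not the proof) that the same random vector $\hat{\mathbf{z}}_{\mathbf{x}}$, hence the same fixed covariance matrix, feeds both $\hat{f}^{(d)}(x_1)$ and $\hat{f}^{(d)}(x_2)$, which is precisely why the cross-covariance is generically nonzero; and that $\hat{f}^{(d)}(x)$ is itself Gaussian, being an affine image of the Gaussian noise vector $\mathbf{e}$.
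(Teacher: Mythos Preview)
Your proposal is correct and essentially follows the paper's approach. The only stylistic difference is that you center first (working with $\Delta = \hat{\mathbf{z}}_{\mathbf{x}}-\mathbf{z}$) and thereby avoid invoking \cref{lem-quadratic-form}, whereas the paper expands $\operatorname{E}[\hat{f}^{(d)}_1\hat{f}^{(d)}_2]-\operatorname{E}[\hat{f}^{(d)}_1]\operatorname{E}[\hat{f}^{(d)}_2]$ and applies \cref{lem-quadratic-form} with $A=\mathbf{t}^{(d)}(x_1)\mathbf{t}^{(d)}(x_2)^{\dagger}$ directly to the covariance (not just the variance, as you suggested for your alternative route); both computations are routine and arrive at the same place.
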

\begin{proof}
To simplify notation, let $\hat{f}^{(d)}_1 := \hat{f}^{(d)}(x_1)$ and $\hat{f}^{(d)}_2 := \hat{f}^{(d)}(x_2)$; similarly, let $\mathbf{t}_1^{(d)} := \mathbf{t}^{(d)}(x_1)$ and $\mathbf{t}_2^{(d)} := \mathbf{t}^{(d)}(x_2)$. By \cref{lem-quadratic-form}, we can proceed as follows:
\begin{align}
\operatorname{Cov}[\hat{f}^{(d)}_1, \hat{f}^{(d)}_2] 
&= \operatorname{E}[\hat{f}^{(d)}_1 \hat{f}^{(d)}_2] - \operatorname{E}[\hat{f}^{(d)}_1] \operatorname{E}[\hat{f}^{(d)}_2] \\
&= \operatorname{E}[\hat{\mathbf{z}}_{\mathbf{x}}^{\dagger} \mathbf{t}_1^{(d)} \mathbf{t}_2^{(d) T} \hat{\mathbf{z}}_{\mathbf{x}}] - \mathbf{t}_1^{(d) T} \mathbf{z} \cdot \mathbf{t}_2^{(d) T} \mathbf{z} \\
&= \mathbf{z}^{\dagger} \mathbf{t}_1^{(d)} \mathbf{t}_2^{(d) T} \mathbf{z} + \operatorname{tr}\left( \mathbf{t}_1^{(d)} \mathbf{t}_2^{(d) T} \operatorname{VAR}[\hat{\mathbf{z}}_{\mathbf{x}}] \right) - \mathbf{z}^{\dagger} \mathbf{t}_1^{(d)} \cdot \mathbf{t}_2^{(d) T} \mathbf{z} \\
&= \mathbf{t}_1^{(d) T} \operatorname{VAR}[\hat{\mathbf{z}}_{\mathbf{x}}] \mathbf{t}_2^{(d)}, \label{eq-cov-x1x2}
\end{align}
which completes the proof.
\end{proof}

Now we are ready to provide the proof of \cref{thm-third-veiw}.

\begin{proof}[Proof of \cref{thm-third-veiw}]
For $d = 0$, as shown in \cref{eq-case-d=0}, \cref{thm-third-veiw} reduces to \cref{thm-first-veiw}, so we focus on the cases where $d \geq 1$. We first establish the lower bound of the cost function $h^{(d)}(\mathbf{x})$ in \cref{eq-min-ave}, and then prove that this lower bound is achieved when the $\frac{2\pi}{n}$-equidistant nodes defined in \cref{eq-2pi/n-equidistant} are used. These two steps complete the proof. 

For $d \geq 1$, from \cref{eq-var} and \cref{eq-int-ttt}, for all $\mathbf{x} \in \mathbb{R}^n$ (with distinct $x_i$ modulo $2\pi$), we have the following expression for $h^{(d)}(\mathbf{x})$:
\begin{equation}\label{eq-1043}
    h^{(d)}(\mathbf{x}) = \frac{1}{2} \langle \operatorname{VAR}[\hat{\mathbf{z}}_{\mathbf{x}}], \operatorname{diag}(p) \rangle = \frac{1}{2} \sigma^2 \langle B^{-1}, \operatorname{diag}(p) \rangle,
\end{equation}
where $p = [0, 1, 1, 2^{2d}, 2^{2d}, \dots, r^{2d}, r^{2d}]^{\dagger} \in \mathbb{R}^n$, and $B := A_{\mathbf{x}}^{\dagger} A_{\mathbf{x}} \in \mathbb{R}^{n \times n}$ is positive definite since $A_{\mathbf{x}}$ is invertible. 

We then claim the following equality for any $\mathbf{x}$. Note that $n = 2r + 1$ and $p_1 = 0$, so we have
\begin{align}
\sum_{i=1}^n p_i B_{i i}
&=\sum_{k=1}^r k^{2 d} \left([A_{\mathbf{x}}^{\dagger} A_{\mathbf{x}}]_{2 k, 2 k} +
[A_{\mathbf{x}}^{\dagger} A_{\mathbf{x}}]_{2 k+1, 2k+1}\right)\\
&=\sum_{k=1}^r k^{2 d}\left(\sum_{i=0}^{2 r}\left[\cos ^2\left(k x_i\right)+\sin ^2\left(k x_i\right)\right]\right) \quad \text{ (by definition of $A_{\mathbf{x}}$ in \cref{eq-A}) }  \\
&= \sum_{k=1}^r k^{2 d}n. \label{eq-pBB}
\end{align}
Now, turn to \cref{eq-1043}, we have
\begin{align}
h^{(d)}(\mathbf{x}) 
& =\frac{1}{2} \sigma^2\sum_{i=1}^n p_i \left[B^{-1}\right]_{i i}
\geq \frac{1}{2}\sigma^2  \left(\sum_{i=1}^n p_i\right)^2 \left[\sum_{i=1}^n \frac{p_i}{\left[B^{-1}\right]_{i i}}\right]^{-1} \quad \text{ (by \cref{lem:Cauchy-2}) }  \\
& \geq \frac{1}{2}\sigma^2  \left(\sum_{i=1}^n p_i\right)^2 \left[\sum_{i=1}^n p_i B_{i i}\right]^{-1} \quad  \text{ (by \cref{lem-pd-2}) } \\
& =\frac{1}{2}\sigma^2  \left(\sum_{i=1}^n p_i\right)^2 \left(n \sum_{k=1}^r k^{2d}\right)^{-1} \quad \text{ (by \cref{eq-pBB})} \\
& =\frac{2\sigma^2}{n} \sum_{k=1}^r k^{2d} .
\end{align}
Thus, $\frac{2 \sigma^2}{n} \sum_{k=1}^r k^{2d}$ is a lower bound for $h^{(d)}(\mathbf{x})$ for all $\mathbf{x}$.

Next, let us consider the $\frac{2 \pi}{n}$-equidistant nodes $\mathbf{x}^*$ without shift, i.e., $x_k^* = \frac{2 \pi}{n} k$ for $k = 0, 1, \dots, 2r$. By \cref{lem-Vandermonde}, we have
\begin{equation}
    B = A_{\mathbf{x}^*}^{\dagger} A_{\mathbf{x}^*} = C^{\dagger} V_{\mathbf{x}^*}^{\dagger} D_{\mathbf{x}^*}^{\dagger} D_{\mathbf{x}^*} V_{\mathbf{x}^*} C = \frac{n}{2} I.
\end{equation}
Thus, we obtain
\begin{equation}
    h^{(d)}(\mathbf{x}^*) = \frac{1}{2} \sigma^2 \sum_{i=1}^n p_i \left( B^{-1} \right)_{ii} = \frac{2 \sigma^2}{n} \sum_{k=1}^r k^{2d}.
\end{equation}
Therefore, the lower bound is achieved at $\mathbf{x}^*$ without any shift.

Finally, it remains to establish that the value of $h^{(d)}(\mathbf{x})$ is invariant under any shift of $\mathbf{x}$, particularly for $\mathbf{x}^*$. By \cref{lem-shift-invariant}, we know that for any shift $s$, we have
\begin{equation}
    A_{\mathbf{x} + s \mathbf{1}} = A_{\mathbf{x}} E_s,
\end{equation}
where $E_s = \operatorname{diag}(1, B_1, \ldots, B_r) \in \mathbb{R}^{n \times n}$, and $B_i \in \mathbb{R}^{2 \times 2}$ are rotation matrices, as defined in \cref{eq-defn-Bi}. Using this, we can express $h^{(d)}(\mathbf{x} + s\mathbf{1})$ as follows:
\begin{align}
h^{(d)}(\mathbf{x} + s \mathbf{1}) 
&= \frac{1}{2} \left\langle \operatorname{VAR}[\hat{\mathbf{z}}_{\mathbf{x} + s \mathbf{1}}], \operatorname{diag}(p) \right\rangle \\
&= \frac{1}{2} \sigma^2 \left\langle \left(A_{\mathbf{x} + s \mathbf{1}}^{\dagger} A_{\mathbf{x} + s \mathbf{1}}\right)^{-1}, \operatorname{diag}(p) \right\rangle \\
&= \frac{1}{2} \sigma^2 \left\langle E_s^{\dagger} \left(A_{\mathbf{x}}^{\dagger} A_{\mathbf{x}}\right)^{-1} E_s, \operatorname{diag}(p) \right\rangle \\
&= \frac{1}{2} \left\langle \operatorname{VAR}[\hat{\mathbf{z}}_{\mathbf{x}}], E_s \operatorname{diag}(p) E_s^{\dagger} \right\rangle.
\end{align}
On the other hand, we know that $E_s^{\dagger} = \operatorname{diag}(1, B_1^{\dagger}, \ldots, B_r^{\dagger})$ and that 
\begin{equation}
    \operatorname{diag}(p) = \operatorname{diag}(0, I_2, 2^{2d} I_2, \ldots, r^{2d} I_2),
\end{equation}
where $I_2$ is the $2 \times 2$ identity matrix. Therefore, we have
\begin{equation}
    E_s \operatorname{diag}(p) E_s^{\dagger}
    = \operatorname{diag}(0, B_1 I_2 B_1^{\dagger}, 2^{2d} B_2 I_2 B_2^{\dagger}, \ldots, r^{2d} B_r I_2 B_r^{\dagger})
    = \operatorname{diag}(p).
\end{equation}
Substituting this result into the expression for $h^{(d)}(\mathbf{x} + s\mathbf{1})$, we obtain
\begin{equation}
    h^{(d)}(\mathbf{x} + s\mathbf{1})
    = \frac{1}{2} \left\langle \operatorname{VAR}[\hat{\mathbf{z}}_{\mathbf{x}}], \operatorname{diag}(p) \right\rangle = h^{(d)}(\mathbf{x}).
\end{equation}
Thus, we have shown that $h^{(d)}(\mathbf{x})$ is invariant under any shift of $\mathbf{x}$. This completes the proof of \cref{thm-third-veiw}.
\end{proof}

\section{Review of general parameter shift rule}\label{app-psr}

Let us review the parameter shift rule (PSR) \cite{crooks2019gradients,mari2021estimating,wierichs2022general,kyriienko2021generalized,hai2023lagrange,markovich2024parameter,hoch2025variational}, and briefly compare it with the finite difference method.
Consider the cost function $f(x)$ in \cref{eq-trig-poly-real} under equidistant frequency \cref{assm-equi}, i.e.,
\begin{equation}
f(x)=\frac{1}{\sqrt{2}} a_0+\sum_{k=1}^r\left[a_k \cos \left(k x\right)+b_k \sin \left(k x\right)\right].
\end{equation}
\subsection{General parameter shift rule}

The general parameter shift rule aims to compute the derivative of $f(x)$ by only using linear combination of function evaluations. That is given by \cite{wierichs2022general}
\begin{equation}\label{eq-general-psr}
    g(x) := f^{\prime}(x) = \sum_{\mu=1}^{2r} \frac{(-1)^{\mu-1}}{4r \sin^2\left(\frac{1}{2} x_\mu\right)} f\left(x + x_\mu\right),
\end{equation}
where $x_\mu = \frac{\pi}{2r} + (\mu - 1) \frac{\pi}{r}$ for $\mu = 1, 2, \ldots, 2r$. Note that the coefficients preceding this linear combination are only related to $r$ and, are independent of $a_k$ and $b_k$.
For example, when $r = 1$, it reduces to
\begin{align}
    g(x) & = \frac{1}{4 \sin^2\left(\frac{1}{2} \cdot \frac{1}{2} \pi\right)} f\left(x + \frac{1}{2} \pi\right) - \frac{1}{4 \sin^2\left(\frac{1}{2} \cdot \frac{3}{2} \pi\right)} f\left(x + \frac{3}{2} \pi\right) \\
    & = \frac{1}{2} \left(f(x + \frac{\pi}{2}) - f(x - \frac{\pi}{2})\right).
\end{align}
In previous works \cite{wierichs2022general}, one might use the estimator
\begin{equation}
    g_{\text{psr}}(x) := \sum_{\mu=1}^{2r} \frac{(-1)^{\mu-1}}{4r \sin^2\left(\frac{1}{2} x_\mu\right)} \tilde{f}\left(x + x_\mu\right),
\end{equation}
to approximate $g(x)$. In our simulation experiments, the calculation of gradients required for RCD and SGD is based on the PSR presented here.
To the best of our knowledge, for general non-equidistant frequency $\Omega_k$, there is currently no explicit PSR of the same form as \cref{eq-general-psr}.

\subsection{Comparison between the parameter shift rule and finite difference}\label{app-psr-fd}

Now, let us compare PSR and (central) finite difference (FD). PSR leverages the known generator spectrum of gates to derive an exact gradient formula like (e.g. $U (x)=e^{ix H}$ and $H^2=I$)
\begin{equation}
    f^{\prime}(x)=[f (x+s)-f (x-s)]/(2\sin s),
\end{equation}
often with $s=\pi/2$ \cite{mari2021estimating}. FD approximates
\begin{equation}
    f^{\prime}(x) \approx[f (x+h)-f (x-h)]/ (2h),
\end{equation}
and requires $h \to 0$ for accuracy. In practice, when differentiating parameters in a parameterized quantum circuit, PSR is preferred over FD methods for the following reasons: 
\begin{enumerate}
\item  Accuracy: PSR yields the exact analytic derivatives without truncation error. FD incurs a truncation error of order $O (h^2)$ and only converges to the true derivative as $h\to0$.
\item Noise sensitivity: FD's reliance on very small $h$ makes the difference in measured expectations easily swamped by quantum hardware noise and finite shot statistics. PSR uses larger shifts (e.g. $\pi/2$), improving signal to noise in each pair of measurements and offering greater robustness on NISQ devices.
\item Bias: FD provides a biased estimator due to truncation error. PSR yields an unbiased gradient estimate (apart from environment noise).
\end{enumerate}
For further details, see \cite{mari2021estimating}, which offers a comprehensive error analysis comparing the PSR (in the case $H^2 = I$) and FD. The study proves that a properly scaled parameter shift estimator is always optimal, in the sense that it achieves a lower mean squared error. Of course, PSR typically applies only when the generator's spectrum is equally spaced, and the required number of function evaluations increases as the spectrum of $H$ becomes more complex. For generators with non-equally spaced spectrum, the more general PSRs were introduced in \cite{kyriienko2021generalized,markovich2024parameter}, though it is considerably more involved. PSR is generally preferred for computing derivatives in PQCs, because in most cases the spectrum is the singleton ($r=1$), corresponding to the $H^2 = I$ scenario above \cite{qiskit_gradients_framework,pennylane_generalized_parameter_shift}. Unless otherwise specified, all derivatives in this paper are computed using the PSR.

\section{Eigenvalue method of solving subproblem for equidistant frequencies}\label{app-eig-method}

In each iteration of ICD for equidistant frequencies, we need to find the global minimizer of the trigonometric polynomial
\begin{equation}
    f(x) = \frac{1}{\sqrt{2}} a_0 + \sum_{k=1}^r \left[ a_k \cos(kx) + b_k \sin(kx) \right]
\end{equation}
for $x \in [0, 2\pi]$ (here, we omit the hat symbol on the coefficients). To achieve this, we first find all the real roots of its derivative
\begin{equation}\label{eq-951}
    f'(x) = \sum_{k=1}^r \left[ -a_k k \sin(kx) + b_k k \cos(kx) \right] = 0
\end{equation}
within the interval $[0, 2\pi]$. These roots correspond to the stationary points of $f(x)$. By evaluating $f(x)$ at these points, we can determine the global minimizer. Fortunately, \cite[Theorem 2]{boyd2006computing} provides an exact method for transforming the problem of finding the roots of \cref{eq-951} into an eigenvalue problem. We apply this method as follows.

\begin{enumerate}[label=Step \arabic*:, left=0em]  % Customizing the label
    \item Note that $f'(x)$ has the same structure as $f(x)$, as it can be written as
\begin{equation}
    f'(x) = \sum_{k=0}^r \tilde{a}_k \cos(kx) + \sum_{k=1}^r \tilde{b}_k \sin(kx)
\end{equation}
where we define
\begin{equation}
    \tilde{a}_k = \begin{cases}
     0, & \text{for } k = 0\\
    b_k k, & \text{for } k = 1, 2, \ldots, r
    \end{cases}
    \quad \text{and} \quad
    \tilde{b}_k = -a_k k, \text{for } k = 1, 2, \ldots, r.
\end{equation}
It is assumed that both $a_r$ and $b_r$ are nonzero; otherwise, they should be removed, and $r$ should be reduced.
\item Define the coefficients $h_j$ as
\begin{equation}
    h_j = \begin{cases}
    \tilde{a}_{r-j} + i \tilde{b}_{r-j}, & j = 0, 1, \ldots, r-1 ,\\
    2 \tilde{a}_0, & j = r, \\
    \tilde{a}_{j-r} - i \tilde{b}_{j-r}, & j = r+1, r+2, \ldots, 2r.
    \end{cases}
\end{equation}
Since $\tilde{a}_0 = 0$, it follows that $h_r = 0$.
\item Next, define a $2r \times 2r$ matrix $\mathbf{B}$ with entries $B_{kj}$ as,
\begin{equation}
    B_{kj} = \begin{cases}
    \delta_{k, j-1}, & \text{for } k = 1, 2, \ldots, 2r-1 ,\\
    -\dfrac{h_{j-1}}{\tilde{a}_r - i \tilde{b}_r}, & \text{for } k = 2r,
    \end{cases}
\end{equation}
where $\delta_{k, j-1}$ is the Kronecker delta function. For example, when $r = 2$, the matrix $\mathbf{B}$ is explicitly
\begin{equation}
    \mathbf{B} = \begin{bmatrix}
    0 & 1 & 0 & 0 \\
    0 & 0 & 1 & 0 \\
    0 & 0 & 0 & 1 \\
    -\frac{\tilde{a}_2 + i \tilde{b}_2}{\tilde{a}_2 - i \tilde{b}_2} & -\frac{\tilde{a}_1 + i \tilde{b}_1}{\tilde{a}_2 - i \tilde{b}_2} & 0 & - \frac{\tilde{a}_1 - i \tilde{b}_1}{\tilde{a}_2 - i \tilde{b}_2}
    \end{bmatrix}.
\end{equation}
Note that $\mathbf{B}$ has a significant sparse structure, with at most $4r - 2$ non-zero elements.
\item Let the eigenvalues of $\mathbf{B}$ be denoted by $z_t \in \mathbb{C}$. \cite[Theorem 2]{boyd2006computing} shows that the roots (which may be complex) of $f'(x) = 0$ are given by $ x_t = -i \log(z_t)$
where the complex logarithm is defined as $\log(z) = \log|z| + i(\arg(z) + 2\pi m), \forall m \in \mathbb{Z}$. Therefore, the final roots are
\begin{equation}
    x_t = (\arg(z_t) + 2\pi m) - i \log|z_t|, \quad t = 1, 2, \ldots, 2r, \quad \forall m \in \mathbb{Z}.
\end{equation}
Since we are only interested in the real roots of $f'(x)$, these real roots correspond to the eigenvalues $z_t$ lying on the unit circle. This simplifies to
\begin{equation}
    x_t = \arg(z_t) + 2\pi m, \quad \text{when} \quad |z_t| = 1.
\end{equation}
By taking $x_k$ modulo $2\pi$, the final real roots can be obtained. 

\item The global minimizer is the value of $x_t$ that yields the smallest $f(x)$ among these points.
\end{enumerate}

This method uses the inherent properties of trigonometric polynomials to transform the problem of finding the global minimizer of $f(x)$ into an equivalent problem of determining all eigenvalues with modulus equal to 1 of a sparse non-Hermitian matrix $\mathbf{B}$.
Compared to directly using global optimization solvers (e.g., differential evolution), which are typically heuristic algorithms, this eigenvalue approach guarantees the identification of the global minimum, thereby avoiding the risk of getting trapped in local minima. Although eigenvalue problems may appear complex, in practical applications, the integer $r$ is usually small, making it feasible to solve the eigenvalues of small matrices both efficiently and accurately.

\section{A concrete example of sparse frequency in TFIM model with HVA circuit}\label{app-sparse-frequency}

We consider a special case of the TFIM model with 3-qubit and a 2-layer HVA circuit (the similar circuit diagram is shown in \cref{fig:qc_tifm_HVA}). Specifically, we aim to minimize the cost function
\begin{equation}
    f (\boldsymbol{\theta})=\langle\psi_0| U (\boldsymbol{\theta})^{\dagger} H_{\mathrm{TFIM}} U (\boldsymbol{\theta})|\psi_0\rangle,
\end{equation}
where $\boldsymbol{\theta} \in \mathbb{R}^4$ and $\left|\psi_0\right\rangle=|+\rangle^{\otimes 3}$ is the uniform superposition state. The Hamiltonian is given by
\begin{equation}
    H_{\mathrm{TFIM}}=H_{z z}+\Delta H_x, \quad H_{z z}=\sum_{i=1}^3 Z_i Z_{i+1}, \quad H_x=\sum_{i=1}^3 X_i, \quad \Delta>0,
\end{equation}
and PQC is
\begin{equation}\label{eq-1009}
    U (\boldsymbol{\theta})=\exp \left (-i \frac{\theta_4}{2} H_{x}\right)\exp \left (-i \frac{\theta_3}{2} H_{z z}\right)\exp \left (-i \frac{\theta_2}{2} H_{x}\right)\exp \left (-i \frac{\theta_1}{2} H_{zz}\right).
\end{equation}
We will show that the function $f (\boldsymbol{\theta})$, with respect to each parameter $\theta_j$, has a singleton frequency $\Omega=\{2\}$. In fact, this property holds for the TFIM model with an arbitrary number of qubits and any number of HVA layers, exhibiting an $\mathcal{O} (1)$ frequency. While our analysis is limited to the 3-qubit, 2-layer setting, the same reasoning extends to general cases. The analysis is based on \cref{app-trig}. Please refer to it and the related discussion in \cref{re-sparse-freq}.

\subsection{Preliminary}

To prepare for the forthcoming proofs, we begin with a few foundational observations. Recall that the Pauli $X$ operator satisfies $X\lvert s\rangle = s \lvert s\rangle$, where $s = \pm1$, and $\lvert+\rangle = \frac{1}{\sqrt2} (\lvert0\rangle + \lvert1\rangle)$, $\lvert-\rangle = \frac{1}{\sqrt2} (\lvert0\rangle - \lvert1\rangle)$. Therefore, on the product state $\lvert s_1 s_2 s_3\rangle = \lvert s_1\rangle \otimes \lvert s_2\rangle \otimes \lvert s_3\rangle$, the operator $H_x = \sum_{i=1}^3 X_i$ acts as
\begin{equation}
    H_x \lvert s_1 s_2 s_3\rangle = (s_1 + s_2 + s_3) \lvert s_1 s_2 s_3\rangle,
\end{equation}
for all $s_i = \pm1$. That is, the eigenvalues and corresponding eigenstates of $H_x$ are
\begin{equation}
    \begin{aligned}
    &\lambda = +3: \quad \lvert+++\rangle, \\
    &\lambda = +1: \quad \lvert-++\rangle, \lvert+-+\rangle, \lvert++-\rangle, \\
    &\lambda = -1: \quad \lvert+--\rangle, \lvert-+-\rangle, \lvert--+\rangle, \\
    &\lambda = -3: \quad \lvert---\rangle.
    \end{aligned}
\end{equation}
Here,
\begin{equation}\label{eq-S3}
    \mathcal{S}_3: =\left\{\left|s_1 s_2 s_3\right\rangle: s_i= \pm 1\right\}
\end{equation}
is an orthonormal basis of $\left (\mathbb{C}^2\right)^{\otimes 3}$, referred to as the $\pm$ eigenbasis. For convenience, we use bold symbol $\boldsymbol{s}=$ $s_1 s_2 s_3$ as an index to label the elements of this basis. This is analogous to a binary representation but using $\pm$ symbols. Here, $\boldsymbol{s}$ also refer to the basis state $|\boldsymbol{s}\rangle=\left|s_1 s_2 s_3\right\rangle$. 
On the other hand, for the operator $\exp\left (-i \tfrac{\theta}{2} H_x \right)$, we have
\begin{equation}
    \lvert s_1 s_2 s_3 \rangle \xrightarrow{\exp (-i \tfrac{\theta}{2} H_x)}
    e^{-i \tfrac{\theta}{2} (s_1 + s_2 + s_3)} \lvert s_1 s_2 s_3 \rangle, \quad \forall \boldsymbol{s} \in \mathcal{S}_3.
\end{equation}
In this case, each basis state picks up only a phase.

Recall also that the Pauli $Z$ operator swaps $\lvert\pm\rangle$ via $Z\lvert\pm\rangle = \lvert\mp\rangle$. Therefore, for example,
\begin{equation}
    \lvert s_1 s_2 s_3\rangle \xrightarrow{Z_1 Z_2} \lvert\, \overline{s_1} \, \overline{s_2} \, s_3\rangle, \quad \forall \boldsymbol{s} \in \mathcal{S}_3,
\end{equation}
where $\overline{s_i} = - s_i$. The same applies to $Z_2 Z_3$ and $Z_3 Z_1$. It is known that if $A$ is a matrix such that $A^2=I$, then $\exp (i A x)=\cos (x) I+i \sin (x) A, $ as shown in \cite[Exercise 4.2]{nielsen2010quantum}. Note that $ (Z_1 Z_2)^2 = I$, so
\begin{equation}
    \exp\left (-i \tfrac{\theta}{2} Z_1 Z_2\right) = \cos\left (\tfrac{\theta}{2}\right) I - i \sin\left (\tfrac{\theta}{2}\right) Z_1 Z_2.
\end{equation}
Similarly, $\exp\left (-i \tfrac{\theta}{2} Z_2 Z_3\right) = \cos\left (\tfrac{\theta}{2}\right) I - i \sin\left (\tfrac{\theta}{2}\right) Z_2 Z_3, $ and $\exp\left (-i \tfrac{\theta}{2} Z_3 Z_1\right) = \cos\left (\tfrac{\theta}{2}\right) I - i \sin\left (\tfrac{\theta}{2}\right) Z_3 Z_1$. Therefore, any state $\boldsymbol{s} \in \mathcal{S}_3$ can evolve as follows:
\begin{equation}\label{eq-1234}
    \lvert s_1 s_2 s_3 \rangle
    \xrightarrow{\exp (-i \frac{\theta}{2} Z_1 Z_2)}
(*) \lvert s_1 s_2 s_3 \rangle + (*) \lvert \overline{s_1}\, \overline{s_2}\, s_3 \rangle,
\end{equation}
where $ (*)$ denotes coefficients that depend on $\theta$ but are not important for the structural argument. Applying the next $\exp (-i \frac{\theta}{2} Z_2 Z_3)$ to the right hand side of \cref{eq-1234}:
\begin{equation}\label{eq-12551}
    \xrightarrow{\exp (-i \frac{\theta}{2} Z_2 Z_3)}
(*) \lvert s_1 s_2 s_3 \rangle + (*) \lvert s_1\, \overline{s_2}\, \overline{s_3} \rangle
    + (*) \lvert \overline{s_1}\, \overline{s_2}\, s_3 \rangle + (*) \lvert \overline{s_1}\, s_2\, \overline{s_3} \rangle,
\end{equation}
and moreover, applying $\exp (-i\frac{\theta}{2}Z_3Z_1)$ to the right hand of above equation:
\begin{align}
& \xrightarrow{\exp (-i\frac{\theta}{2}Z_3Z_1)} 
[ (*)\lvert s_1s_2s_3\rangle + (*) \lvert \overline{s_1}\, s_2\, \overline{s_3}\rangle]
+[ (*) \lvert s_1\, \overline{s_2}\, \overline{s_3}\rangle
+ (*) \lvert \overline{s_1}\, \overline{s_2}\, s_3\rangle]
\notag \\
& \qquad\qquad\qquad 
\quad +[ (*) \lvert \overline{s_1}\, \overline{s_2}\, s_3\rangle
+ (*) \lvert s_1\, \overline{s_2}\, \overline{s_3}\rangle]
+[ (*) \lvert \overline{s_1}\, s_2\, \overline{s_3}\rangle
+ (*) \lvert s_1\, s_2\, s_3\rangle]
\notag \\
&\qquad\qquad\qquad = (*)\lvert s_1s_2s_3\rangle + (*) \lvert s_1\, \overline{s_2}\, \overline{s_3}\rangle
+
(*) \lvert \overline{s_1}\, \overline{s_2}\, s_3\rangle+ (*) \lvert \overline{s_1}\, s_2\, \overline{s_3}\rangle . \label{eq-12552}
\end{align}
Note that the right hand sides of \cref{eq-12551,eq-12552} are both combinations of the same four basis state.
Thus, the operator
\begin{equation}
    \exp\left (-i \tfrac{\theta}{2} H_{zz} \right)=\exp \left (-i \frac{\theta}{2} Z_3 Z_1\right)\exp \left (-i \frac{\theta}{2} Z_2 Z_3\right)\exp \left (-i \frac{\theta}{2} Z_1 Z_2\right)
\end{equation}
acts on the $\lvert s_1 s_2 s_3 \rangle$ by mapping it into a superposition of four basis states. Indeed, the full set $\mathcal{S}_3$ is partitioned into two such closed subsets:
\begin{align}
    \mathcal{S}_3^{ (+)} &=\left\{\lvert+++\rangle, \lvert--+\rangle, \lvert+--\rangle, \lvert-+-\rangle\right\}, \\
 \mathcal{S}_3^{ (-)} &=\left\{\lvert---\rangle, \lvert++-\rangle, \lvert-++\rangle, \lvert+-+\rangle\right\} .
\end{align}
For any fixed $|\boldsymbol{s}\rangle$, applying any $Z_i Z_{i+1}$ operators maps $\boldsymbol{s}$ to another one within the same subset. Similarly, applying any $\exp \left (-i \frac{\theta}{2} H_{z z}\right)$ operators to $\boldsymbol{s}$ results in a linear combination of elements from the corresponding subset. $\mathcal{S}_3^{ (\pm)}$ form the closed subsets under such actions.

\subsection{Frequency for $\theta_4$}

For brevity, we detail only the analysis for $\theta_2$ and $\theta_4$, as the proofs for $\theta_1$ and $\theta_3$ proceed in a similar manner.
We first consider the univariate function $f (\theta_4)$ defined by
\begin{equation}
    \theta_4 \mapsto \langle\psi| \exp \left (-i \frac{\theta_4}{2} H_x\right)^{\dagger} O \exp \left (-i \frac{\theta_4}{2} H_x\right)|\psi\rangle,
\end{equation}
where
\begin{align}
    |\psi\rangle & = \exp \left (-i \frac{\theta_3}{2} H_{z z}\right) \exp \left (-i \frac{\theta_2}{2} H_x\right) \exp \left (-i \frac{\theta_1}{2} H_{z z}\right) \left|\psi_0\right\rangle, \\
    O &= H_{zz} + \Delta H_x.
\end{align}
Next, we determine the matrix representation $[O]$ of the observable $O$ in the eigenbasis $\mathcal{S}_3$ of $H_x$, i.e.,
\begin{equation}
[O]_{\boldsymbol{s^{\prime}}, \boldsymbol{s}}: =\left\langle \boldsymbol{s^{\prime}}\right| O\left|\boldsymbol{s}\right\rangle, \quad \forall \boldsymbol{s^{\prime}}, \boldsymbol{s} \in \mathcal{S}_3.
\end{equation}
Here, we directly use $\boldsymbol{s}$ to label the matrix indices; see \cref{eq-S3}. Specifically, we focus on the upper triangular part of $[O]$ (excluding the diagonal) and aim to identify the nonzero entries (see \cref{app-trig}). From the positions of these nonzero off-diagonal elements, we can extract the corresponding differences in the eigenvalues of $H_x$, which reveal the actual frequency components.

Note that in the eigenbasis $\mathcal{S}_3$, the matrix $\left[\Delta H_x\right]$ becomes diagonal and can thus be ignored for our purposes.
We begin by examining the term $Z_1 Z_2$ in the Hamiltonian $H_{z z}=\sum_{i=1}^3 Z_i Z_{i+1}$. For any $\boldsymbol{s}^{\prime}, \boldsymbol{s} \in \mathcal{S}_3$, we have
\begin{equation}
[Z_1 Z_2]_{\boldsymbol{s'}, \boldsymbol{s}}
    = \langle s'_1 s'_2 s'_3 \lvert Z_1 Z_2 \lvert s_1 s_2 s_3\rangle
    = \langle s'_1 s'_2 s'_3 \lvert\, \overline{s_1}\, \overline{s_2}\, s_3\rangle,
\end{equation}
which is nonzero (equal to 1) only if $s_1^{\prime}=\overline{s_1}, s_2^{\prime}=\overline{s_2}$, and $s_3^{\prime}=s_3$; that is, $s^{\prime}$ and $s$ differ only at positions 1 and 2. In this case, the difference between the corresponding eigenvalues of $H_x$ is
\begin{equation}\label{eq-1005}
    | (s_1 + s_2 + s_3) - (s'_1 + s'_2 + s'_3)| = |2 (s_1 + s_2)| \in \{0, 4\}.
\end{equation}
Thus, an element of $\left[Z_1 Z_2\right]$ is 1 if and only if the indices $\boldsymbol{s}$ and $\boldsymbol{s}^{\prime}$ differ exactly at the positions specified by $Z_1 Z_2$; otherwise, the element is zero. The same pattern applies to $\left[Z_2 Z_3\right]$ and $\left[Z_3 Z_1\right]$ as well. We summarize the results below, with different colors indicating different $Z Z$ terms. Entries below the diagonal are omitted.
\begin{align}
&[O]=\textcolor{red}{[Z_1 Z_2]} +\textcolor{mydarkgreen}{[Z_2 Z_3]} +\textcolor{blue}{[Z_3 Z_1]}+[\Delta H_x] \\
&=     
\bordermatrix{
        & +++ & ++- & +-+ & +-- & -++ & -+- & --+ & --- \cr
+++     &  *  &  0  &  0  &  \ZBC  &  0  &  \ZCA  &  \ZAB  &  0  \cr
++-     &     &  *  &  \ZBC &  0  &  \ZCA  &  0  &  0  &  \ZAB  \cr
+-+     &     &     &  *  &  0  &  \ZAB  &  0  &  0  &  \ZCA  \cr
+--     &     &     &     &  *  &  0  &  \ZAB  &  \ZCA  &  0  \cr
-++     &     &     &     &     &  *  &  0  &  0  &  \ZBC  \cr
-+-     &     &\scalebox{3}{$*$}&     &     &     &  *  &  \ZBC  &  0  \cr
--+     &     &     &     &     &     &     &  *  &  0  \cr
---     &     &     &     &     &     &     &     &  * 
}.
\end{align}

We observe that the three matrices $[Z_1 Z_2]$, $[Z_2 Z_3]$, and $[Z_3 Z_1]$ have no overlapping nonzero entries. According to \cref{eq-1005}, this implies that the set of \textit{actual} frequencies contains only the value 4. Taking into account the scaling factor $1/2$ in \cref{eq-1009}, the resulting frequency is 2. This can also be verified directly from the matrices above. For example, the $ (1, 4)$-th entry is 1, corresponding to two eigenvalues, 3 and $-1$, whose absolute difference is 4.

On the other hand, if we consider only the eigenvalues of $H_x$, which are $\{-3, -1, 1, 3\}$, the possible frequency differences are $\{2, 4, 6\}$. However, the coefficients corresponding to frequencies 2 and 6 vanish, so they do not contribute as effective frequencies. For instance, the $ (1, 2)$-th entry corresponds to an eigenvalue difference of 2, and the $ (1, 8)$-th entry corresponds to 6, but in both cases the matrix elements are zero.

\subsection{Frequency for $\theta_2$}

We next consider the univariate function $f (\theta_2)$ as
\begin{equation}
    \theta_2 \mapsto \langle\psi| \exp \left (-i \frac{\theta_2}{2} H_x\right)^{\dagger} O\exp \left (-i \frac{\theta_2}{2} H_x\right)|\psi\rangle,
\end{equation}
where
\begin{align}
    |\psi\rangle &=  \exp \left (-i \frac{\theta_1}{2} H_{z z}\right) \left|\psi_0\right\rangle, \\
    O &= \left (\exp \left (-i \frac{\theta_4}{2} H_x\right) \exp \left (-i \frac{\theta_3}{2} H_{z z}\right)\right)^\dagger (H_{zz} + \Delta H_x)\left (\exp \left (-i \frac{\theta_4}{2} H_x\right) \exp \left (-i \frac{\theta_3}{2} H_{z z}\right)\right).
\end{align}
Next, we determine the matrix representation $[O]$ in the eigenbasis $\mathcal{S}_3$: $\forall \boldsymbol{s^{\prime}}, \boldsymbol{s} \in \mathcal{S}_3, $
\begin{align}
[O]_{\boldsymbol{s^{\prime}}, \boldsymbol{s}}
& =\left\langle \boldsymbol{s^{\prime}}\right| O\left|\boldsymbol{s}\right\rangle, \\
& =\left (\exp \left (-i \frac{\theta_4}{2} H_x\right) \exp \left (-i \frac{\theta_3}{2} H_{z z}\right)\left|\boldsymbol{s^{\prime}}\right\rangle\right)^\dagger (H_{zz} + \Delta H_x)\left (\exp \left (-i \frac{\theta_4}{2} H_x\right) \exp \left (-i \frac{\theta_3}{2} H_{z z}\right)\right)\left|\boldsymbol{s}\right\rangle
\end{align}
The operator $O$ is more involved in this setting and calls for a more elaborate analysis. If $\boldsymbol{s} \in \mathcal{S}_3^{ (+)}$, then
\begin{equation}
    |\boldsymbol{s} \rangle
    \xrightarrow{\exp \left (-i \frac{\theta_3}{2} H_{z z}\right)} \sum _{\boldsymbol{k} \in \mathcal{S}_3^{ (+)}} (*)| \boldsymbol{k} \rangle
    \xrightarrow{\exp \left (-i \frac{\theta_4}{2} H_x\right)} \sum _{\boldsymbol{k} \in \mathcal{S}_3^{ (+)}} (*)| \boldsymbol{k} \rangle
    \xrightarrow{\Delta H_x} \sum _{\boldsymbol{k} \in \mathcal{S}_3^{ (+)}} (*)| \boldsymbol{k} \rangle
    \xrightarrow{H_{z z}} \sum _{\boldsymbol{k} \in \mathcal{S}_3^{ (+)}} (*)| \boldsymbol{k} \rangle.
\end{equation}
The above conclusions remain valid if $\mathcal{S}_3^{ (+)}$ is replaced by $\mathcal{S}_3^{ (-)}$throughout. Hence, if $\boldsymbol{s}$ and $\boldsymbol{s}^{\prime}$ do not lie in the same subset, the corresponding matrix element vanishes due to orthogonality. This is illustrated in the matrix below.
\begin{align*}
[O]
=
\bordermatrix{
        & +++ & ++- & +-+ & +-- & -++ & -+- & --+ & --- \cr
+++     &  *  &  0  &  0  &  *  &  0  &  *  &  *  &  0  \cr
++-     &  0  &  *  &  *  &  0  &  *  &  0  &  0  &  *  \cr
+-+     &  0  &  *  &  *  &  0  &  *  &  0  &  0  &  *  \cr
+--     &  *  &  0  &  0  &  *  &  0  &  *  &  *  &  0  \cr
-++     &  0  &  *  &  *  &  0  &  *  &  0  &  0  &  *  \cr
-+-     &  *  &  0  &  0  &  *  &  0  &  *  &  *  &  0  \cr
--+     &  *  &  0  &  0  &  *  &  0  &  *  &  *  &  0  \cr
---     &  0  &  *  &  *  &  0  &  *  &  0  &  0  &  *  \cr
}.
\end{align*}
When $\boldsymbol{s}$ and $\boldsymbol{s}^{\prime}$ lie in the same subset, it can be readily verified that the difference of their associated eigenvalues belongs to $\{0, 4\}$. The remaining arguments are the same as previous subsection, and this completes the proof.

\section{Analysis of mean squared error without constant variance assumption}\label{app-mse-nonconst}

In \cref{sec:noise}, we established a mean squared error (MSE) analysis for the estimation of Fourier coefficients and introduced \cref{pro-min-mse} as a criterion for the optimality of interpolation nodes. 
Note that this analysis is predicated on the constant variance \cref{assp-var}. In reality, \cref{assp-var} does not hold, even though it is commonly adopted in the literature \cite{mari2021estimating,wierichs2022general,markovich2024parameter}. On the other hand, optimal interpolation nodes derived under \cref{assp-var} demonstrate good empirical performance in numerical simulations of \cref{sec-experiments}, which seems to be a contradictory phenomenon.
In this appendix, we remove the constant variance \cref{assp-var} and directly analyze the true MSE. We then explore its connection to \cref{pro-min-mse}, providing justification for why the constant variance condition can still be a reasonable approximation in practice.

We adopt the notation in \cref{sec:noise}. Without \cref{assp-var}, each evaluation of the cost function returns a noisy sample represented by the random variable
\begin{equation}
    \tilde{f}(x) = f(x) + \epsilon_x, \quad \epsilon_x \sim \mathcal{N}\left(0, \frac{\sigma^2(x)}{\mathfrak{n}}\right),
\end{equation}
where number of shots $\mathfrak{n}$ is sufficiently large. In what follows, let $\sigma^2(x)$ absorb the constant factor $\frac{1}{\mathfrak{n}}$. Given the noisy data $\{ (x_i, \tilde{f}(x_i)) \}_{i=0}^{2r}$, we solve the perturbed system $A_{\mathbf{x}} \hat{\mathbf{z}}_{\mathbf{x}} = \mathbf{y}_{\mathbf{x}} + \mathbf{e},$ where $\mathbf{e} \sim \mathcal{N}\left(0, \Sigma_{\mathbf{x}} \right)$, with
\begin{equation}
    \Sigma_{\mathbf{x}} = \operatorname{diag}(\sigma^2(x_0), \dots, \sigma^2(x_{2r})).
\end{equation}
The estimator $\hat{\mathbf{z}}_{\mathbf{x}}=\mathbf{z}+A_{\mathbf{x}}^{-1} \mathbf{e}$ is still unbiased, i.e., $\operatorname{E}[\hat{\mathbf{z}}_{\mathbf{x}}] = \mathbf{z},$ and its covariance matrix becomes
\begin{align}
\operatorname{VAR}[\hat{\mathbf{z}}_{\mathbf{x}}]
= A{\mathbf{x}}^{-1} \operatorname{E}[\mathbf{e} \mathbf{e}^{\dagger}] (A_{\mathbf{x}}^{-1})^{\dagger} \
= A_{\mathbf{x}}^{-1} \Sigma_{\mathbf{x}} (A_{\mathbf{x}}^{-1})^{\dagger}.
\end{align}
Accordingly, the true mean squared error (MSE) is
\begin{equation}\label{eq-true-mse}
F(\mathbf{x}) := \operatorname{tr} (\operatorname{VAR}[\hat{\mathbf{z}}_{\mathbf{x}}]) = \sum_{i=0}^{2r} \sigma^2(x_i) \left\| [A_{\mathbf{x}}^{-1}]_{i,:} \right\|^2,
\end{equation}
where $\left[A_{\mathbf{x}}^{-1}\right]_{i,:}$ denotes the $i$-th row of matrix $A_{\mathbf{x}}^{-1}.$ We know that the expression of $\sigma^2(\cdot)$ is given in \cref{eq-4280}, and in principle it can be computed. However, the high computational cost makes minimizing $F(\mathbf{x})$ impractical in practice. For this reason, we aim to ignore the $\sigma^2$-dependent terms. Now define
\begin{equation}
    F_{\text{const}}(\mathbf{x}) =  \operatorname{tr}( A_{\mathbf{x}}^{-1} (A_{\mathbf{x}}^{-1})^{\dagger}) = \sum_{i=0}^{2r} \left\| [A_{\mathbf{x}}^{-1}]_{i,:} \right\|^2,
\end{equation}
which is exactly the function obtained in \cref{pro-min-mse} by factoring out the constant $\sigma^2$. At this point, minimizing $F_{\text {const }}(\mathbf{x})$ becomes straightforward.
For any given $\mathbf{x} \in \mathbb{R}^n$, define
\begin{equation}
    \sigma_{\min}^2(\mathbf{x}) = \min_{i=0,\ldots,2r} \sigma^2(x_i), \quad \sigma_{\max}^2(\mathbf{x}) = \max_{i=0,\ldots,2r} \sigma^2(x_i).
\end{equation}
Then, the true MSE satisfies the bounds
\begin{equation}
    \sigma_{\min}^2(\mathbf{x}) \; F_{\text{const}}(\mathbf{x}) \leq F(\mathbf{x}) \leq \sigma_{\max}^2(\mathbf{x}) \;F_{\text{const}}(\mathbf{x}).
\end{equation}
In other words, for any nodes $\mathbf{x}$, we have the ratio
\begin{equation}
    R(\mathbf{x}) := \frac{F(\mathbf{x})}{F_{\text{const}}(\mathbf{x})} \in \left[ \sigma_{\min}^2(\mathbf{x}), \sigma_{\max}^2(\mathbf{x}) \right].
\end{equation}
Let $\mathbf{x}^* = \arg \min_{\mathbf{x}} F(\mathbf{x}),$ and $\mathbf{x}_{\mathrm{const}}^* = \arg \min_{\mathbf{x}} F_{\text{const}}(\mathbf{x}),$ then we have
\begin{equation}
    \begin{aligned}
    F(\mathbf{x}_{\mathrm{const}}^*) 
    = R(\mathbf{x}_{\mathrm{const}}^*)  F_{\text{const}}(\mathbf{x}_{\mathrm{const}}^*) 
    &\leq R(\mathbf{x}_{\mathrm{const}}^*) F_{\text{const}}(\mathbf{x}^*) \\
    &= \frac{R(\mathbf{x}_{\mathrm{const}}^*)} {R(\mathbf{x}^*)}R(\mathbf{x}^*) F_{\text{const}}(\mathbf{x}^*)\\
    &= \frac{R(\mathbf{x}_{\mathrm{const}}^*)}{R(\mathbf{x}^*)}  F(\mathbf{x}^*).
    \end{aligned}
\end{equation}
Therefore, the worst-case approximation factor is bounded by
\begin{equation}
    \frac{F(\mathbf{x}_{\mathrm{const}}^*)}{F(\mathbf{x}^*)} \leq \frac{R(\mathbf{x}_{\mathrm{const}}^*)}{R(\mathbf{x}^*)} \leq \frac{\sigma_{\max}^2(\mathbf{x}_{\mathrm{const}}^*)}{\sigma_{\min}^2(\mathbf{x}^*)}.
\end{equation}
If we further assume that the variances satisfy $ \sigma^2(x) \in [\,\underline{\sigma}^2, \bar{\sigma}^2] $ for any $x$, which is true for equidistant frequency case, then the bound simplifies to
\begin{equation}
    F(\mathbf{x}_{\mathrm{const}}^*) \leq \frac{\bar{\sigma}^2}{\underline{\sigma}^2} F(\mathbf{x}^*),
\end{equation}
In other words, minimizing $F_{\mathrm{const}}$ yields a constant-factor approximation to the true minimization of $F$: the solution $\mathbf{x}_{\mathrm{const}}^*$ can only be worse than the true minimizer $\mathbf{x}^*$ by at most a factor of $\bar{\sigma}^2 / \underline{\sigma}^2$. The value $\bar{\sigma}^2 / \underline{\sigma}^2$ should be determined based on the specifics of the problem at hand.

Finally, we compare the performance of the ICD algorithm with and without \cref{assp-var} using the MaxCut problem with HEA from \cref{subsec-num-problems}. The results are shown in \cref{fig:mse_no_const_var_2}. We set the number of shots to 1024 and use standard ICD with sequentially update. The ``Actual MSE'' represents the true optimal interpolation node obtained at each iteration by minimizing $F(\mathbf{x})$ in \cref{eq-true-mse}. Since $F(\mathbf{x})$ involves a complicated variance expression $\sigma^2(x)$, an analytical solution is hard to obtain, so we use a differential evolution solver to numerically compute its minimum $F(\mathbf{x}^*)$. The ``Constant MSE'' corresponds to standard ICD using $2\pi/3$ equally spaced interpolation nodes starting from zero (i.e., $\mathbf{x}_{\mathrm{const}}^*$). \cref{fig:mse_no_const_var_2} shows that the performance of the two methods is nearly indistinguishable. This observation is consistent with Result I in \cref{subsec-num-re-1}: when the number of shots is sufficiently large, even suboptimal interpolation nodes can ensure algorithm convergence. 

In \cref{fig:mse_no_const_var_1}, we show the relative MSE error at each iteration, given by
\begin{equation}
    \frac{F\left(\mathbf{x}_{\text{const}}^*\right) - F\left(\mathbf{x}^*\right)}{F\left(\mathbf{x}^*\right)} \leq \frac{\bar{\sigma}^2}{\underline{\sigma}^2} - 1.
\end{equation}
Our numerical results show that $\frac{\bar{\sigma}^2}{\underline{\sigma}^2} - 1$ is always less than or equal to 1. This implies that the true MSE corresponding to the $2\pi/3$ equally spaced nodes under the constant variance assumption is at most twice the true minimum MSE. Therefore, the $2\pi/3$ equally spaced point $\mathbf{x}_{\text{const}}^*$ can be considered near-optimal.

% In the limit of infinite shots, all interpolation nodes become equivalent.

\begin{figure}[htbp]
  \begin{subfigure}{0.45\textwidth}
    \includegraphics[width=0.91\linewidth]{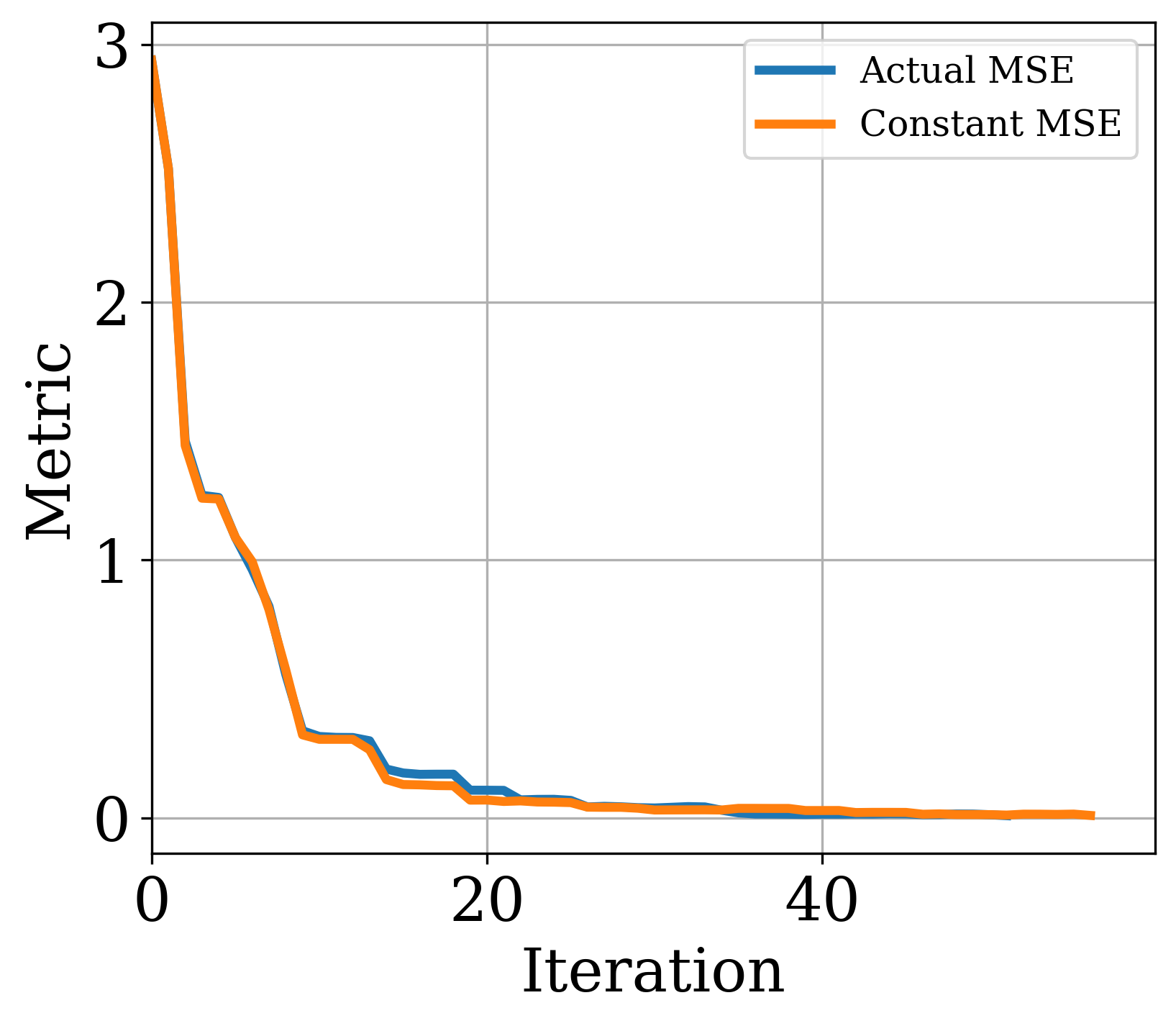}
    \caption{}
    \label{fig:mse_no_const_var_1}
  \end{subfigure}
  \begin{subfigure}{0.45\textwidth}
    \centering
    \includegraphics[width=1\linewidth]{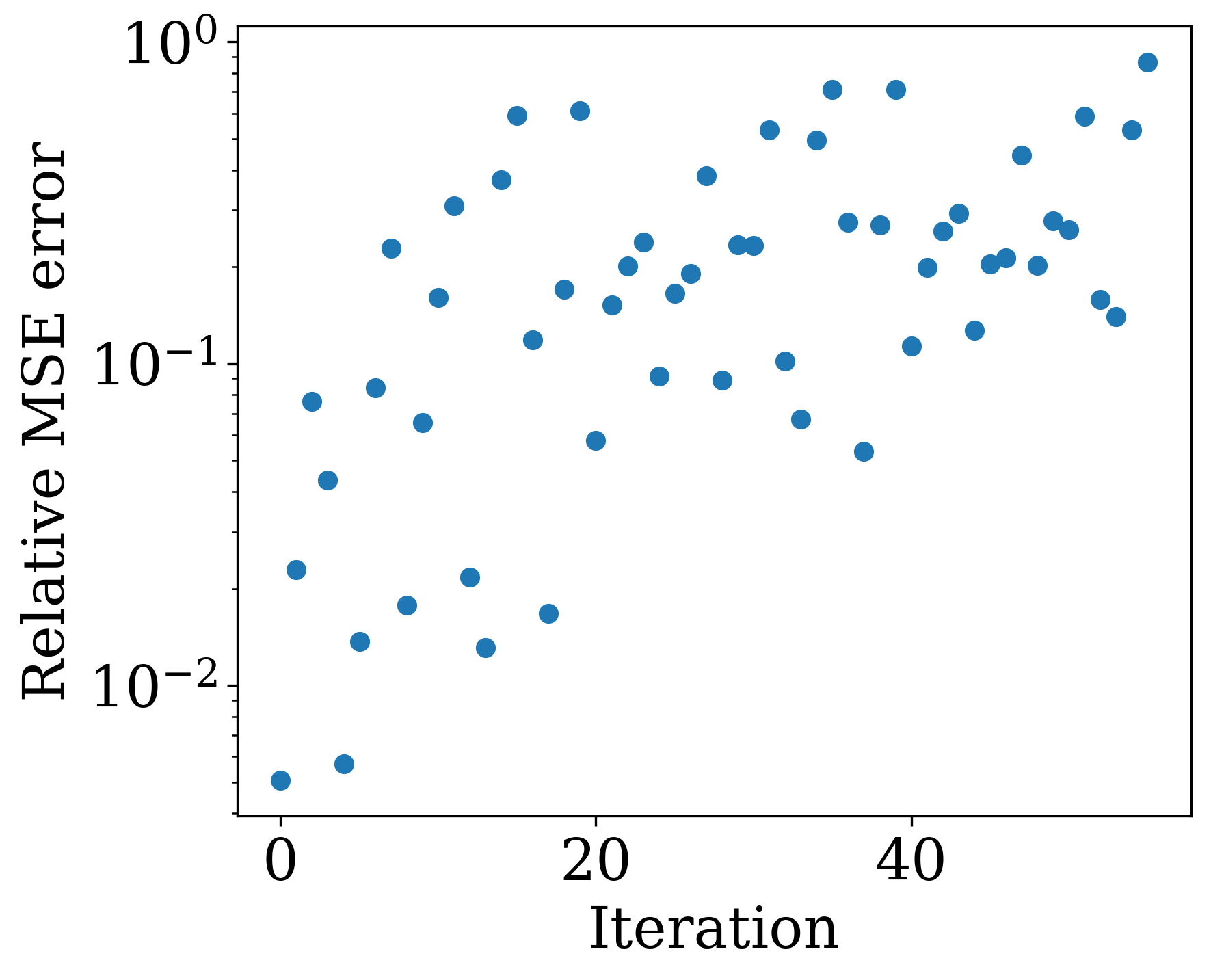} 
    \caption{}
    \label{fig:mse_no_const_var_2}
  \end{subfigure}
  \caption{
  Comparison between standard ICD with and without constant variance \cref{assp-var} on the MaxCut problem. (a) Performance for the true optimal interpolation node (Actual MSE) and the equally spaced point under constant variance assumption (Constant MSE). (b) Relative MSE error $\frac{F\left(\mathbf{x}_{\text{const}}^*\right) - F\left(\mathbf{x}^*\right)}{F\left(\mathbf{x}^*\right)}$ at each iteration, which is theoretically bounded by $\frac{\bar{\sigma}^2}{\underline{\sigma}^2} - 1$.
}
% impact_of_sss.ipynb
\end{figure}

\section{Quantum circuits in numerical simulation}\label{app_qc}

\begin{figure}[htbp]
    \centering
    \scalebox{0.8}{
    \Qcircuit @C=1em @R=0.8em @!R { \\
    \nghost{{q}_{1} :  } & \lstick{{q}_{1}} &\gate{R_Y(\theta_0)} &\ctrl{1} &\qw &\qw &\gate{R_Y(\theta_4)} &\ctrl{1} &\qw &\qw &\gate{R_Y(\theta_8)} &\ctrl{1} &\qw &\qw &\gate{R_Y(\theta_{12})} &\ctrl{1} &\qw &\qw&\gate{R_Y(\theta_{16})} &\ctrl{1} &\qw &\qw & \qw \\
    \nghost{{q}_{2} :  } & \lstick{{q}_{2}} &\gate{R_Y(\theta_1)} &\control\qw &\ctrl{1} &\qw &\gate{R_Y(\theta_5)} &\control\qw &\ctrl{1} &\qw &\gate{R_Y(\theta_9)} &\control\qw &\ctrl{1} &\qw &\gate{R_Y(\theta_{13})} &\control\qw &\ctrl{1} &\qw &\gate{R_Y(\theta_{17})} &\control\qw &\ctrl{1} &\qw & \qw \\
    \nghost{{q}_{3} :  } & \lstick{{q}_{3}} &\gate{R_Y(\theta_2)} &\qw &\control\qw &\ctrl{1} &\gate{R_Y(\theta_6)} &\qw &\control\qw &\ctrl{1} &\gate{R_Y(\theta_{10})} &\qw &\control\qw &\ctrl{1} &\gate{R_Y(\theta_{14})} &\qw &\control\qw &\ctrl{1}&\gate{R_Y(\theta_{18})} &\qw &\control\qw &\ctrl{1} & \qw \\
    \nghost{{q}_{4} :  } & \lstick{{q}_{4}} &\gate{R_Y(\theta_3)} &\qw &\qw &\control\qw &\gate{R_Y(\theta_7)} \qw &\qw &\qw &\control\qw &\gate{R_Y(\theta_{11})} &\qw &\qw &\control\qw &\gate{R_Y(\theta_{15})} \qw &\qw &\qw &\control\qw&\gate{R_Y(\theta_{19})} \qw &\qw &\qw & \control \qw & \qw \\
    % \nghost{{q}_{4} :  } & & & & & & & & & & & & & & & & & & & & & & & &\\
    \\ }}
    \caption{The HEA quantum circuit for the MaxCut problem with $N = 4$ and $p = 5$.}
    \label{fig:qc_maxcut}
\end{figure}

\begin{figure}[htbp]
    \centering
    \scalebox{0.8}{
    \Qcircuit @C=1em @R=0.8em @!R { \\
    \nghost{{q}_{1} :  } & \lstick{{q}_{1}} &\gate{H} &\qw &\multigate{1}{R_{ZZ}(\beta)}  &\qw  &\qw &\qw &\sgate{R_{ZZ}(\beta)}{3} &\qw &\qw &\gate{R_X(\gamma)} &\qw \\
    \nghost{{q}_{2} :  } & \lstick{{q}_{2}} &\gate{H} &\qw &\ghost{R_{ZZ}(\beta)} &\multigate{1}{R_{ZZ}(\beta)} &\qw &\qw &\qw &\qw &\qw &\gate{R_X(\gamma)} &\qw \\
    \nghost{{q}_{3} :  } & \lstick{{q}_{3}} &\gate{H} &\qw &\qw &\ghost{R_{ZZ}(\beta)} &\multigate{1}{R_{ZZ}(\beta)}&\qw &\qw &\qw &\qw &\gate{R_X(\gamma)} &\qw \\
    \nghost{{q}_{4} :  } & \lstick{{q}_{4}} &\gate{H} &\qw &\qw &\qw &\ghost{R_{ZZ}(\beta)}&\qw &\gate{R_{ZZ}(\beta)}&\qw &\qw &\gate{R_X(\gamma)} &\qw \gategroup{2}{5}{5}{9}{1.2em}{--}\gategroup{2}{12}{5}{12}{1.2em}{--}\\
    \nghost{{q}_{5} :  } & & & & &\mbox{$\quad\quad\quad\quad\quad \beta$} & & & & & &\mbox{$\gamma$} & &\\
    \\ }}
    \caption{The HVA quantum circuit for the TFIM model with $N = 4$ and $p = 1$.}
    \label{fig:qc_tifm_HVA}
\end{figure}

\begin{figure}[htbp]
    \centering
    \scalebox{0.65}{
\Qcircuit @C=1em @R=0.8em @!R { \\
\nghost{{q}_{1} :  } & \lstick{q_{1}} &\gate{X} &\gate{H} &\ctrl{1} &\qw &\qw &\qw &\sgate{R_{ZZ}(\theta)}{5} &\qw &\qw &\qw &\sgate{R_{YY}(\phi)}{5}&\qw &\sgate{R_{XX}(\phi)}{5} &\qw &\multigate{1}{R_{ZZ}(\beta)} &\qw &\multigate{1}{R_{YY}(\gamma)} &\multigate{1}{R_{XX}(\gamma)} &\qw &\qw \\
\nghost{{q}_{2} :  } & \lstick{q_{2}} &\gate{X} &\qw &\targ &\qw &\qw &\multigate{1}{R_{ZZ}(\theta)} &\qw &\qw &\qw &\multigate{1}{R_{YY}(\phi)} &\qw &\multigate{1}{R_{XX}(\phi)} &\qw &\qw &\ghost{R_{ZZ}(\beta)} &\qw &\ghost{R_{YY}(\gamma)} &\ghost{R_{XX}(\gamma)} &\qw &\qw \\
\nghost{{q}_{3} :  } & \lstick{q_{3}} &\gate{X} &\gate{H} &\ctrl{1} &\qw &\qw &\ghost{R_{ZZ}(\theta)}&\qw &\qw &\qw &\ghost{R_{YY}(\phi)} &\qw &\ghost{R_{XX}(\phi)} &\qw &\qw &\multigate{1}{R_{ZZ}(\beta)} &\qw &\multigate{1}{R_{YY}(\gamma)} &\multigate{1}{R_{XX}(\gamma)} &\qw &\qw \\
\nghost{{q}_{4} :  } & \lstick{q_{4}} &\gate{X} &\qw &\targ &\qw &\qw &\multigate{1}{R_{ZZ}(\theta)} &\qw &\qw &\qw &\multigate{1}{R_{YY}(\phi)}&\qw &\multigate{1}{R_{XX}(\phi)} &\qw &\qw &\ghost{R_{ZZ}(\beta)} &\qw &\ghost{R_{YY}(\gamma)} &\ghost{R_{XX}(\gamma)} &\qw &\qw \\
\nghost{{q}_{5} :  } & \lstick{q_{5}} &\gate{X} &\gate{H} &\ctrl{1} &\qw &\qw &\ghost{R_{ZZ}(\theta)}&\qw &\qw &\qw &\ghost{R_{YY}(\phi)} &\qw &\ghost{R_{XX}(\phi)} &\qw &\qw &\multigate{1}{R_{ZZ}(\beta)}&\qw &\multigate{1}{R_{YY}(\gamma)} &\multigate{1}{R_{XX}(\gamma)} &\qw &\qw \\
\nghost{{q}_{6} :  } & \lstick{q_{6}} &\gate{X} &\qw &\targ &\qw &\qw &\qw &\gate{R_{ZZ}(\theta)}&\qw &\qw &\qw &\gate{R_{YY}(\phi)} &\qw &\gate{R_{XX}(\phi)} &\qw &\ghost{R_{ZZ}(\beta)} &\qw &\ghost{R_{YY}(\gamma)} &\ghost{R_{XX}(\gamma)} &\qw &\qw \gategroup{2}{7}{7}{9}{1.2em}{--}\gategroup{2}{11}{7}{15}{1.2em}{--}\gategroup{2}{17}{7}{17}{1.2em}{--}\gategroup{2}{11}{7}{15}{1.2em}{--}\gategroup{2}{19}{7}{20}{1.2em}{--}\\
\nghost{{q}_{7} :  } & & & & & & &\mbox{$\quad\quad\quad\quad\quad \theta$} & & & & &\mbox{$\quad\quad\quad\quad\quad \phi$} & & & &\mbox{$\beta$} & &\mbox{$\quad\quad\quad\quad\quad \gamma$}&  & & &\\
\\ }
}
    \caption{The HVA quantum circuit for the XXZ model with $N = 6$ and $p = 1$.}
    \label{fig:qc_XXZ_HVA}
\end{figure}

\end{document}